\documentclass[a4paper,UKenglish,cleveref, autoref, thm-restate]{lipics-v2019}

\usepackage{xcolor}
\usepackage{tikz}
\usetikzlibrary{calc}
\usepackage{csquotes}
\usepackage{thmtools}

\usepackage{mdframed}
\usepackage{relsize}

\newcommand{\R}{\mathbb{R}}
\newcommand{\Z}{\mathbb{Z}}
\newcommand{\N}{\mathbb{N}}

\newcommand{\I}{\mathcal{I}}

\newif\ifnotes\notestrue

\ifnotes
\usepackage{color}
\newcommand{\notename}[2]{{\footnotesize{\bf (#1:} {#2}{\bf ) }}}

\newcommand{\jnote}[1]{\textcolor{red}{{\notename{Jim}{#1}}}}

\newcommand{\lnote}[1]{{\textcolor{blue}{\notename{Lasse}{#1}}}}

\else

\newcommand{\notename}[2]{{}}

\newcommand{\anote}[1]{}
\newcommand{\sidford}[1]{}
\newcommand{\sidfordLongerTodo}[1]{}
\newcommand{\lnote}[1]{}
\newcommand{\dnote}[1]{}
\newcommand{\jnote}[1]{}

\fi

\crefname{observation}{observation}{observations}
\Crefname{observation}{Observation}{Observations}

\title{Completeness in the Polynomial Hierarchy and PSPACE for many natural problems derived from NP}

\titlerunning{Completeness in PH derived from NP} 

\author{Christoph Grüne}{Department of Computer Science, RWTH Aachen University, Germany}{gruene@algo.rwth-aachen.de}{https://orcid.org/0000-0002-7789-8870}{Funded by the German Research Foundation (DFG) – GRK 2236/2.}
\author{Berit Johannes}{Flix, Munich, Germany}{berit@mit.edu}{}{}
\author{James B. Orlin}{Sloan School of Management, Massachusetts Institute of Technology, Cambridge, MA}{jorlin@mit.edu}{https://orcid.org/0000-0002-7488-094X}{}
\author{Lasse Wulf}{Section for Theoretical Computer Science, IT University of Copenhagen, Denmark}{lasw@itu.dk}{https://orcid.org/0000-0001-7139-4092}{Funded by the Austrian Science Fund (FWF):W1230 and by the Carlsberg Foundation CF21-0302 ``Graph Algorithms with Geometric Applications''.}

\authorrunning{C. Grüne, B. Johannes, J.B. Orlin and L. Wulf} 

\Copyright{Christoph Grüne and Lasse Wulf} 

\ccsdesc[500]{Theory of computation~Problems, reductions and completeness} 

\keywords{Computational Complexity, Bilevel Optimization, Robust Optimization, Stackelberg Games, Min-Max Regret, Most Vital Nodes, Most Vital Vertex, Most Vital Edges, Blocker Problems, Interdiction Problems, Two-Stage Problems, Polynomial Hierarchy, Sigma-2, Sigma-3} 

\category{} 

\relatedversion{} 

\supplement{}

\nolinenumbers 

\hideLIPIcs  

\EventEditors{John Q. Open and Joan R. Access}
\EventNoEds{2}
\EventLongTitle{42nd Conference on Very Important Topics (CVIT 2016)}
\EventShortTitle{CVIT 2016}
\EventAcronym{CVIT}
\EventYear{2016}
\EventDate{December 24--27, 2016}
\EventLocation{Little Whinging, United Kingdom}
\EventLogo{}
\SeriesVolume{42}
\ArticleNo{23}

\definecolor{darkgreen}{RGB}{0,120,0}

\newcommand{\powerset}[1]{2^{#1}}
\newcommand{\set}[1]{\{ #1 \}}
\newcommand{\fromto}[2]{\set{#1, \ldots, #2}}
\newcommand{\bin}{\set{0,1}}

\newcommand{\NP}{\textsf{NP}}

\newcommand{\PH}{\textsf{PH}}
\newcommand{\PSPACE}{\textsf{PSPACE}}

\newcommand{\NPS}{\textsf{NP-S}}

\newcommand{\leqSE}{\leq_{\mathrm{SE}}}

\newcommand{\U}{\mathcal{U}}

\newcommand{\sol}{\mathcal{S}}
\newcommand{\Uemb}{\U_{emb}}
\newcommand{\Uaux}{\U_{aux}}
\newcommand{\Uembp}{\U'_{emb}}
\newcommand{\Uauxp}{\U'_{aux}}

\newcommand{\sat}{\textsc{Sat}}
\newcommand{\satV}{\textsc{Sat-V}}

\newcommand{\restrictedGameRules}{
\begin{itemize}
  \item \textbf{Rule 1.}  In move $i$ where $i$ is even, the Blocker must block exactly one of $u_{ij}$ or $w_{ij}$ for each $j \in \{1, ..., n\}$. 
  \item \textbf{Rule 2.}  In move $i$, where $i$ is odd and $i < k$, the Protector must protect exactly one of $u_{ij}$ or $w_{ij}$ for each $j \in \{1, ..., n\}$.
  \item \textbf{Rule 3.}  By the end of move $k-1$, the Blocker must have blocked each element of the set $\bigcup_{i \text{ odd, }i \neq k}\ U_i \cup W_i$ that was not protected. 
  \item \textbf{Rule 4.}  When selecting elements in move $k$, the Protector must select each unblocked element of $U_i \cup W_i$ for each $i < k$.  
\end{itemize}
}

\begin{document}

\maketitle
\begin{abstract}
    Many natural optimization problems derived from $\NP$ admit bilevel and multilevel extensions in which decisions are made sequentially by multiple players with conflicting objectives, as in interdiction, adversarial selection, and adjustable robust optimization.
Such problems are naturally modeled by alternating quantifiers and, therefore, lie beyond $\NP$, typically in the polynomial hierarchy or $\PSPACE$.
Despite extensive study of these problem classes, relatively few natural completeness results are known at these higher levels.
We introduce a general framework for proving completeness in the polynomial hierarchy and $\PSPACE$ for problems derived from $\NP$. 
Our approach is based on a refinement of $\NP$, which we call \emph{$\NP$ with solutions} ($\NPS$), in which solutions are explicit combinatorial objects, together with a restricted class of reductions -- \emph{solution-embedding reductions} -- that preserve solution structure.
We define \emph{$\NPS$-completeness} and show that a large collection of classical $\NP$-complete problems, including \emph{Clique}, \emph{Vertex Cover}, \emph{Knapsack}, and \emph{Traveling Salesman}, are $\NPS$-complete.

Using this framework, we establish general meta-theorems showing that if a problem is $\NPS$-complete, then its natural two-level extensions are $\Sigma_2^p$-complete, its three-level extensions are $\Sigma_3^p$-complete, and its $k$-level extensions are $\Sigma_k^p$-complete.
When the number of levels is unbounded, the resulting problems are $\PSPACE$-complete.
Our results subsume nearly all previously known completeness results for multilevel optimization problems derived from $\NP$ and yield
many new ones simultaneously, demonstrating that high computational complexity is a generic feature of multilevel extensions of $\NP$-complete problems.
    
\end{abstract}

\section{Introduction}

Many natural decision and optimization problems are obtained by taking a classical problem in $\NP$ and extending it with sequential, adversarial, or multilevel decision making.
Such extensions arise, for example, in interdiction, adversarial selection, and multi-stage optimization, where one player acts first and another responds optimally.
From a complexity-theoretic perspective, these problems are no longer decision problems in $\NP$, but are located in higher levels of the polynomial hierarchy or in $\PSPACE$.
The goal of this paper is to systematically understand the complexity of such problems \emph{derived from $\NP$}, and to explain why their multilevel variants are generically complete for these higher complexity classes.

In recent years, extensions of problems in $\NP$ have been motivated by enormous interest in the areas of Bilevel Optimization \cite{dempe2020bilevel, DBLP:journals/ejco/KleinertLLS21}, Robust Optimization \cite{DBLP:books/degruyter/Ben-TalGN09,robook}, Network Interdiction \cite{smith2013modern}, Stackelberg Games \cite{li2017review}, Attacker-Defender games \cite{HUNT2023}, and many other bilevel problems.
These research areas are vital in helping us to understand, which parts of a network are most vulnerable, prevent terrorist attacks, understand economic processes, and to understand and improve the robustness properties of many systems.

What unites these fields is their inherently multilevel structure: decisions are made sequentially by multiple actors with potentially conflicting objectives -- often modeled through alternating quantifiers or game-theoretic interactions.
For example, in bilevel optimization and Stackelberg games, a leader makes a decision anticipating the optimal response of a follower;
in network interdiction and attacker-defender games, one party attempts to disrupt or protect system functionality under budget constraints while anticipating countermeasures from its opponent.
This multilevel nature  elevates such problems into higher levels of the polynomial hierarchy, where the alternation between existential (“there exists a strategy”) and universal (“for all possible attacks”) quantifiers reflects real-world adversarial dynamics.
Accordingly, these problems can be characterized as an abstract two-turn game between two players, the existential player and the universal player.
The first player (from now on called Alice) starts the game and takes an action with the goal of achieving some objective.
Afterwards, the second player (from now on called Bob) responds to Alice.
Typically, but not always, Alice’s and Bob’s goals are opposite of each other.

Let us take as a prototypical example the maximum clique interdiction problem \cite{DBLP:journals/networks/PajouhBP14,DBLP:conf/iscopt/PaulusmaPR16, DBLP:journals/eor/FuriniLMS19, DBLP:journals/anor/Pajouh20}.
In this problem, we are given a graph $G = (V, E)$, a subset $V' \subseteq V$ some budget $\Gamma \in \N$, and some threshold $t \in \N$.
The goal of Bob is to find a clique of size $t$ in the graph.
However, before Bob’s turn, Alice can delete up to $\Gamma$ vertices from the graph in order to impair Bob’s objective.
Hence the game is a bilevel problem described by
$$
	\exists B \subseteq V' \text{ with } |B| \leq \Gamma : \forall C \subseteq V: \text{If $C$ is a clique in $G$ of size $t$, then $B \cap C \neq \emptyset$}.
$$
We remark that this problem follows a natural pattern, in which researchers often come up with new problems:
First, some nominal problem is taken (in this case, Clique), and afterwards it is modified into a more complicated multilevel problem by adding an additional component (in this case, the possibility of Alice to interdict).

A natural extension of classical interdiction problems are \emph{protection-interdiction games}, which provide a canonical source of decision problems in the polynomial hierarchy and in $\PSPACE$.
In such a game, Alice (the attacker) and Bob (the defender) face off; each has their own budget ($\Gamma_A$ for Alice and $\Gamma_B$ for Bob), which is used to attack or protect individual elements from the universe.
The game proceeds in rounds:
first, Bob protects some elements;
then, Alice interdicts unprotected elements;
this alternates until round $k$ is reached, in which Bob must find a valid solution using only the surviving elements.
Bob wins the game if he is able to construct a solution; otherwise, Alice wins.
The number of alternating moves directly determines the complexity of the resulting decision problem, placing it at higher levels of $\PH$ or, when the number of rounds is part of the input, in $\PSPACE$.

\subsection{Structural Properties of NP Reductions}
A key insight underlying this paper is that many classical $\NP$-completeness reductions preserve far more structure than is required to establish $\NP$-hardness.
In a wide range of standard reductions, elements of a solution to the source problem correspond one-to-one to elements of a solution to the target problem.
As a result, solutions can often be reconstructed by carefully chosing a subset of elements from the target solution that are one-to-one correspondent to elements in the solution to the source problem.
This phenomenon is familiar from textbook reductions.
For example, in reductions from {\sc Sat} to problems such as {\sc Vertex Cover}, {\sc Independent Set}, or {\sc Clique}, satisfying assignments are encoded directly by selecting vertices from variable gadgets, and conversely, every minimal solution induces a valid assignment.
While such structure is not required for $\NP$-completeness, it is ubiquitous in practice.
This additional structure becomes essential in bilevel and multilevel optimization.
In interdiction and multi-stage models, an adversary does not merely influence feasibility, but acts directly on individual elements that may or may not appear in a solution.
Consequently, the computational problem depends not only on whether solutions exist, but on how solutions are composed.
These observations motivate a restricted class of reductions that preserve the solution structure explicitly.
Informally, such a reduction embeds the elements of solutions of one problem into the elements of solutions of another in a way that preserves feasibility in both directions.
We refer to such reductions as \emph{solution-embedding reductions}.
They play a central role in our meta-theorems, allowing $\NP$-completeness proofs to be upgraded systematically to completeness results higher in the polynomial hierarchy.

The idea of exploiting structure-preserving reductions to lift $\NP$-completeness results
to higher levels of the polynomial hierarchy was introduced by Johannes in her doctoral
thesis~\cite{johannes2011new}, where it was applied to a variety of adversarial and multilevel
problem classes. Grüne and Wulf \cite{grüne2024completenesspolynomialhierarchynatural} built on Johannes' work by showing that Johannes' ideas could be stated inside a formal framework, and showed that this framework applies to many popular problem classes in the second or third stage of the hierarchy.
The present paper is to be understood as an extension of \cite{johannes2011new,grüne2024completenesspolynomialhierarchynatural} in multiple directions. We thoroughly introduce the framework, and explain its precise relation to the class $\NP$ from a complexity point of view.
We then extend the results to arbitrarily many stages of the polynomial hierarchy and $\PSPACE$ and introduce a novel proof technique.

\subsection{NP with Solutions}
Standard formulations of $\NP$ treat solutions as opaque certificates, whose internal structure is irrelevant once feasibility can be verified in polynomial time.
This abstraction is sufficient for single-level decision problems, but it becomes inadequate when solutions themselves are manipulated, restricted, or adversarially modified, as is the case in bilevel and multilevel optimization.
To address this, we introduce a refinement of $\NP$ in which solutions are explicit combinatorial objects composed of elements drawn from a well-defined universe.
We refer to this class as \emph{$\NP$ with solutions} ($\NPS$).
Intuitively, a problem in $\NPS$ consists not only of a set of instances, but also of a specification of what constitutes a solution and which elements make up that solution.
Different representations of solutions -- such as vertex-based versus edge-based formulations of the same graph problem -- are treated as distinct problems in $\NPS$, reflecting the fact that they give rise to different bilevel and multilevel variants.
The role of $\NPS$ is not to introduce additional computational power beyond $\NP$, but to make explicit the solution structure  that is already implicitly exploited by many classical reductions.
Within this framework, solution-embedding reductions can be defined precisely, and their consequences for bilevel and multilevel problems can be analyzed rigorously.
In the course of the paper, we show that at least the following 26 classical problems are $\NPS$-complete:
\begin{quote}
        Satisfiability,
        3-Satis\-fiability,
        Vertex Cover, Dominating Set, Set Cover, Hitting Set, Feedback Vertex Set, Feedback Arc Set, 
        Uncapacitated Facility Location, 
        p-Center, p-Median,
        3-Dimensional Matching,
        Independent Set, Clique,
        Subset Sum, Knapsack, Partition, Scheduling,
        Directed/Undirected Hamiltonian Path, 
        Directed/Undirected Hamiltonian Cycle, 
        Traveling Salesman Problem,
        Two Directed Vertex Disjoint Path, 
        $k$-Vertex Directed Disjoint Path,
        Steiner Tree.
\end{quote}

A formal description of all these problems is provided in \Cref{sec:ssp-reductions}.
Furthermore, one can add new problems to the class $\NPS$ with very little work by finding a solution-embedding reduction starting from any problem which is already $\NPS$-complete.
Since we could easily show for many classic problems that they are $\NPS$-complete, we anticipate the addition of many more problems in the future.
The website reductions.network \cite{DBLP:journals/corr/abs-2511-04308} collects problems for the classes $\NPS$, in which currently 54 problems are registered.

\paragraph*{Lifting to the polynomial hierarchy and beyond.}

Once solution structure is preserved, multilevel constructions become amenable to systematic analysis from a complexity-theoretic standpoint.
The key idea behind our meta-theorems is that multilevel optimization problems can be interpreted as games in which players alternately restrict or select elements from the universe before a final solution is chosen.
Because solution-embedding reductions preserve partial solutions, they also preserve the effect of these multilevel interactions.

This observation allows us to lift hardness results in a uniform way.
We first establish completeness for a satisfiability problem at the desired level of the polynomial hierarchy.
Solution embedding then ensures that the same complexity carries over to all $\NPS$-complete problems.
In this way, natural two-level problems such as interdiction and adversarial selection inherit $\Sigma_2^p$-completeness, three-level problems such as two-stage adjustable optimization inherit $\Sigma_3^p$-completeness, and, more generally, $k$-level problems inherit $\Sigma_k^p$-completeness.

The framework is not limited to a constant number of decision stages.
When the number of alternating moves is allowed to grow with the input size, the corresponding games become $\PSPACE$-complete.
This extension follows naturally from the same structural properties and reflects the classical correspondence between polynomially bounded alternation and $\PSPACE$.
Thus, $\NPS$ and solution embedding provide a unified explanation for hardness across both the polynomial hierarchy and $\PSPACE$.

This insight allows us to replace long, problem-specific hardness proofs by a unified argument.
Rather than constructing a bespoke reduction for each multilevel problem, it suffices to verify that the underlying $\NP$ problem admits a solution-embedding reduction.
Once this is established, the complexity of its bilevel and multilevel variants follows systematically.
In this sense, the complexity jump from $\NP$ to higher levels of the polynomial hierarchy follows directly from the strucutre of an existing $\NP$-completeness proof.

\subsection{Our Contribution}
\label{subsec:meta-theorems-2}

The central contribution of this paper is a collection of meta-theorems that systematically lift $\NP$-completeness results to completeness results in the polynomial hierarchy and, in more general settings, to $\PSPACE$.

We apply these meta-theorems to several prominent classes of multilevel optimization problems, including interdiction and multi-stage adjustable optimization.
In doing so, we recover essentially all previously known hardness results in these areas as special cases, and establish many new completeness results simultaneously.
This demonstrates that high computational complexity is not an exception but a generic feature of multilevel extensions of $\NP$-complete problems.

We remark that earlier papers showed $\Sigma^p_k$-completeness only for one problem at a time. 

In contrast, our meta-theorem contains essentially all known $\Sigma^p_k$-completeness results in the areas of network interdiction, and two-stage adjustable robust optimization as a special case (namely,
interdiction maximum clique \cite{DBLP:journals/amai/Rutenburg93},
interdiction knapsack \cite{DBLP:journals/siamjo/CapraraCLW14,DBLP:journals/corr/abs-2406-01756},

as well as
two-stage adjustable TSP, vertex cover, independent set \cite{DBLP:journals/dam/GoerigkLW24}.

While each of these earlier works proves $\Sigma^p_k$-hardness of only one specific problem, our meta-theorem proves $\Sigma^p_k$- or $\PSPACE$-completeness simultaneously for a large class of problems, including all problems mentioned in \cref{sec:ssp-reductions}, as well as possible further problems which are added in the future.
We, further, answer open questions of \cite{DBLP:journals/dam/GoerigkLW24,grüne2024completenesspolynomialhierarchynatural} whether there is a general meta-reduction for multilevel problems such as interdiction games or multi-stage adjustable robust optimization.

We concretely establish the following results.

\textbf{Adversarial selection games.}
In \cref{sec:selection-game}, we are concerned with adversarial selection games, which are the natural adaptation of quantified Boolean satisfiability to any $\NPS$ problems under a multilevel optimization regime.
An adversarial selection game is defined over a nominal problem from the class $\NPS$.
Two players, Alice and Bob, alternatingly choose elements from the universe of the nominal problem.
Alice wants to construct a solution, and Bob wants to prevent Alice from doing so.
In this section, we introduce a proof technique custom-fit to solution-embedding reductions, which we call ``dual mimicking''.
Using this dual mimicking proofs, we show that adversarial selection games on $\NPS$-complete problems are $\Sigma^p_k$-complete or, if $k$ is part of the input, $\PSPACE$-complete.

\textbf{Protection-interdiction games.}
In \cref{sec:interdiction}, we are concerned with \emph{protection-interdiction games}.
In such a game, there are two players, a protector and an interdictor.
Each of them has a budget: the protector can protect up to $\Gamma_P$ elements, while the interdictor can block up to $\Gamma_I$ elements.
Now, both of the players alternatingly choose elements to protect or block. The interdictor is not able to block protected elements, and the protector is not able to protect blocked elements.
In the last stage, the protector needs to find a solution to the nominal problem (for example, a Hamiltonian cycle, a vertex cover, a maximum clique, etc.) while only using non-blocked elements.
If the protector is able to do so, he wins, otherwise, the interdictor wins.
Our main result is that for every problem $\Pi \in \NPS$, the corresponding $k$-stage protection-interdiction problem is $\Sigma^p_k$-complete or, if $k$ is part of the input, $\PSPACE$-complete.

\textbf{Multi-stage adjustable robust optimization.}
In \cref{sec:two-stage}, we are concerned with \emph{multi-stage adjustable robust optimization} with discrete budgeted uncertainty (also called discrete $\Gamma$-uncertainty).

A decision maker has to make a decision on the first stage without full information on the costs of future decisions, or possibly on what decisions are available in the future (here-and-now).
The decision in the following stage is made after some part of uncertainty is revealed (wait-and-see).
In this paradigm, in each stage, new decision elements become available.  At the beginning of the stage, an adversary can block some of the available decision elements. Then the decision maker decides on the remaining elements.

For a problem in $\NPS$, the $k$-stage adjustable robust versions are in $\Sigma^p_{2k-1}$.

Our main result in \cref{sec:two-stage} is that for every problem $\Pi \in \NPS$, the corresponding $k$-stage adjustable problem with discrete budgeted uncertainty is $\Sigma^p_{2k-1}$-complete or, if $k$ is part of the input, $\PSPACE$-complete.

\textbf{Alternative $\NPS$ models.}
In \cref{sec:alternative-NPS-models}, we examine alternative ways of modeling problems in $\NPS$ and discuss the robustness of our framework with respect to these choices.
In particular, we consider how different representations of solutions
-- such as alternative universes or encodings induced by different verifiers --
give rise to distinct $\NPS$ problems, even when they correspond to the same underlying $\NP$ decision problem.
We show that our notion of solution embedding and the resulting meta-theorems extend naturally to these alternative $\NPS$ models.
This demonstrates that the framework developed in this paper is not tied to a single canonical representation, but instead captures a structural phenomenon that persists across a broad range of natural solution models.

\textbf{Compendium of $\NPS$-complete problems.}
\Cref{sec:ssp-reductions} provides a compendium of $\NPS$-complete problems, serving both as a reference and as a practical guide for applying our meta-theorems.
For each problem, we specify a natural universe and solution structure and indicate how $\NPS$-completeness follows via solution-embedding reductions.
This compendium consolidates a large number of classical $\NP$-complete problems within a single framework and makes explicit which problem representations are suitable for lifting to higher levels of the polynomial hierarchy.
As a result, it enables readers to readily identify new multilevel problems
-- such as adversarial selection, interdiction, or multi-stage variants --
that are complete for $\Sigma_k^p$ or $\PSPACE$ without the need for additional problem-specific reductions.

\subsection{Literature overview}
Due to space restrictions, it is impossible to consider every sub-area of multilevel optimization in a single paper.
In this paper, we restrict our attention to the areas of \emph{interdiction} and \emph{adjustable robust optimization}.
We choose these areas, since they represent highly popular sub-areas of multilevel optimization, and since we can showcase a diverse set of techniques in these different cases, while staying concise at the same time. 
In fact, since there are hundreds of papers in each of these popular areas, it is not possible to give a complete overview.
We split this literature survey into an overview of the general area, as well as an overview of work specifically on $\Sigma^p_k$-completeness.

\paragraph*{General area}
In the area of \emph{network interdiction}, one is concerned with the question, which parts of a network are most vulnerable to attack or failure.
Such questions can be formulated as a min-max problem, and can be imagined as a game between Alice and Bob, where Alice interdicts a limited set of elements from the network in order to disturb Bob's objective.

Other names for slight variants of the network interdiction problem are the most vital nodes/most vital edges problem and the vertex/edge blocker problem.
Network interdiction has been considered for a vast amount of standard optimization problems.
Among these are problems in $\sf P$ such as
shortest path  \cite{malik1989k,bar1998complexity,DBLP:journals/mst/KhachiyanBBEGRZ08},
matching \cite{DBLP:journals/dam/Zenklusen10a},
minimum spanning tree \cite{DBLP:journals/ipl/LinC93},
or maximum flow \cite{WOOD19931}.
These publications show the $\NP$-hardness of the corresponding interdiction problem variant (except \cite{malik1989k} showing that the single most vital arc problem lies in $\sf P$).
Furthermore, algorithms for interdiction problems that are $\NP$-complete were developed, for example for
vertex covers \cite{DBLP:conf/iwoca/BazganTT10, DBLP:journals/dam/BazganTT11},
independent sets \cite{DBLP:conf/iwoca/BazganTT10, DBLP:journals/dam/BazganTT11,DBLP:journals/gc/BazganBPR15, DBLP:conf/tamc/PaulusmaPR17,DBLP:journals/dam/HoangLW23},
colorings \cite{DBLP:journals/gc/BazganBPR15, DBLP:conf/iscopt/PaulusmaPR16, DBLP:conf/tamc/PaulusmaPR17},
cliques \cite{DBLP:journals/networks/PajouhBP14,DBLP:conf/iscopt/PaulusmaPR16, DBLP:journals/eor/FuriniLMS19, DBLP:journals/anor/Pajouh20},
dominating sets \cite{DBLP:journals/eor/PajouhWBP15}, or
1- and $p$-center \cite{DBLP:conf/cocoa/BazganTV10, DBLP:journals/jco/BazganTV13},
1- and $p$-median \cite{DBLP:conf/cocoa/BazganTV10, DBLP:journals/jco/BazganTV13}.
Typically, the complexity of these problems is not analyzed beyond $\NP$-hardness.
A general survey is provided by Smith, Prince and Geunes \cite{smith2013modern}.
Multilevel Interdiction is typically referred to as protection-interdiction or fortification-interdiction games.
In \cite{DBLP:journals/corr/abs-2406-01756}, different variants of knapsack are analyzed. Among them are interdiction knapsack, fortification-interdiction knapsack and multilevel fortification-interdiction knapsack.
Furthermore, the multilevel critical node problem is analyzed in \cite{DBLP:journals/jcss/NabliCH22}.

In the area of \emph{multi-stage adjustable robust optimization}, one is faced with uncertain scenarios.
The decision-maker has to make a multi-step decision, where in each step a partial decision has to be made without full knowledge of the scenario, until in the last stage a partial corrective decision can be made with full knowledge of the scenario.
Two-stage adjustable robust optimization has been considered for a large amount of standard optimization problems, for example in \cite{DBLP:conf/or/KasperskiZ15,kasperski2016robust,DBLP:journals/dam/KasperskiZ17,DBLP:journals/eor/ChasseinGKZ18,DBLP:journals/eor/GoerigkLW22}.
A general survey is provided by Yan{\i}ko{\u{g}}lu, Gorissen, and den Hertog \cite{DBLP:journals/eor/YanikogluGH19}.
Moreover, multi-stage adjustable robust optimization is discussed in the following papers.
The complexity of vertex cover, independent set, and traveling salesman is analyzed in \cite{DBLP:journals/dam/GoerigkLW24}.
Furthermore, multi-stage adjustable robustness is closely related with online optimization with estimates.
Corresponding results can be found in \cite{DBLP:journals/cor/GoerigkGISS15,DBLP:conf/stacs/GehnenLR24,DBLP:journals/corr/abs-2504-21750} for knapsack, \cite{DBLP:journals/corr/abs-2501-18496} for traveling salesman, and \cite{DBLP:journals/corr/abs-2505-09321} for bin packing.

\paragraph*{Closely related work}
Despite a tremendous interest in bilevel optimization, and despite the complexity classes $\Sigma^p_k$ being the natural complexity classes for this type of decision problems, we are aware of only a handful of publications on the matter of $\Sigma^p_k$-completeness with respect to these areas.
Rutenburg \cite{DBLP:journals/amai/Rutenburg93} shows $\Sigma^p_2$-completeness of the maximum clique interdiction problem.
Caprara, Carvalho, Lodi and Woeginger \cite{DBLP:journals/siamjo/CapraraCLW14} show $\Sigma^p_2$-completeness of an interdiction variant of knapsack that was originally introduced by DeNegre \cite{10.5555/2231641}.
Fröhlich and Ruzika \cite{DBLP:journals/tcs/FrohlichR21} show the $\Sigma^p_2$-completeness of two versions of a location-interdiction problem.
Nabli, Carvalho and Hosteins \cite{DBLP:journals/jcss/NabliCH22} study the multilevel critical node problem and prove its $\Sigma^p_3$-completeness.
Goerigk, Lendl and Wulf \cite{DBLP:journals/dam/GoerigkLW24} show $\Sigma^p_3$-completeness of two-stage adjustable variants of TSP, independent set, and vertex cover.
Grüne \cite{DBLP:conf/latin/Grune24} introduces a reduction framework for so-called universe gadget reductions to show $\Sigma^p_3$-completeness of recoverable robust versions of typical optimization problems.
Tomasaz, Carvalho, Cordone and Hosteins \cite{DBLP:journals/corr/abs-2406-01756} show $\Sigma^p_2$-completeness of an interdiction-knapsack problem and $\Sigma^p_k$-completeness of a $k$-level fortification-interdiction-knapsack problem.

The compendium by Umans and Schaefer \cite{schaefer2008completeness} contains many $\Sigma^p_2$-complete problems, but few of them are related to bilevel optimization.
A seminal paper by Jeroslow \cite{DBLP:journals/mp/Jeroslow85} shows $\Sigma^p_k$-completeness of general multilevel programs.

To the best of our knowledge, this completes the list of known $\Sigma^p_k$-completeness results for bilevel and multilevel optimization.
We find it remarkable that so few results exist in this area, despite being an active research area for well over two decades.
In contrast, $\NP$-completeness proofs are known for a huge number of problems.
One possible reason for this, according to Woeginger \cite{DBLP:journals/4or/Woeginger21}, is that \enquote{at the current moment, establishing $\Sigma^p_2$-completeness is usually tedious and mostly done via lengthy reductions that go all the way back to 2-Quantified Satisfiability}.

We end this subsection with a brief discussion of a related concept in complexity theory.
From a complexity-theoretic perspective, our notion of solution-embedding reductions is related in spirit to classical work on parsimonious reductions and $\sf \#P$-completeness \cite{Val79}.
Parsimonious reductions preserve the exact number of solutions and are central to the study of counting problems.
While both notions go beyond standard many-one reductions by preserving additional structure, they address different aspects of computational complexity.
In particular, $\sf \#P$-completeness is concerned with counting solutions, whereas our framework focuses on the behavior of solutions under restriction and alternation in multilevel decision problems.
As a result, solution-embedding reductions are suited to lifting $\NP$-completeness results to the polynomial hierarchy and $\PSPACE$, rather than to counting complexity.

\subsection{Technical Overview}
\label{sec:technicalOverview}

We give a short overview of the techniques and ideas used to obtain our main theorems.
For that purpose it becomes necessary to formally describe to what kind of problems the meta-theorem applies.
Problems from the complexity class $\NPS$ are tuples $(\I, \U, \sol)$.
Here, $\I \subseteq \bin^*$ is the set of input instances of the problem encoded as words in binary.
Associated to each input instance $I \in \I$, we assume that there is a \emph{universe} $\U(I)$,
and a \emph{solution set} $\sol(I) \subseteq \powerset{\U(I)}$ containing all subsets of the universe that make up a solution.

Our ideas are best explained using an example.
Consider the problem {\sc Vertex Cover}.
For an instance $I$, we are given an undirected graph $G = (V,E)$ and some threshold $t \in \Z_{\geq 0}$.
A vertex cover of a graph $G$ is a set $C \subseteq V$ of vertices such that each edge $\{u,v\} \in E$ has one endpoint in $C$, i.e., $u \in C$ or $v \in C$.
The question is whether there is a vertex cover of size at most $t$.
Interpreting the vertex cover problem as an $\NPS$ problem in the above sense means the following:
The input instance is given as a tuple $I = (G, t)$ (encoded in binary).
The universe associated to some input instance $I$ is given by $\U(I) = V$, and the solution set is  $\sol(I) = \set{C \subseteq V : C \text{ is a vertex cover in $G$ of size at most $t$}}$.
The question is whether $\sol(I) \neq \emptyset$.

We, further, consider a type of $\NP$-completeness reduction from some problem $\Pi_1$ to some other problem $\Pi_2$ that has a special property, which we call the \emph{solution embedding property} (or, in short, SE property).
This property states that the universe of $\Pi_1$ can be injectively embedded into the universe of $\Pi_2$ in such a way that the following two properties hold: 
(P1) Every solution of $\Pi_1$ corresponds to a partial solutions of $\Pi_2$, 
and (P2) every solution of $\Pi_2$ when restricted to the image of the embedding corresponds to a solution of $\Pi_1$.

We show that that a natural modification of the historically first $\NP$-completeness reduction, (i.e., the reduction used by Cook and Levin to show that \textsc{Satisfiability} is $\NP$-complete) has the SE property (\cref{thm:cook-levin}).
We also show that a large number of $\NP$-completeness reductions that are known from the literature actually have the SE property (\cref{sec:ssp-reductions}).

Let us illustrate the SE property by considering a reduction from \textsc{3-Satisfiability} (\textsc{3Sat}) to \textsc{Vertex Cover}.

The input for \textsc{3Sat} is a Boolean formula $\varphi$ with clauses $c_1, \dots, c_m$. We consider the $\NPS$ version of \textsc{3Sat} in which the universe is the set $L = \fromto{\ell_1}{\ell_n} \cup \fromto{\overline \ell_1}{\overline \ell_n}$ of all literals.
The solution set of $\varphi$ is the set of all the subsets of the literals which encode a satisfying solution, i.e. 
$$
\sol(\varphi) = \set{L' \subseteq L : |L' \cap \set{\ell_i, \overline \ell_i}| = 1 \ \forall i \in \set{1,\dots,n}, L' \cap c_j \neq \emptyset \ \forall j \in \set{1, \dots, m}}.
$$

The following is the classical $\NP$-completeness reduction from \textsc{3Sat} to \textsc{Vertex Cover}  from the book of Garey and Johnson \cite{DBLP:books/fm/GareyJ79}.   

Given a \textsc{3Sat} instance consisting of literals $\fromto{\ell_1}{\ell_n} \cup \fromto{\overline \ell_1}{\overline \ell_n}$ and clauses $C$, the reduction constructs a graph $G = (V,E)$ for \textsc{Vertex Cover} in the following way:
The graph contains vertices $W := \fromto{v_{\ell_1}}{v_{\ell_n}} \cup \fromto{v_{\overline \ell_1}}{v_{\overline \ell_n}}$ such that each vertex $v_{\ell_i}$ is connected to vertex $v_{\overline \ell_i}$ with an edge.
Furthermore, for each clause, we add a new triangle to the graph (including three new vertices) and add three additional edges so that the three vertices of the triangle are connected to the corresponding vertices of the literals appearing in the clause.  This reduction is illustrated in \Cref{fig:reduction:3sat-vertex-cover}.

Note that every vertex cover of $G$ has size at least $|L|/2 + 2|C|$ since it must contain at least one of the vertices in $\{v_{\ell_i},v_{\overline \ell_i} \}$ and at least two of the vertices in each triangle.
The following is easily verified:
$G$ has a vertex cover of size $|L|/2 + 2|C|$ if and only if  the \textsc{3Sat} instance is a Yes-instance.
We can see this as follows.
For the vertex cover to have cardinality $|L|/2 + 2|C|$ it must contain exactly one of the vertices in $\{v_{\ell_i},v_{\overline \ell_i} \}$ and exactly two of the vertices in each triangle.
Moreover, the third vertex in each triangle must be incident to a selected vertex not in the triangle.

\tikzstyle{vertex}=[draw,circle,fill=black, minimum size=4pt,inner sep=0pt]
\tikzstyle{edge} = [draw,-]
\begin{figure}[thpb]
\centering
\resizebox{0.67\textwidth}{!}{
\begin{tikzpicture}[scale=1,auto]

\node[vertex] (x1) at (0,0) {}; \node[above] at (x1) {$v_{\ell_1}$};
\node[vertex] (notx1) at (2,0) {}; \node[above] at (notx1) {$v_{\overline \ell_1}$};
\draw[edge] (x1) to (notx1);

\node[vertex] (x2) at (4,0) {}; \node[above] at (x2) {$v_{\ell_2}$};
\node[vertex] (notx2) at (6,0) {}; \node[above] at (notx2) {$v_{\overline \ell_2}$};
\draw[edge] (x2) to (notx2);

\node[vertex] (x3) at (8,0) {}; \node[above] at (x3) {$v_{\ell_3}$};
\node[vertex] (notx3) at (10,0) {}; \node[above] at (notx3) {$v_{\overline \ell_3}$};
\draw[edge] (x3) to (notx3);

\node[vertex] (c1) at (4,-2.25) {}; \node[below] at (c1) {$v^{c_1}_{\overline \ell_1}$};
\node[vertex] (c2) at (5,-1.25) {}; \node[above left] at (c2) {$v^{c_1}_{\overline \ell_2}$};
\node[vertex] (c3) at (6,-2.25) {}; \node[below] at (c3) {$v^{c_1}_{\ell_3}$};
\node[] at (5,-1.92) {$c_1$};
\draw[edge] (c1) to (c2) to (c3) to (c1);
\draw[edge] (notx1) to (c1);
\draw[edge] (notx2) to (c2);
\draw[edge] (x3) to (c3);

\node at ($(x1)+(-1,0)$) {$W$};
\draw[dashed,rounded corners] ($(x1)+(-.5,+.7)$) rectangle ($(notx3) + (.5,-.4)$);

\end{tikzpicture}
}
\caption{Classic reduction of \textsc{3Sat} to \textsc{Vertex Cover} for $\varphi = (\overline \ell_1 \lor \overline \ell_2 \lor \ell_3)$.}
\label{fig:reduction:3sat-vertex-cover}
\end{figure}

We now describe a way in which solutions of \textsc{3Sat} are mapped to solutions of \textsc{Vertex Cover}. Let 
$f : \U \to \U'$ with $f(\ell_i) = v_{\ell_i}$ and $f(\overline \ell_i) = v_{\overline \ell_i}$.
The function $f$ ``embeds'' the universe $\U$ into the universe $\U'$. It maps each literal in $\U$ to a unique vertex in $\U'$.  The function $f$ satisfies the following two properties: Suppose that $W = f(L)$. 
\begin{description}
    \item[(P1)] If $S \subseteq L$ are the literals for a satisfying assignment, then there is a vertex cover of size $|L|/2 + 2|C|$ that restricted to $W$ equals $f(S)$.
    \item[(P2)] If $V'$ is vertex cover of size $|L|/2 + 2|C|$, then $f^{-1}(V'\cap W)$ is the set of literals of a satisfying assignment.
\end{description}

We next give a more compact way of expressing the above two properties.   Let $\sol(\varphi) \subseteq \powerset{\U}$ denote the solutions of \textsc{3Sat} (i.e. the set of all subsets of the literals which encode a satisfying assignment).
Let $k = |L|/2 + 2|C|$, and let $\sol'(G, k) \subseteq \powerset{\U'}$ denote  the solutions of \textsc{Vertex Cover} (i.e. the set of all vertex covers of size at most $k$).
Then the above two properties of $f$ can be expressed in the following equivalent manner: 
\begin{equation}
\label{eq:SSP}
    \set{f(S) : S \in \sol(\varphi) } = \set{S' \cap f(\U) : S' \in  \sol'(G, k)}. 
\end{equation}

If an $\NP$-hardness reduction from $\Pi_1$ to $\Pi_2$ has an associated function $f$ that satisfies property \eqref{eq:SSP}, we say that it is a \emph{solution-embedding} (SE) reduction, and we denote it as $\Pi_1 \leqSE \Pi_2$.   Our meta-theorems are based on SE reductions.

We introduce the class $\NPS$ as an analogue to the class of $\NP$.  And we say that a problem $\Pi \in \NPS$ is $\NPS$-complete if for every other problem $\Pi' \in \NPS$, there is an SE reduction from $\Pi'$ to $\Pi$.  

\cref{sec:ssp-reductions} contains a list of $\NPS$-complete problems.
Since solution-embedding reductions are transitive, as we will show later, it suffices to choose an $\NPS$-complete problem $\Pi'$ from the list and then show that $\Pi' \leqSE \Pi$ to prove that $\Pi$ is $\NPS$-complete.

Finally, we explain how SE reductions are used to obtain a general meta-reduction.
The existence of an SE reduction between two problems implies that they share a closely related solution structure.
We exploit this property to lift problems in $\NPS$ to higher levels of the polynomial hierarchy via a common modification operator~$\Phi$, which transforms an problem from $\NPS$ into a multilevel decision problem, such as adversarial selection, protection-interdiction, or adjustable robust optimization.

The key idea is that it suffices to establish completeness for $\Phi(\Pi)$ for a single $\NPS$-complete problem~$\Pi$ (typically \textsc{Sat}), and then to transfer this result to all $\NPS$-complete problems.
Concretely, we first consider the modified problem $\Phi(\textsc{Sat})$ and prove, using standard techniques, that it is $\Sigma_k^p$-complete via a reduction from the canonical complete problem \textsc{$k$-Quantified-Sat}.
Next, we consider an arbitrary $\NPS$-complete problem~$\Pi'$.
Since $\Pi'$ is $\NPS$-complete, there exists an SE reduction $\textsc{Sat} \leqSE \Pi'$.

This SE reduction implies that for any \textsc{Sat} instance~$I$, we can construct a $\Pi'$ instance~$I'$ such that $I$ can be viewed as a sub-instance of~$I'$.
In particular, there exists an injective function mapping the universe $\mathcal{U}(I)$ into the universe $\mathcal{U}'(I')$, while preserving the topology of solutions (as given by \eqref{eq:SSP}).
We show that this relationship extends naturally to the lifted problems, so that the instance $\Phi(\textsc{Sat})$ can be embedded as a sub-instance of $\Phi(\Pi')$.

Based on this embedding, we show that any optimal strategy for a player in the game defined by $\Phi(\textsc{Sat})$ can be mimicked in the game defined by $\Phi(\Pi')$ using a ``dual-mimicking'' argument, as described in Subsection \ref{sec:selection-game}.
Since $\Phi(\textsc{Sat})$ is $\Sigma_k^p$-hard, it follows that $\Phi(\Pi')$ is also $\Sigma_k^p$-hard.

\section{Preliminaries}
\label{sec:prelim}

A \emph{language} is a set $L\subseteq \bin^*$.
A \emph{many-one-reduction} or \emph{Karp-reduction} from a language $L$ to a language $L'$ is a map $f : \bin^* \to \bin^*$ such that $w \in L$ iff $f(w) \in L'$ for all $w \in \bin^*$. 
A Turing machine $M$ together with an input $x$ of a decision problem $L$ is associated to a predicate $M(x)$ such that $M(x) = 1$ if $x \in L$, and $M(x) = 0$ if $x \notin L$.

A \emph{Boolean variable} $x$ is a variable that takes one of the values 0 or 1. Let $X = \fromto{x_1}{x_n}$ be a set of variables. The corresponding \emph{literal set} is given by $L = \fromto{x_1}{x_n} \cup \fromto{\overline x_1}{\overline x_n}$.
A \emph{clause} is a disjunction of literals. 
A Boolean formula is in \emph{conjunctive normal form} (CNF) if it is a conjunction of clauses.
It is in \emph{disjunctive normal form} (DNF), if it is a disjunction of conjunctions of literals.
We use the convention that a clause may be represented by a subset of the literals, and a CNF formula $\varphi$ may be represented by a set of clauses. 
For example, the set $\set{\set{x_1, \overline x_2},\set{x_2, x_3}}$ corresponds to the formula $(x_1 \lor \overline x_2)\land (x_2 \lor x_3)$.
We write $\varphi(X)$ to indicate that formula $\varphi$ depends only on $X$.
Sometimes we are interested in cases where the variables are partitioned. We indicate this case by writing $\varphi(X,Y,\ldots)$.
An \emph{assignment} of variables is a map $\alpha : X \to \bin$.
The evaluation of the formula $\varphi$ under assignment $\alpha$ is denoted by $\varphi(\alpha) \in \bin$.
We also denote a partial assignment $\alpha$ on $X$ of a formula with partitioned variables $\varphi(X, Y)$ with $\varphi(\alpha, Y)$.
We further denote $\alpha \vDash \varphi$ if $\alpha$ satisfies formula $\varphi$; that is, $\varphi(a)=1$.

The set $U$ will be our notation for a ground set that we refer to as the \emph{universe}.
A cost function on $U$ is a mapping  $c : U \to \R$.   For each subset $U' \subseteq U$, we define the cost of the subset $U'$ as $c(U') := \sum_{u \in U'} c(u)$.
For a map $f : A \to B$ and some subset $A' \subseteq A$, we define the image of the subset $A'$ as $f(A') = \set{f(a) : a \in A'}$. For some $B' \subseteq B$, we denote its preimage by $f^{-1}(B') = \set{a \in A : f(a) \in B'}$.
In this paper, it will be apparent whether a function is a cost function or not, and so the meaning of the notation will be clear from context.
If $A_1, \ldots, A_n$ is a collection of sets, we denote
$\widehat A_i =  \bigcup^i_{j=1} A_j.$

\section{The complexity class NP-S} 
\label{sec:npsandlop}

For problems in the class $\NP$, solutions are often described implicitly.  In this section, we consider an alternative in which the solutions are explicit. Here we define the class {\it $\NP$ with solutions}, or, in short, $\NPS$.
This class is an enhanced version of the well-known class $\NP$ by semantically adding solutions to the instances.
The class $\NPS$ is the basis for subsequent results in this paper. We will establish inherent relations between the $\NP$-completeness of problems and its variants in the polynomial hierarchy.
We will provide two alternative encodings of problems in the class $\NPS$. In one of the encodings, each problem in $\NPS$ is also a problem in $\NP$, thus showing that $\NPS$ may be viewed as a subset of $\NP$.

We first recall a standard definition of $\NP$ and then explain how to modify it to define $\NPS$.
A common perspective of the class $\NP$ is to view it as a collection of decision problems, replete with yes-instances and no-instances.
With that perspective, the complexity class $\NP$ (nondeterministic polynomial time) consists of all decision problems for which a solution -- also called a certificate or witness -- can be verified in polynomial time by a deterministic Turing machine.
Formally, $\NP$ is defined as a collection of languages. 

\begin{definition}[Nondeterministic Polynomial Time]
A language $ L \subseteq \{0,1\}^* $ is in $\NP$ if there exists a deterministic Turing machine $V$ that runs in polynomial time and a polynomial $p(n)$ such that for every string $I \in \{0,1\}^*$,
$$
    I \in L \iff \exists y \in \{0,1\}^* \text{ with } |y| \leq p(|I|) \text{ such that } V(I, y) = 1.
$$
\end{definition}

In this formulation, an encoding of solution $y$ is a binary vector of at most polynomial length.
If desired, certificates could be restricted to have length exactly $p(|I|)$; such details do not affect the class definition.
Furthermore, each solution encoding $y$ can be verified to be a correct solution by a pre-specified deterministic polynomial-time Turing machine $V$, which we also call the \enquote{verifier}.
W.l.o.g., we consider only verifiers that accept strings $y$ of length exactly $p(|I|)$.

We will soon formally define the class {\it Nondeterministic Polynomial Time with Solutions} ($\NPS$) by associating each instance with a set of solutions. 
First, the class $\NPS$ is a refinement of the class $\NP$ in which each certificate of an instance $I$ is represented as a subset of a \emph{universe} $\U(I)$ of \emph{universe elements}.
Without loss of generality, we can represent certificates as subsets of a universe rather than a string in $\{0,1\}^*$. 
For any binary vector $ y \in \{0,1\}^{p(|I|)} $, we can define the corresponding solution $S(y) = \{ u_j \in \U(I) : y_j = 1 \}$.
If we can verify that $y$ is a certificate in polynomial time using a deterministic Turing machine that verifies $V$, 
then we can also recognize subsets of $\U(I)$, and verify that $S(y)$ is a solution in polynomial time using the same Turing machine $V$ (or an easily modified version of $V$).
This subset-based formulation is useful for technical clarity.  It will also be useful for expressing transformations and reductions between problems.
Furthermore, we let $\sol(I)$ denote the set of solutions.
That is, 
$$
     \sol(I) = \{S(y) \subseteq \U(I): V(I, y) = 1\}.
$$
In each case, the question is ``Is $\sol(I) = \emptyset$?''

The universe is implicitly defined by the verifier $V$.
For each instance $I$, we can denote the universe as $\U_V(I)$.
Similarly, the set of solutions $\sol(I)$ of each instance $I$ is implicitly defined by the verifier $V$.
As such, we can denote it as $\sol_V(I)$. 

When describing a decision problem in the class $\NPS$, we will specify the universe and describe the set of solutions.
The following are two decision problems in $\NP$ as expressed as problems in $\NPS$. 

\begin{description}
    \item[]\textsc{Hamiltonian Cycle}\hfill\\
    \textbf{Instance  $I$:} Graph $G = (V, E)$\\
    \textbf{Universe  $\U(I)$:} Edge set $E$\\
    \textbf{Solution set $\sol(I)$:} The subsets of edges $S \subseteq E$ that are Hamiltonian cycles in $G$.
\end{description}

\begin{description}
    \item[]\textsc{Clique}\hfill\\
    \textbf{Instance $I$:} Graph $G = (V, E)$, number $k \in \N$.\\
    \textbf{Universe $\U(I)$:} Vertex set $V$\\
    \textbf{Solution set $\sol(I)$:} The subsets $S \subseteq V$ with $|S| \geq k$, such that $\forall i, j \in S \text{ with } i \neq j$, we have $\set{i, j} \in E$.
\end{description}

If a problem in $\NP$ can be expressed with two different universes, they are considered different problems in $\NPS$, even though they represent the same problem in $\NP$.
For example, one can express the clique problem in which the universe is the set of edges rather than the set of vertices.
This version of the clique problem is a different problem in $\NPS$ than the clique problem given above.
In the remainder of the paper, for purposes of clarity, we assume that the semantics of the universe are consistent for all instances of a given problem.
For example, when we consider an $\NPS$ variant of the clique problem, either the universe for every instance is a vertex subset or the universe for every instance is an edge subset.

We represent the \emph{lifting} of a language $L \in \NP$ to a problem $\Pi$ in $\NPS$ by associating a verifier $V$ to it.
We then define $\Pi = (L, \U_V, \sol_V)$ where $\sol_V$ is the solution set such that $V(I, y) = 1$ for all $S(y) \in \sol_V$ (and $V(I, y) = 0$ otherwise).
In order to abstract away from unnecessary details on the actual encoding, we denote the problem $\Pi$ by the triple $(\I, \U, \sol)$, where $\I = L$ is a collection of problem instances, $\U = \U_V$ is a collection of universes, and $\sol = \sol_V$ is a collection of solution sets.
We remark that we drop the subscript $V$ from $\U_V$ and $\sol_V$ when the verifier is known or obvious from the context. We do so only when the verifier is fixed throughout the discussion.
Furthermore, if there are two different verifiers $V$ and $V'$ such that $\U_V = \U_{V'}$ and $\sol_V = \sol_{V'}$, we consider the corresponding $(\I, \U_V, \sol_V)$ and $(\I, \U_{V'}, \sol_{V'})$ to be the same problem.

\begin{definition}[Non-deterministic Polynomial Time with Solutions]
\label{def:NP-S}
A problem $\Pi = (\I, \U, \sol)$ is in $\NPS$ if there is a deterministic polynomial time Turing machine $V$, such that
\begin{itemize}
    \item $\I \subseteq \set{0,1}^*$ is a language. We call $\I$ the set of instances of $\Pi$. 
    \item To each instance $I \in \I$, there is some set $\U(I)$ which we call the universe associated to the instance $I$ such that $|\U(I)| = |I|^{O(1)}$, where $|I|$ denotes the length of the input string. 
    \item To each instance $I \in \I$, there is some (potentially empty) set $\sol(I) \subseteq \powerset{\U(I)}$ that we call the solution set associated to the instance $I$.
    \item For each $C \subseteq \U(I)$, $V(I, C) = 1$ if and only if $C \in \sol(I)$.
\end{itemize}
\end{definition}

Although the solution set $\sol(I)$ is part of the problem representation, it is not part of the input for a problem in $\NPS$.
Rather, the verifier $ V $ encodes the constraints that determine $ \sol(I) $.
The question for a problem in $\NPS$ is: Given instance $I \in \I$, is $\sol(I) \neq \emptyset$?

\Cref{def:NP-S} is the definition that we will use throughout the paper.
We use this definition to abstract away the universe and solution set from its encodings.
This inherently makes it easier to argue about the properties of the problems and handle reductions between them in a more understandable way.
For this reason, when we represent a problem in $\NPS$, we will describe the universe and the solution set.
In this representation, $\NPS$ is an extension of the class $\NP$.

We also consider another representation  of problems in $\NPS$.
In this second representation, each problem instance is represented by the pair $(I, V)$, where $V$ is the polynomial time Turing machine.
Since we derive each problem $(\I, \U, \sol)$ from tuples $(\I, \U_V, \sol_V)$, where the verifier $V$ specifies the universe $\U_V$ and the solution set $\sol_V$, we can represent each problem by just associating the verifier $V$ to each of the instances. 
In this representation, the class $\NPS$ is also an extension of the class $\NP$.
However, based on this representation, we can exploit an aspect of Turing's creation of a universal Turing machine.
As part of this process, Turing showed how to associate a unique integer, the standard description number, with each Turing machine.
By embedding the standard description number into the instance, we are able to show the following lemma.

\begin{lemma}
\label{lem:subsetNP}
    Each problem $\Pi$ in $\NPS$ can be transformed into a language $L_\Pi$ in $\NP$.
\end{lemma}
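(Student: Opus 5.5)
The plan is to encode the verifier into every instance, following the second representation of $\NPS$ described just before the statement. Let $\Pi = (\I, \U, \sol)$ be in $\NPS$ with verifier $V$ as in \Cref{def:NP-S}, running in time bounded by a polynomial $q$. Since $|\U(I)| = |I|^{O(1)}$, fix a polynomial $p$ with $|\U(I)| \leq p(|I|)$, and fix an ordering $u_1, u_2, \ldots$ of each universe computable from $I$. This ordering is implicit in $V$, which receives subsets $C$ encoded as characteristic vectors.

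Next, define the language
\[
L_\Pi = \set{ \langle I, \langle V \rangle \rangle : I \in \I,\ \exists C \subseteq \U(I) \text{ with } V(I,C)=1 },
\]
where $\langle V\rangle$ is the standard description number of $V$, as in Turing's universal machine construction. The intended meaning is that an instance $I$ of $\Pi$ is a yes-instance exactly when $\langle I, \langle V \rangle\rangle \in L_\Pi$. The transformation $I \mapsto \langle I, \langle V\rangle\rangle$ is trivially polynomial, because $\langle V\rangle$ is a fixed constant-size string for the problem $\Pi$.

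To show $L_\Pi \in \NP$, the certificate is the characteristic vector $y \in \bin^{p(|I|)}$ of $C$, padded with zeros. The $\NP$ verifier parses the input and simulates $\langle V\rangle$ on $(I,y)$ using a universal Turing machine. It accepts iff the simulation accepts within $q(|I|)$ steps. It also rejects if the padding bits beyond $|\U(I)|$ are nonzero, which is determined by $V$'s own interpretation of $y$. Universal simulation incurs only polynomial overhead in the step count times the size of $\langle V\rangle$, which is constant here. Hence the verifier runs in polynomial time, and correctness follows directly from the last bullet of \Cref{def:NP-S}.

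The main obstacle is that a universal verifier cannot know the running-time polynomial from $\langle V\rangle$ alone. I would handle this by also embedding the clock bound into the instance, for example by including $q$'s coefficients, or equivalently a clocked version of $V$, in the encoded pair. A further subtlety is that membership in $\I$ need not be decidable in polynomial time. Here I would take the convention that the verifier rejects malformed inputs and that $V$ itself rejects whenever $I \notin \I$, so that $\I$-membership is absorbed into $V$.
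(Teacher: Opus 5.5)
Your proposal is correct and follows the paper's proof: both append the fixed description $\langle V\rangle$ to each instance, use a candidate solution (as a bit vector over the universe) as the certificate, and have the $\NP$ verifier check the input format and simulate $V$ with a universal machine. Your concern about the clock is unnecessary, because the verifier for $L_\Pi$ may depend on the fixed problem $\Pi$ and hence on $V$'s time bound. Your point that $V$ must reject non-instances $I\notin\I$, however, makes explicit an assumption the paper's proof uses without stating it.
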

\begin{proof}
    We construct a language $L_\Pi$ that is contained in $\NP$ by encoding each instance $I \in \I$, the associated universe $\U(I)$, and the associated solution set $\sol(I)$.
    
    For this, we construct a word $w$ for each instance $I$.
    First, we encode the instance $I \in \I \subseteq \{0,1\}^*$ directly by the existing encoding.
    Second, we encode the solution set $\sol(I) = \sol_V(I)$ implicitly by using the standard description number $\langle V \rangle$ of the verifying Turing machine $V$.
    This verifier $V$ only accepts certificates that correspond to a subset of universe elements in the solution set by definition.
    Thus, we also implicitly encode the universe $\U(I)$ with the standard description number $\langle V \rangle$ of verifier $V$.  For any fixed problem $\Pi$, and its corresponding verifier $V$, we have $|\langle V \rangle| = O(1)$.
    Now, we define the word $w = I\#\langle V \rangle$ associated to instance $I$, where $\#$ is a separator symbol, resulting in the language
    $$
        L_\Pi = \{I\#\langle V \rangle \mid I \in \I,\ \sol(I) \neq \emptyset\}
    $$
    as an encoding of the $\NPS$ problem $\Pi = (\I, \U, \sol)$.
    
    In order to prove that $L_\Pi \in \NP$, it suffices to show that there is a deterministic polynomial-time verifying Turing machine $V_\Pi$ for the language $L_\Pi$ such that
    $$
        w \in L_\Pi \iff \exists y \in \{0,1\}^{|\U(I)|} \ V_\Pi(w,y) = 1.
    $$
    In order to recognize the language $L_\Pi$, the machine $V_\Pi$ proceeds as follows. Given some proposed word $w\in\set{0,1}^\star$ and proposed certificate $y \in \set{0,1}^\star$,  $V_\Pi$ first checks whether $w$ is indeed of the form $I\#\langle V \rangle$.
    This is doable because by definition of $\NPS$ the problem $\Pi$ is associated with the fixed machine $V$, and hence $\langle V \rangle$ is a globally fixed string, that is independent of the instances $I$ and has length $O(1)$.
    Second, $V_\Pi$ simulates $V$ by using a universal Turing machine as sub-procedure on the given instance word $I$ and certificate $y$.
    $V_\Pi$ accepts the input if and only if $V(I,y) = 1$ for the simulated Turing machine.
    In other words, the certificate $y$ corresponds to a subset of universe elements that is in the solution set. 
    For a fixed instance $I$ there exists a certificate $y$ that makes $V_\Pi$ accept if and only if there exists a subset $S \subseteq \U(I)$ that makes $V$ accept, if and only if $w \in L_\Pi$.
    It follows that $V_\Pi$ is a polynomial-time verifier for $L_\Pi$. It follows that $L_\Pi \in \NP$.

    Note that if we have two different verifiers $V$ and $V'$ that define the same solution set $\sol_V = \sol_{V'}$, we are able to reduce $L_\Pi$ with $\sol_V$ to $L_\Pi$ with $\sol_{V'}$ by replacing the standard description number of $V$ by that of $V'$ in at most linear time.
\end{proof}

\paragraph*{Different Universes induce a different problem in $\NPS$}

In our framework, each problem instance is accompanied by a universe $\U(I)$, and solutions are subsets of $\U(I)$.
As mentioned in the introduction, if there are two distinct ways of expressing the universe, this leads to two different problems in $\NPS$.

For example, if we model a decision problem in $\NP$ as a 0-1 integer program (IP), then we can associate the universe $\U(I)$ to the decision variables of the IP.
That is, for a given vector $x_1, \ldots, x_n$ of binary integer variables, the universe can be represented as $\fromto{1}{n}$.
If a single problem can be modeled in two distinct ways as a 0-1 integer program using different sets of decision variables, then these two formulations will result in two different problems in $\NPS$.
An analogous remark is true if the decision problem is modeled in two distinct ways as a Boolean formula.
This distinction plays a critical role when we lift problems in $\NPS$ to problems in $\PH$ or $\PSPACE$.
If two extensions of a problem from $\NP$ to $\NPS$ have different universes, the lifted problems are different problems.
They may even have different complexities.

\textit{Example: Clique.}
The clique problem has two natural choices of a universe, as we illustrate next.
We first formulate {\sc Max-Clique} as an integer program.
Rather than a vertex-based formulation, it is a formulation that includes variables for edges and variables for vertices.
We let $x_i=1$ if vertex $i$ is in the clique.
Further, we let $x_{ij}$ be 1 if edge $(i, j)$ is in the clique.
The objective is to maximize the number of edges in the clique, but it would also be acceptable to maximize the number of vertices in the clique.
\begin{align*}
    \textsc{Max-Clique}\\
    \max \quad & \sum_{(i,j) \in E} x_{ij} && \\
    \text{s.t.} \quad & x_i + x_j \leq 1 && \forall \{i,j\} \notin E\\
        & x_{ij} \leq x_{i}  && \forall (i,j) \in E \\
        & x_{ij} \leq x_{j}  && \forall (i,j) \in E \\
        & x_{ij} \in \{0,1\} && \forall i,j \in E\\
        & x_{i} \in \{0,1\} && \forall i \in V
\end{align*}
If we use the variables $x_i$ as our universe, we obtain the problem \textsc{Clique (Vertex-based)} with $\U = V$.
If we use the variables $x_{ij}$, we obtain the problem \textsc{Clique (Edge-based)} with $\U = E$. 
It would also be possible to use the combination of $x_i$ and $x_{ij}$ as a universe $\U = V \cup E$.

\textit{Example: Satisfiability.}
To illustrate, consider the satisfiability problem for Boolean formulas.
The satisfiability problem is fundamental to theoretical computer science, and plays an important role in each of the following sections of this paper.
There are at least two natural ways to define the universe:

\textsc{Satisfiability (Variable-based)} (in short, {\sc Sat-V}).
The universe consists of variables $\U = \{x_1, \dots, x_n\}$.
A solution consists of the variables of a satisfying assignment set to true; i.e., if the truth assignment $\alpha$ is satisfying, then $S(\alpha) = \set{x_i: \alpha(x_i) = 1}$.

\textsc{Satisfiability (Literal-based)} (in short, {\sc Sat-L}).
The universe consists of literals, i.e., $ \U = \{\ell_1, \overline \ell_1, \dots, \ell_n, \overline \ell_n\} $.
A solution consists of the literals of a satisfying assignment; i.e., if the truth assignment $\alpha$ is satisfying, then $S(\alpha) = \set{\ell_i: \alpha(x_i) = 1} \cup \set{\overline \ell_i : \alpha(x_i) = 0}$.

Because their universes (and solution sets) differ, {\sc Sat-V} and {\sc Sat-L} are distinct problems in $\NPS$.

\paragraph*{Different verifiers may induce different problems in $\NPS$: Partition}

Let $\I$ be the instances of a language in $\NP$. 
Suppose that $V$ and $V'$ are two different verifiers for the pair of instances $\I$, and suppose that there is a common universe.  That is, $\U_V = \U_{V'}$.  Let $\U$ denote the common universe.  If $\sol_V \neq \sol_{V'}$, then $\Pi = (\I, \U, \sol_V)$ and $\Pi' = (\I, \U, \sol_{V'})$ in $\NPS$ are different problems in $\NPS$.  This distinction can be quite important, as illustrated by the following two variants of the Partition Problem.  

\begin{description}
    \item[]\textsc{Partition 1}\hfill\\
        \textbf{Instance $I$:} Numbers $\fromto{a_1}{a_n} \subseteq \N$.\\
        \textbf{Universe $\U(I)$:} $\fromto{a_1}{a_n}$.\\
        \textbf{Solution set $\sol_1(I)$:} The set of all subsets $S \subseteq \U(I)$ such that $\sum_{a_i \in S}a_i = \sum_{a_j \notin S}a_j$.
\end{description}

\begin{description}
    \item[]\textsc{Partition 2}\hfill\\
        \textbf{Instance $I$:} Numbers $\fromto{a_1}{a_n} \subseteq \N$.\\
        \textbf{Universe $\U(I)$:} $\fromto{a_1}{a_n}$.\\
        \textbf{Solution set $\sol_2(I)$:} The set of all subsets $S \subseteq \U(I)$ such that $\sum_{a_i \in S}a_i = \sum_{a_j \notin S}a_j$ and $a_n \in S$.
\end{description}

Note that whenever the answer is ``yes'' for {\sc Partition 1}, it is also ``yes'' for {\sc Partition 2}.
When viewed as descriptions of problems in \NP, they are the same problem.
The only difference between these two problems is that the solutions in $\sol_2(I)$ are required to contain the element $a_n$, whereas the solutions in $\sol_1(I)$ are not.
Therefore, when viewed as descriptions of the problems in $\NPS$, these two problems are different.

This distinction has an important consequence.
In the next subsection \ref{NP-S-completeness}, we define the concept of $\NPS$-completeness.
We will show in \Cref{sec:alternative-NPS-models} that {\sc Partition 2} is \NPS-complete, whereas {\sc Partition 1} is not \NPS-complete.

\subsection{\NPS-completeness}
\label{NP-S-completeness}
As a central question, we want to classify the hardest problems in the class $\NPS$.
But what do we mean by the hardest problems?

For this, recall that $\NP$-completeness is defined on the basis of the polynomial-time many-one reductions, which can be adapted to $\NPS$ problems as follows.

Given two problems in $\NPS$, say $\Pi = (\I,\U,\sol)$ and $\Pi' = (\I',\U',\sol')$.
Then a polynomial time many-one reduction from $\Pi$ to $\Pi'$ is a function $g : \I \to \I'$ computable in polynomial time in the input size $|I|$, such that $I$ is a yes-instance of $\Pi$ iff $g(I)$ is a yes-instance of $\Pi'$ (i.e. $\sol(I) \neq \emptyset \iff \sol'(g(I)) \neq \emptyset$).
The universe $\U$ and the solution set $\sol$ of $\Pi$ are mapped implicitly by $g$ to the corresponding universe $\U'$ and solution set $\sol'$ of $\Pi'$.

Our objective is to establish completeness proofs concerning multilevel optimization problems as well as other mappings from $\NPS$ to $\PH$ or even $\PSPACE$.
To accomplish this objective, we next define a special class of polynomial time many-one reductions that we refer to as ``solution-embedding''.
Solution-embedding reductions satisfy the following additional property, which allows us to upgrade reductions on the level of $\NPS$ to reductions between the lifted 
in $\PH$ or $\PSPACE$:
A solution-embedding reduction reduces a problem $\Pi$ to a problem $\Pi'$, such that all solutions to an instance of problem $\Pi$ are transformed to a partial solution in the instance of problem $\Pi'$ and, on the other hand, all solutions to an instance $\Pi'$, are transformed back to a solution in the instance of $\Pi$.
We further define a problem to be $\NPS$-complete if is contained in $\NPS$ and there is a polynomial time solution-embedding reduction from all problems in $\NPS$.
We will explore the implications of this concept in the subsequent sections.

\begin{definition}[Solution-embedding Reduction]
\label{def:se-reduction}
    Let $\Pi = (\I,\U,\sol)$ and $\Pi' = (\I',\U',\sol')$ be two problems in {\sf NP-S}.
    We say that there is a solution-embedding reduction from $\Pi$ to $\Pi'$, and write $\Pi \leqSE \Pi'$, if
    \begin{itemize}
        \item There exists a function $g : \I \to \I'$ computable in polynomial time in the input size $|I|$, such that $\sol(I) \neq \emptyset$ iff $\sol'(g(I)) \neq \emptyset$.
        \item There exist functions $(f_I)_{I \in \I}$ uniformly computable in polynomial time in $|I|$ such that for all instances $I \in \I$, $f_I : \U(I) \to \U'(g(I))$ is an injective function mapping from the universe of the instance $I$ to the universe of the instance $g(I)$ such that the following hold:
        \begin{flalign}
            \forall S \in \sol(I) : \ 
            \exists S' \in \sol'(g(I)) : \ &
            f_I(S) = S' \cap \Uembp \tag{P1} \label{eq:SE:forward}\\
            \forall S' \in \sol'(g(I)) : \ 
            \exists S \in \sol(I) : \ &
            f^{-1}_I(S' \cap \Uembp) = S \tag{P2} \label{eq:SE:backward},
        \end{flalign}
        where $\Uembp = f_I(\U(I))$
        and $\Uauxp = \U'(g(I)) \setminus \Uembp$.
    \end{itemize}
\end{definition}

A solution-embedding reduction embeds the universe of the first problem $\Pi$ into the universe of the second problem $\Pi'$, while leaving the solution structure untouched.
That is, there is a one-to-one correspondence on the solution level between a universe element $u \in \U(I)$ in $\Pi$ and the corresponding solution element $f(u)$ in $\Pi'$.
As a naming convention, we call $\Uembp = f_I(\U(I))$ the {\it embedded universe}, which consists of all the elements that already occurred in the instance of problem $\Pi$.
Furthermore, elements that were not part of the instance of $\Pi$ and thus were introduced to ensure that the one-to-one correspondence between the elements in $\U(I)$ and $\Uembp = f_I(\U(I))$ can be extended to a correct full solution for the instance of $\Pi'$ is called the {\it auxiliary universe} and is defined by $\Uauxp = \U' \setminus \Uembp$.
\Cref{fig:SE-reduction:universe-embedding} schematically depicts how the universes of two $\NPS$ problems are related under an SE reduction.

\begin{figure}[!ht]
    \centering
    \resizebox{0.67\textwidth}{!}{
        \begin{tikzpicture}
            \draw[] ($(-5,0)$) rectangle ($(-2,-2)$);
        	\node[] () at (-4.5,-1.7) {$\U(I)$};

            \draw[dashed] ($(2,0)$) rectangle ($(5,-2)$);
            \draw[] ($(0,0.5)$) rectangle ($(5.5,-3)$);
        	\node[] () at (3.35,-1.7) {$f_I(\U(I)) = \Uembp$};
        	\node[] () at (2,-2.7) {$\U'(g(I)) = \Uembp \dot\cup~\Uauxp$};

            \draw[->, thick] (-2,-0.7) -- (2,-0.7);
            \node[above] () at (-1,-0.7) {$f_I$};
        \end{tikzpicture}
    }
    \caption{
        The relation between the universes by applying an SE reduction between the problem $\Pi = (\I, \U, \sol)$ and $\Pi' = (\I', \U', \sol')$ for a given instance $I \in \I$.
        Let $I \in \I$ be that instance of $\Pi$, then the SE reduction $(g, (f_I)_{I \in \I})$ embeds the universe $\U(I)$ into the universe $\U'(g(I))$ of problem $\Pi'$ such that $f_I(\U(I)) = \Uembp \subseteq \U'(g(I))$.
        Note that $g(I)$ is the instance defined by the usual reduction mapping $g$.
        The function $f_I$ maintains a one-to-one correspondence between the elements of $\U(I)$ and $\Uembp$.
    }
    \label{fig:SE-reduction:universe-embedding}
\end{figure}

One can express \Cref{eq:SE:forward} and \Cref{eq:SE:backward} together equivalently by using the single equation
\begin{equation}
    \set{f_I(S) : S \in \sol(I) } = \set{S' \cap \Uembp : S' \in  \sol'(g(I))}. \tag{SE Property}\label{eq:SE}
\end{equation}

We call this \Cref{eq:SE} or equivalently \ref{eq:SE:forward} together with \ref{eq:SE:backward} the \emph{solution-embedding property}, or, in short, {\it SE property}.\\

Definition \ref{def:se-reduction} describes SE reductions in terms of solutions of $\Pi$ and $\Pi'$, but it could have been expressed in terms of partial solutions.

Let $I$ be an instance of problem $\Pi = (\I,\U,\sol) \in \NPS$.
We say that $S$ is a \emph{partial solution} of $\sol$ with respect to $T$ if (1) $S \subseteq T \subseteq \U(I)$, and (2) there is a subset $S' \subseteq \U(I) \setminus T$ such that $S \cup S' \in \sol(I)$.

\begin{lemma}
\label{lem:se}
    Let $\Pi = (\I,\U,\sol)$ and $\Pi' = (\I',\U',\sol')$ be two problems in {\sf NP-S}, and let $(g, f)$ be a solution-embedding reduction from $\Pi$ to $\Pi'$.
    Then for each instance $I \in \I$, and for all subsets $S \subseteq T \subseteq \U(I)$, the following is true.
    Subset $S$ is a partial solution of $\sol(I)$ with respect to $T$ if and only if $f(S)$ is a partial solution of $\sol'(I')$ with respect to $f(T)$.
\end{lemma}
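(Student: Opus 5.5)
The plan is to prove both directions directly from the two conditions \eqref{eq:SE:forward} and \eqref{eq:SE:backward} of \Cref{def:se-reduction}, writing $f = f_I$, $I' = g(I)$ and $\Uembp = f(\U(I))$. Everything rests on two elementary consequences of injectivity of $f$: for all $A, B \subseteq \U(I)$ we have $f(A \cap B) = f(A) \cap f(B)$, and $f^{-1}(f(A)) = A$. I will also use that $f(T) \subseteq \Uembp$, so that $X \cap f(T) = (X \cap \Uembp) \cap f(T)$ for every $X \subseteq \U'(I')$.

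($\Rightarrow$) Suppose $S$ is a partial solution of $\sol(I)$ with respect to $T$.
\begin{enumerate}
\item Pick $R \subseteq \U(I) \setminus T$ with $S \cup R \in \sol(I)$.
\item By \eqref{eq:SE:forward}, there is $S' \in \sol'(I')$ with $S' \cap \Uembp = f(S \cup R)$.
\item Since $f(T) \subseteq \Uembp$, injectivity gives
\[
S' \cap f(T) = f(S \cup R) \cap f(T) = f\bigl((S \cup R) \cap T\bigr) = f(S),
\]
where the last step uses $S \subseteq T$ and $R \cap T = \emptyset$.
\item Set $R' := S' \setminus f(T)$. Then $R' \subseteq \U'(I') \setminus f(T)$ and $f(S) \cup R' = S' \in \sol'(I')$.
\item Finally, $f(S) \subseteq f(T)$ holds trivially. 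Hence $f(S)$ is a partial solution of $\sol'(I')$ with respect to $f(T)$.
\end{enumerate}

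($\Leftarrow$) Suppose $f(S)$ is a partial solution of $\sol'(I')$ with respect to $f(T)$.
\begin{enumerate}
\item Pick $R' \subseteq \U'(I') \setminus f(T)$ with $S' := f(S) \cup R' \in \sol'(I')$.
\item By \eqref{eq:SE:backward}, there is $S_0 \in \sol(I)$ with $S_0 = f^{-1}(S' \cap \Uembp)$, so $f(S_0) = S' \cap \Uembp$.
\item Intersecting with $f(T)$ and using $R' \cap f(T) = \emptyset$ gives
\[
f(S_0 \cap T) = f(S_0) \cap f(T) = S' \cap f(T) = f(S).
\]
\item Injectivity then yields $S_0 \cap T = S$.
\item Put $R := S_0 \setminus T \subseteq \U(I) \setminus T$. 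Then $S \cup R = S_0 \in \sol(I)$.
\item Since $S \subseteq T$ is part of the hypothesis of the lemma, $S$ is a partial solution of $\sol(I)$ with respect to $T$.
\end{enumerate}

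The main point needing care is the passage between $S' \cap \Uembp$ and $S' \cap f(T)$. This is where injectivity of $f$ and the disjointness of the completion from $T$ (respectively from $f(T)$) are genuinely used. The rest is bookkeeping.
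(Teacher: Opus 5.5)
Your proof is correct and follows essentially the same route as the paper's: in each direction you complete the partial solution, apply \eqref{eq:SE:forward} or \eqref{eq:SE:backward} to the completed solution, and use injectivity of $f$ to carry over the split into the part inside $T$ and the completion outside $T$. Your write-up is slightly more careful than the paper's. In the backward direction the paper only takes the embedded part of the completion outside $f(S)$ rather than outside $f(T)$, and concludes a partial solution with respect to $\U(I)$ instead of $T$; your explicit check that $S_0 \cap T = S$ settles exactly that point.
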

\begin{proof}
    Suppose first that $S$ is a partial solution of $I$ with respect to $T$.
    Let $S^* \in \U(I)\setminus T$ be selected so that $S \cup S^* \in \sol(I)$.
    Because the reduction is solution-embedding, there is some subset $V \subseteq \U'_{aux}(I)$ such that $f(S \cup S^*) \cup V \in \sol'(I')$.
    Equivalently, $f(S) \cup (f(S^*) \cup V) \in \sol'(I')$, and thus $f(S)$ is a partial solution of $I'$ with respect to $f(T)$.

    Suppose instead that $f(S)$ is a partial solution of $I'$ with respect to $f(T)$.
    Let $V_1 \subseteq f(\U(I))\setminus f(S)$, and $V_2 \subseteq \U'_{aux}(I')$ be selected so that $f(S) \cup V_1 \cup V_2 \in \sol'(I')$.
    If follows that $f^{-1} (f(S) \cup V_1) \in \sol(I)$.
    Equivalently,  $S \cup f^{-1} (V_1) \in \sol(I)$, and thus $S$ is a partial solution of $I$ with respect to $\U(I)$.
\end{proof}

As we have already shown that $\NPS$ is a subset of $\NP$, we can also naturally expand the concept of completeness from $\NPS$ to $\NP$.

\begin{lemma}
    Every $\NPS$-complete problem is $\NP$-complete.
\end{lemma}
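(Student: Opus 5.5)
The plan has two parts: membership in $\NP$ and $\NP$-hardness. Membership follows directly from \Cref{lem:subsetNP}. An $\NPS$-complete problem $\Pi = (\I,\U,\sol)$ lies in $\NPS$ by definition, so the lemma gives an associated language $L_\Pi \in \NP$. I will identify $\Pi$ with $L_\Pi$, equivalently with the language $\set{I \in \I : \sol(I) \neq \emptyset}$. The two differ only by appending the constant-length string $\#\langle V \rangle$, which is computable in both directions in linear time.

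For hardness, I fix an arbitrary $L \in \NP$ with verifier $V$ and polynomial $p$. I lift it to an $\NPS$ problem $\Pi_L = (L', \U_V, \sol_V)$, where $L'$ is the set of all strings in $\bin^*$ (the instance set), $\U_V(I) = \fromto{u_1}{u_{p(|I|)}}$, and $\sol_V(I) = \set{S(y) : V(I,y)=1}$. I then check the conditions of \Cref{def:NP-S}. The universe has polynomial size. Membership of a subset $C$ in $\sol_V(I)$ is decidable in polynomial time by converting $C$ to its characteristic vector $y$ and running $V(I,y)$. Moreover, $I \in L$ if and only if $\sol_V(I) \neq \emptyset$, using the convention that verifiers only accept certificates of length exactly $p(|I|)$.

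Since $\Pi$ is $\NPS$-complete, there is an SE reduction $(g,(f_I))$ from $\Pi_L$ to $\Pi$. Its first component $g$ is computable in polynomial time and satisfies $\sol_V(I) \neq \emptyset \iff \sol(g(I)) \neq \emptyset$. Hence $I \in L \iff g(I)$ is a yes-instance of $\Pi$, so $g$ is a Karp reduction from $L$ to $\Pi$. Composing with the trivial map $J \mapsto J\#\langle V_\Pi\rangle$ yields a reduction to $L_\Pi$. The embeddings $f_I$ are simply discarded in this argument. Because $L$ was arbitrary, $\Pi$ is $\NP$-hard, and together with membership it is $\NP$-complete.

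The main obstacle is a formal bookkeeping point rather than a genuine difficulty. I must make sure that every $\NP$ language really does arise as an $\NPS$ problem, so that the universal quantifier in the definition of $\NPS$-completeness applies to it. This requires that the instance set of the lifted problem be all of $\bin^*$ (or at least all syntactically valid instances), with non-members appearing as instances whose solution set is empty. I also need to confirm that the identification of $\Pi$ with a language via \Cref{lem:subsetNP} is consistent with the yes/no semantics used in the definition of many-one reductions.
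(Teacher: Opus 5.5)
Your proof is correct and follows the paper's route: membership via \Cref{lem:subsetNP}, and hardness by using only the instance map $g$ of the SE reduction (discarding the embeddings $f_I$), with the constant string $\#\langle V\rangle$ appended, as a Karp reduction. You also make explicit a step the paper leaves implicit: every $\NP$ language lifts to an $\NPS$ problem whose instance set is all of $\{0,1\}^*$, so $\NPS$-completeness really does yield reductions from all of $\NP$.
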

\begin{proof}
    By \Cref{lem:subsetNP}, we know that an $\NPS$ problem $\Pi$ can be encoded such that the corresponding language $L_\Pi$ is in $\NP$.
    It remains to show that a solution-embedding reduction from $\NPS$ problem $\Pi = (\I, \U, \sol)$ to $\NPS$ problem $\Pi' = (\I', \U', \sol')$ also yields a reduction between the corresponding languages $L_\Pi = \{I\#\langle V \rangle \mid I \in \I, \sol(I) \neq \emptyset\}$ and $L_\Pi' = \{I'\#\langle V' \rangle \mid I' \in \I', \sol'(I') \neq \emptyset\}$.
    Let $(g, (f_I)_{I \in \I})$ be such an SE reduction from $\Pi$ to $\Pi'$.
    Now, one can set $I' = g(I)$. We have by the first property of SE reductions that $\sol(I) \neq \emptyset$ iff $\sol(I') \neq \emptyset$.
    The standard description numbers of the verifiers $\langle V \rangle$ and $\langle V' \rangle$ are fixed numbers that exist by definition of $\NPS$.  Since $\Pi$ is a given problem in $\NPS$, these are computable in $O(1)$ time.

    Observe that we do not need the function family $f_I$ to show $\NP$-completeness.
    Nevertheless, the function $f_I$ is implicitly used in the mapping from the verifier $V$ to the verifier $V'$.
    More precisely, a certificate $y$ that is accepted by $V$ is a partial certificate for $V'$.
    That is, each certificate $y \in \{0,1\}^{|\U(I)|}$ that is accepted by $V$ is extendable by a word $z \in \{0,1\}^{|\U'(g(I))| - |\U(I)|}$ to a certificate $y' = yz \in \{0,1\}^{|\U'(g(I))|}$ that is accepted by $V'$.
\end{proof}

As another nice property of the class $\NPS$, we can show that corresponding satisfiability versions are $\NPS$-complete by reusing the proof of the Cook-Levin Theorem.
Their construction allows us to identify a set of Boolean variables that correspond to the certificates of the corresponding instance of the problem at hand.
Thus, we are able to show that all $\NPS$ problems are indeed reducible to satisfiability (with literal or variable universe) by a solution-embedding reduction.

\begin{theorem}[Cook-Levin Theorem Adapted to $\NPS$]
\label{thm:cook-levin}
 \textsc{Satisfiability} (with literal universe or variable universe) is $\NPS$-complete with respect to polynomial-time SE reductions.
\end{theorem}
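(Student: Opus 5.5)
We prove membership and hardness separately, treating \textsc{Sat-V} first and obtaining \textsc{Sat-L} from it.

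\textbf{Membership.} Both variants lie in $\NPS$. The universe has size linear in the formula. The verifier checks that a subset of variables, or of literals, defines an assignment and that this assignment satisfies every clause. For \textsc{Sat-L} it additionally checks that exactly one of $\ell_i,\overline\ell_i$ is chosen for each $i$.

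\textbf{Hardness via Cook--Levin.} Let $\Pi=(\I,\U,\sol)\in\NPS$ with polynomial-time verifier $V$. By \Cref{def:NP-S}, $V$ accepts exactly those $C\subseteq\U(I)$ with $C\in\sol(I)$. We encode $C$ as a bit string $y\in\bin^{|\U(I)|}$ by fixing a canonical polynomial-time enumeration $u_1,\dots,u_m$ of $\U(I)$ and setting $y_j=1$ iff $u_j\in C$.

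Given $I$, we build the standard Cook--Levin tableau formula $\varphi_I$ for the run of $V$ on input $(I,y)$, with a polynomial time bound $p(|I|)$. Its variables are of three kinds: cell variables, head variables and state variables, indexed by time step and position. The part of the initial configuration fixed by $I$ is hard-wired by unit clauses. The $m$ certificate cells get free Boolean variables $x_1,\dots,x_m$, where $x_j$ is true iff that cell holds $1$; we may assume the certificate alphabet is binary by first normalising $V$. Finally we add clauses forcing the accepting state to be reached. We set $g(I)=\varphi_I$ and $f_I(u_j)=x_j$, which is injective and polynomial-time computable.

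\textbf{SE property.} Since $V$ is deterministic, every choice of $y$ determines exactly one run. Hence every assignment to $x_1,\dots,x_m$ extends to exactly one assignment of the auxiliary variables that encodes a consistent tableau, and $\varphi_I$ is satisfied iff that run accepts.

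(P1) Let $S\in\sol(I)$. Its characteristic vector makes $V$ accept, so the induced run gives a satisfying assignment whose true variables, intersected with $\Uembp$, are exactly $f_I(S)$.

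(P2) Let $S'$ be any satisfying assignment. Its auxiliary part must encode the unique run on $y$, where $y$ is read off from $S'\cap\Uembp$, and that run accepts. So $f_I^{-1}(S'\cap\Uembp)\in\sol(I)$. The equivalence $\sol(I)\neq\emptyset\iff\varphi_I$ satisfiable follows from (P1) and (P2).

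\textbf{Main obstacle.} The critical step is (P2): the formula must admit no spurious satisfying assignments, meaning every assignment of the auxiliary variables consistent with a given $y$ encodes the genuine run. This requires the encoding to include full uniqueness constraints: exactly one symbol per cell, exactly one state and head position per step, and frame axioms for cells away from the head. With these in place, determinism yields a bijection between certificates and satisfying assignments restricted to $\Uembp$. Without them, a satisfying assignment could simulate a non-run, and its embedded part need not be a real solution.

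\textbf{The literal universe.} For \textsc{Sat-L} we use the same formula and map $u_j$ to the positive literal $x_j$. Every satisfying assignment, seen as a literal set, contains $x_j$ iff $x_j$ is true, so intersecting with $\{x_1,\dots,x_m\}$ gives the same correspondence as before. Alternatively, it suffices to give a trivial SE reduction from \textsc{Sat-V} to \textsc{Sat-L} (identity on formulas, $x_i\mapsto\ell_i$) and use that SE reductions compose.
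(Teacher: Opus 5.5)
Your proposal is correct and follows essentially the same route as the paper. In both, Cook's tableau formula for the verifier $V$ has the instance hard-wired and serves as $g(I)$, and $f_I$ sends the $j$-th universe element to the variable holding the $j$-th certificate bit. Both (P1) and (P2) then reduce to the fact that, once the certificate bits are fixed, the formula is satisfiable exactly when $V$ accepts. Your explicit membership check is a sound addition, and so is your handling of \textsc{Sat-L}, either via positive literals or via the trivial reduction from \textsc{Sat-V}; the paper only calls that case ``analogous''. Your remark that determinism makes each extension unique is stronger than (P2) needs, which is only the soundness of the encoding, but it does no harm.
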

\begin{proof}    
    We consider the original proof by Cook and show that it is actually a polynomial-time SE reduction.
    Let $\Pi = (\I, \U, \sol)$ be an arbitrary problem in $\NPS$ with universe $\U(I)$ and solution set $\sol(I)$ associated to each instance $I \in \I$ of $\Pi$.
    We have to show that $\Pi \leqSE \textsc{Satisfiability}$. 
    In the following, we show the completeness for \textsc{Sat-V} -- the proof for the completeness of \textsc{Sat-L} is analogous by adapting the universe accordingly.
    Recall that $\textsc{Sat-V} = (\I', \U', \sol')$, where $\I'$ is the set of \textsc{Sat}-instances, and for each formula $\varphi \in \I'$, the set $\U'(\varphi)$ is its variable set, and $\sol(\varphi)$ is the set of subets of variables that when set to true correspond to satisfying assignments.

    Since $\Pi$ is in $\NPS$, due to \cref{def:NP-S}, there exists a deterministic polynomial-time Turing machine $V$, called the verifier, such that for each $X \subseteq \U(I)$, $V(I, X) = 1$ if and only if $X \in \sol(I)$. 
    More precisely, we can assume that this Turing machine $V$ receives as input $n$ binary variables $x_1,\dots,x_n$ encoding the set $X \subseteq \U(I)$ and $n$ binary variables $i_1,\dots,i_n$ describing the instance $i$. (For simplicity, we assume w.l.o.g.\ that the instance and universe description have the same size).
     
    Now, the proof of Cook implies that we can introduce polynomially many auxiliary variables $z_1,\dots,z_{p(n)}$ and compute in polynomial time a Boolean expression $\varphi(x,i,z)$ such that $\varphi(x,i,z)$ captures exactly the behavior of the Turing machine $V$. 
    Let $u_j$ the $j$-th element of universe $\U(I)$. The main idea is that the solution-embedding function will be given by $f(u_j) := x_j$ for all $j \in \{1,\ldots,n\}$, that is, we map the $j$-th universe element of $\U(I)$ to the variable that represents its non-deterministic {\sc Sat} certificate inside Cook's construction.
    It is not important to re-iterate all details of Cook's construction. Instead it suffices to note the following key details.
    
    \begin{itemize}
        \item   For a given instance $I$, and some set $X \subseteq \U(I)$, let $i_I$ denote the encoding of $I$ and $x_X$ denote the encoding of $X$.
        Let the $z$-variables be free.
        Then the partial formula `$\varphi(x_X,i_I,z)$' is satisfiable if and only if $V$ accepts $(I,X)$.
        \item  The partial formula `$\varphi(x,i_I,z)$' has size polynomial in $|I|$ and can be constructed in polynomial time from $I$.
    \end{itemize}
    
    We now claim that this reduction by Cook immediately yields a polynomial-time SE reduction $(g, (f_I)_{I \in \I})$.
    Formally, we let $g(I) := \varphi(x,i_I,z)$ where $i_I$ is the encoding of $i_I$ and $x,z$ are free variables.
    Note that by the properties of Cook's reduction, $I$ is a Yes-instance of $\Pi$ if and only if $\exists x,z \ \varphi(x, i_I, z)$ is satisfiable.
    Hence this is a correct many-to-one reduction.
    For the SE property, we define $f_I(u_j) := x_j$ where $u_j$ is the variable representing the $j$-th universe element of $\U(I)$ for $j \in \fromto{1}{n}$ and $x_j$ is the binary variable representing it.
    It now follows from the above equivalence that this is an SE reduction:
    If $S \in \sol(I)$, then $V$ accepts $(I, S)$ after $|I|^{O(1)}$ steps, and for the corresponding assignment $f_I(S)$ it holds that it can be completed to a satisfying assignment of $\varphi$.
    On the other hand, we have $\Uembp = \set{x_1,\dots,x_n}$ and every satisfying assignment $S' \in \sol(\varphi)$ restricted to the positive literal set $\Uembp$ encodes a set $S = f_I^{-1}(S' \cap \Uembp)$ such that $S \in \sol(I)$.
    This proves the SE property and hence $\Pi \leqSE \textsc{Satisfiability}$.
\end{proof}

As the last important property, we can also show transitivity for SE reductions.
Thus, we are in the comfortable situation that we can extend reduction chains by only considering the two problems at hand and not go all the way back to a satisfiability problem.

\begin{lemma}
\label{lem:SE-transitive}
    SE reductions are transitive, i.e. for $\NPS$ problems $\Pi_1, \Pi_2, \Pi_3$ with $\Pi_1 \leqSE \Pi_2$ and $\Pi_2 \leqSE \Pi_3$, it holds that $\Pi_1 \leqSE \Pi_3$.
\end{lemma}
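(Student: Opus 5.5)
The plan is to compose the two reductions componentwise. Let $(g_1, (f^1_I)_{I})$ witness $\Pi_1 \leqSE \Pi_2$ and $(g_2, (f^2_J)_{J})$ witness $\Pi_2 \leqSE \Pi_3$. Define $g := g_2 \circ g_1$ and, for each instance $I$ of $\Pi_1$, $f_I := f^2_{g_1(I)} \circ f^1_I : \U_1(I) \to \U_3(g_2(g_1(I)))$.

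The routine parts come first. Since $|g_1(I)| = |I|^{O(1)}$, both $g$ and the family $(f_I)$ are uniformly polynomial-time computable, because compositions of polynomials are polynomials. Injectivity of $f_I$ follows because a composition of injective maps is injective. The yes/no equivalence $\sol_1(I)\neq\emptyset \iff \sol_3(g(I))\neq\emptyset$ is obtained by chaining the two equivalences.

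The main step is the SE property in its set form \eqref{eq:SE}. Write $J = g_1(I)$, $K = g_2(J)$, $E_2 = f^1_I(\U_1(I))$ and $E_3 = f^2_J(\U_2(J))$. The image of the composite is $f_I(\U_1(I)) = f^2_J(E_2) \subseteq E_3$. The key identity is that for any $S'' \subseteq \U_3(K)$,
\[
S'' \cap f^2_J(E_2) = f^2_J\bigl((f^2_J)^{-1}(S'' \cap E_3) \cap E_2\bigr).
\]
This holds by injectivity of $f^2_J$, since intersecting with $E_3$ first and then with the image of $E_2$ is the same as intersecting with $f^2_J(E_2)$ directly. Using the SE property of the second reduction,
\[
\{S''\cap E_3 : S''\in\sol_3(K)\} = \{f^2_J(S') : S'\in\sol_2(J)\},
\]
the set $\{S''\cap f_I(\U_1(I)) : S''\in\sol_3(K)\}$ equals $\{f^2_J(S'\cap E_2) : S'\in\sol_2(J)\}$. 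The SE property of the first reduction gives $\{S'\cap E_2 : S'\in\sol_2(J)\}=\{f^1_I(S): S\in\sol_1(I)\}$. Applying $f^2_J$ therefore yields $\{f_I(S) : S\in\sol_1(I)\}$, which is exactly \eqref{eq:SE} for the composite.

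The only point needing care, and the expected obstacle, is the identity relating intersection with the image of a sub-universe to intersection followed by preimage. I would verify it elementwise, where injectivity of $f^2_J$ is essential: an element $f^2_J(u)$ lies in $S''$ iff $u$ lies in the preimage $(f^2_J)^{-1}(S''\cap E_3)$. Alternatively, I would check \eqref{eq:SE:forward} and \eqref{eq:SE:backward} separately by chasing witnesses $S \mapsto S' \mapsto S''$ forward and $S''\mapsto S'\mapsto S$ backward, applying the same identity once in each direction.
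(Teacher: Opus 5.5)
Your proof is correct and follows the paper's argument. Like the paper, you compose $g_2\circ g_1$ with $f^2_{g_1(I)}\circ f^1_I$ and chain the two set-form SE identities; where the paper uses $f(A\cap B)=f(A)\cap f(B)$ for injective $f$, you use an equivalent preimage identity. (Minor remark: $f(f^{-1}(B)\cap A)=B\cap f(A)$ holds for any map, so injectivity is not needed there; it is needed when you substitute $S''\cap E_3=f^2_J(S')$ and use $(f^2_J)^{-1}(f^2_J(S'))=S'$.)
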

\begin{proof}
Consider for $i = 1,2,3$ the three $\NPS$ problems $\Pi_i = (\I^i, \U^i, \sol^i)$.
There is an SE reduction $(g_1,f_1)$ from $\Pi^1$ to $\Pi^2$ and an SE reduction $(g_2, f_2)$ from $\Pi^2$ to $\Pi^3$.
We describe an SE reduction from $\Pi^1$ to $\Pi^3$.
For this, we require a tuple $(g, (f_I)_{I \in \I})$.

For the first function $g$, we set $g := g_2 \circ g_1$.
This suffices since $\sol^1(I) \neq \emptyset \Leftrightarrow \sol^2(g_1(I)) \neq \emptyset \Leftrightarrow \sol^3((g_2 \circ g_1)(I)) \neq \emptyset$ and $g$ is polynomial-time computable.

Let $I^1 := I$ be the initial instance of $\Pi^1$, $I^2 := g_1(I^1)$ be the instance of $\Pi^2$ and $I^3 := g_2(I^2)$ be the instance of $\Pi^3$.
For the second function $f_I$, we define for each instance $I \in \I_1$ the map $f_I := (f_2)_{I^2} \circ (f_1)_{I}$.
Observe that for each instance $I \in \I^1$, the function $f_I$ is injective and maps to $\U^3_{emb}$ and is polynomial-time computable.
It remains to show that $f$ has the desired SE property.
Since $\Pi^1 \leqSE \Pi^2$ and $\Pi^2 \leqSE \Pi^3$ and since (for injective functions) $f(A \cap B) = f(A) \cap f(B)$, we have
\begin{align*}
    \set{f(S^1) : S^1 \in \sol^1} = &\ \set{f_2(f_1(S^1)) : S^1 \in \sol^1}\\
    = &\ \set{f_2(S^2 \cap \U^2_{emb}) : S^2 \in \sol^2}\\
    = &\ \set{f_2(S^2) \cap \U^3_{emb} : S^2 \in \sol^2}\\
    = &\ \set{S^3 \cap \U^3_{emb} : S^3 \in \sol^3}.
\end{align*}
\end{proof}

We now return to the problem versions of {\sc Partition}.
Under solution-embedding reductions, the problem {\sc Partition 1} ($a_n \in S$ or $a_n \notin S$) is not $\NPS$-complete, while {\sc Partition 2} ($a_n \in S$) is $\NPS$-complete.
The reason can be summarized as follows.
There are instances of {\sc Satisfiability} such that the solution set is not \emph{solution-symmetric};
that is, there is a solution $S \in \sol$ such that $\U \setminus S \notin \sol$.
Any SE reduction from {\sc Satisfiability} will result in an instance that is not solution-symmetric.
However, {\sc Partition 1} is solution-symmetric.
If $S \in \sol$, then $\U \setminus S \in \sol$.
Accordingly, there is no SE reduction from {\sc Satisfiability} to {\sc Partition 1}, and {\sc Partition 1} cannot be \NPS-complete. 

For further discussions on using different universes and solution sets for the same problem, we guide the reader to \Cref{sec:alternative-NPS-models}, in which various consequences are discussed in more detail.

\paragraph*{Weak and strong $\NPS$-completeness}  

The properties of strong and weak $\NPS$-completeness are directly inherited from the properties of strong and weak $\NP$-completeness.
An $\NPS$-complete problem $\Pi$ is called \emph{weakly $\NPS$-complete} if its embedding in $\NP$ is weakly $\NP$-complete.
For example, the $\NPS$ version of the Knapsack Problem is weakly $\NPS$-complete.
Similarly, an $\NPS$-complete problem $\Pi$ is called \emph{strongly $\NPS$-complete} if its embedding in $\NP$ is strongly $\NP$-complete.
For example, the $\NPS$ version of {\sc Traveling Salesman} is strongly $\NPS$-complete.

\section{The Polynomial Time Hierarchy}
\label{sec:ptimehierarchy}

There are several ways to define the \emph{polynomial time hierarchy ($\PH$)}.
One of the most common ways to define the polynomial hierarchy is by using alternating quantifiers.

\begin{definition}[Polynomial Time Hierarchy ($\PH$)]
A language $L$ is in $\Sigma_k^p$ if there exists a polynomial $p$ and a polynomial-time computable predicate $R$ such that:
$$
    x \in L \iff \exists y_1 \forall y_2 \exists y_3 \cdots Q_k y_k \; R(x, y_1, \ldots, y_k),
$$
where each $y_i$ ranges over $\{0,1\}^{p(|x|)}$, and $Q_k$ is an existential quantifier if $k$ is odd and a universal quantifier if $k$ is even.
The class $\Pi_k^p$ is defined similarly, with the order of quantifiers starting with a universal quantifier:
$$
    x \in L \iff \forall y_1 \exists y_2 \forall y_3 \cdots Q_k y_k \; R(x, y_1, \ldots, y_k).
$$
The polynomial hierarchy consists of the union of $\Sigma_k^p$ and $\Pi_k^p$ for all $k \in \N$, i.e.
$$
    \PH = \bigcup_{k \in \N} (\Sigma_k^p \cup \Pi_k^p).
$$
\end{definition}

This definition highlights the logic-based nature of $\PH$: $\Sigma_1^p$ corresponds to existential statements ($\NP$), $\Sigma_2^p$ corresponds to existential-universal statements, and so on.

As with the classes $\NP$ and $\sf coNP$, we can define notions of \emph{hardness} and \emph{completeness} for each level of the polynomial hierarchy.
A problem $L$ is said to be \emph{$\Sigma_k^p$-hard} if every problem in $\Sigma_k^p$ can be reduced to $L$ by a polynomial-time many-one reduction.
If, in addition, $L$ belongs to $\Sigma_k^p$, then $L$ is called \emph{$\Sigma_k^p$-complete}.
Similarly, a problem is \emph{$\Pi_k^p$-hard} if every problem in $\Pi_k^p$ reduces to it, and it is \emph{$\Pi_k^p$-complete} if it is also in $\Pi_k^p$.
Completeness results at higher levels of the hierarchy imply strong intractability.
In particular, if a $\Sigma_k^p$-complete problem were solvable in polynomial time, then the entire hierarchy would collapse to level $k$.
Thus, completeness within $\PH$ is strong evidence of computational hardness beyond $\NP$.

An introduction to the polynomial hierarchy and the classes $\Sigma^p_k$ can be found in the book by Papadimitriou \cite{DBLP:books/daglib/0072413}, the book by Sipser \cite{DBLP:books/daglib/0086373} or in the article by Jeroslow \cite{DBLP:journals/mp/Jeroslow85}.
An introduction specifically in the context of bilevel optimization can be found in the article of Woeginger \cite{DBLP:journals/4or/Woeginger21}.

\subsection{A Game-Theoretic Interpretation of the Polynomial Hierarchy}

The alternating quantifier definition of the polynomial hierarchy lends itself naturally to a game-theoretic interpretation.
We can view membership in a language in $\Sigma_k^p$ or $\Pi_k^p$ as the existence of a winning strategy in a two-player game between an \emph{$\exists$-player} (whom we will refer to as Alice) and a \emph{$\forall$-player} (whom we will refer to as Bob).
In the $\Sigma_k^p$ case, Alice goes first, and alternates moves with Bob until $k$ moves are completed.

We first adjust the standard alternating quantifier framework in several ways.
This modified approach can be reduced to the standard approach.
For each instance $I$, we assume that there is a universal set $\U(I)$ such that $|\U(I)| = p(|I|)$ for some polynomial $p$, and that the $i$-th move consists of the player selecting a subset $S_i \subseteq \U(I)$.
Second, there are polynomial-time computable predicates $R_1, R_2, ..., R_k$.
The i-th move $S_i$ is \emph{allowable} if $R_i(I, S_1, \ldots S_i) = 1$; otherwise, it is \emph{forbidden}.   
Alice wins the game if every move is allowable, and if 
$R(I, S_1, \ldots, S_k) = 1$.  Otherwise, Bob wins the game.  

We note that the $k$-move game can be expressed in the usual alternating quantifier form.  One can replace $R$ and $R_1, \ldots R_k$ by a single predicate $R'$, where $R'(I, S_1,\ldots, S_k) = 1$ if Bob is the player who makes the first forbidden move, or if all moves are allowable and $R(I, S_1,\ldots, S_k)=1$. 
Thus, the input $x$ belongs to the language $L$ if and only if Alice has a strategy that guarantees a win -- no matter how Bob plays; that is, Alice can ensure that $R'(x, I, S_1,\ldots, S_k) = 1$.
  
In a $\Pi_k^p$ problem, the quantifiers again alternate, but begin with a universal quantifier. This means Bob moves first, and the roles of the players are reversed: Bob selects $S_i \subseteq \U(I)$ for odd values of i, and Alice selects $S_i \subseteq \U(I)$ for even values of i.
As before, the predicate $R'(I, S_1, \ldots, S_k)$ determines the winner. Alice wins if $R'(I, S_1, \ldots, S_k) = 1$; otherwise, Bob wins. The input $I$ is in the language $L$ if Bob has no strategy that can force Alice to lose. Equivalently, $I \in L$ if Alice has a response strategy that ensures $R(I, S_1, \ldots, S_k) = 1$ for \emph{every} choice Bob might make.

This two-player view provides intuition for the complexity of problems in the polynomial hierarchy: each additional alternation of quantifiers corresponds to an additional round in a game between opposing players, increasing the strategic depth and computational difficulty.

\subsection{The k-Move Adversarial Selection Game}
\label{sec:selection-game}

In this section, we consider very a natural class of games, which we call \emph{adversarial selection games}. Every game from the class $\NPS$ can be transformed into a $k$-move adversarial selection game for any $k$. In this adversarial selection game, Alice and Bob compete to select elements that do or do not complete a solution. Let us give a simple example, where the base game is the Hamiltonian cycle problem.

\begin{description}
    \item[]\textsc{$k$-Move Adversarial Selection Game on Hamiltonian Cycle}\hfill\\
    \textbf{Instance:} Graph $G = (V, E)$, a number $k$, a partition $E_1\cup\dots \cup E_k$ of $E$.\\
    \textbf{Game:} Alice and Bob take turns alternatingly, selecting some set $S_i \subseteq E_i$ for $i=1,\dots,k$. The player who takes the last turn wins iff $\bigcup_{i=1}^k S_i$ is a Hamiltonian cycle.
\end{description}

We can generalize this example to all problems in $\NPS$.
Suppose $\Pi = (\I, \U, \sol)$ is a problem in $\NPS$.
Then the $k$-move adversarial selection game on $\Pi$ is defined as follows.

The instance is defined by $(I, \mathcal{P})$, where $I$ is an instance of $\Pi$ and where $\mathcal{P} = \{U_1, ..., U_k\}$ is a partition of the universe $\U(I)$.
Two players, Alice and Bob, alternately play against each other and Alice begins.
For all $1 \leq j \leq k$:
If $j$ is odd, Alice selects a subset $S_j \subseteq U_j$, and if $j$ is even, Bob selects a subset $S_j \subseteq U_j$.
After all moves $1 \leq j \leq k$, we arrive at a set $S = \bigcup_{j=1}^k S_i$.
The player who moves last wins the game if $S \in \sol(I)$.
Otherwise, the other player wins the game.  The question is whether Alice (who moves first) wins the game.  We will distinguish the cases according to whether $k$ is odd and Alice goes last or $k$ is even and Bob goes last. If Alice goes last, she wins if she can obtain a solution.  If Bob goes last, Alice wins if Bob cannot obtain a solution.

This description yields the following definition.

\begin{definition}[{\sc $k$-Move Adversarial Selection Game}]
    Let a problem $\Pi = (\I,\U,\sol)$ in $\NPS$ be given.
    We define \textsc{$k$-Move Adversarial Selection Game on $\Pi$} as follows:
    The input is an instance $I \in \I$ together with a partition of the universe $\U(I) = U_1 \cup \ldots \cup U_k$.
    \begin{enumerate}
        \item If $k$ is odd, then the question is whether
            $$
                \exists S_1 \subseteq U_1 \; \forall S_2 \subseteq U_2 \; \cdots \; \forall S_k \subseteq U_k : \bigcup_{i=1}^{k} S_i \in \sol(I).
            $$
        \item If $k$ is even, then the question is whether
            $$
                \exists S_1 \subseteq U_1 \; \forall S_2 \subseteq U_2 \; \cdots \; \exists S_k \subseteq U_k : \bigcup_{i=1}^{k} S_i \notin \sol(I).
            $$
    \end{enumerate}
\end{definition}

We observe the following. 

\begin{lemma}
    For all $\Pi \in \NPS$, the problem {\sc $k$-Move Adversarial Selection Game on $\Pi$} is in $\Sigma_k^p$. 
\end{lemma}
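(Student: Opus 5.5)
The plan is to match the quantified formula in the definition of \textsc{$k$-Move Adversarial Selection Game on $\Pi$} directly against the quantifier definition of $\Sigma_k^p$. Since $\Pi = (\I,\U,\sol) \in \NPS$, \Cref{def:NP-S} gives a deterministic polynomial-time verifier $V$ with $V(I,C)=1$ iff $C \in \sol(I)$ for every $C \subseteq \U(I)$. It also gives the bound $|\U(I)| = |I|^{O(1)}$. So each move $S_j \subseteq U_j$ can be encoded as a characteristic vector of length $|\U(I)|$, which is polynomial in the input length. I will use a polynomial $p$ bounding $|\U(I)|$ in terms of the input size $|(I,\mathcal{P})|$, and let each $y_j$ range over $\{0,1\}^{p(|x|)}$.

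Next I define the predicate $R(x, y_1, \ldots, y_k)$. It first decodes $x = (I, \mathcal{P})$. It then interprets $y_j$ as the subset $S_j$ of $\U(I)$ given by its first $|\U(I)|$ bits, intersected with $U_j$. Reading bits outside $U_j$ as absent is harmless, because every string then encodes a legal move and every legal move is encoded by some string, so the quantifier ranges are exactly the subsets of $U_j$. Finally $R$ computes $S = \bigcup_j S_j$ and runs $V(I,S)$. For odd $k$, $R$ outputs $V(I,S)$. For even $k$, $R$ outputs $1 - V(I,S)$. Both versions run in polynomial time, since the decoding is polynomial and $V$ is polynomial. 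The quantifier pattern $\exists y_1 \forall y_2 \cdots Q_k y_k$ then agrees exactly with the definition in both parity cases: for odd $k$ the last quantifier is $\exists$, and for even $k$ it is $\forall$ over Bob's move with the negated predicate. By construction this expresses membership in the game language, so the problem lies in $\Sigma_k^p$.

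The main subtlety is whether $k$ is fixed or part of the input. For a fixed constant $k$ the argument above is complete. I would state the lemma for fixed $k$ and note that when $k$ is part of the input the same encoding only yields membership in $\PSPACE$. A second minor point is well-formedness: inputs that are not valid pairs of an instance and a partition should be handled by the usual convention of rejecting malformed inputs. This is a polynomial-time check, so it can be folded into $R$. The parity bookkeeping for the even case is the only place an error could slip in, and I expect no genuine obstacle beyond it.
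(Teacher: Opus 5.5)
Your proof is correct and is the same argument as the paper's. In both, you read the alternating quantifier prefix off the game definition, encode each move $S_j \subseteq U_j \subseteq \U(I)$ as a polynomial-length string, and use the $\NPS$ verifier $V$ as the polynomial-time predicate, complemented when $k$ is even. You spell out the encoding and the parity bookkeeping more carefully than the paper's sketch does. Your reading is the intended one: the last quantifier is $\exists$ for odd $k$ and $\forall$ for even $k$. The paper's displayed formulas have these two final quantifiers swapped.
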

\begin{proof}
The game-theoretic construction yields a problem in $\Sigma^p_k$:
The instance $(I, \mathcal{P}_U)$ is a yes-instance if and only if Alice has a strategy that guarantees a win.
If $k$ is odd, then
$$
    (I, \mathcal{P}_U) \in L \iff \exists S_1 \subseteq U_1 \; \forall S_2 \subseteq U_2 \; \cdots \; \forall S_k \subseteq U_k : \bigcup_{i=1}^{k} S_i \in \sol(I).
$$
If $k$ is even, then
$$
    (I, \mathcal{P}_U) \in L \iff \exists S_1 \subseteq U_1 \; \forall S_2 \subseteq U_2 \; \cdots \; \exists S_k \subseteq U_k : \bigcup_{i=1}^{k} S_i \notin \sol(I).
$$

Since $\Pi \in \NPS$, it is possible to check whether $\bigcup_{i=1}^{k} S_i \stackrel{?}{\in} \sol(I)$ in polynomial time and all $U_i$ are subsets of $\U(I)$, thus, they are encodable in polynomial size.
\end{proof}

The case that $\Pi =$ SAT-V is well studied.  The k-move Adversarial Selection Game on SAT-V is sometimes referred to as {\sc $k$-Quantified Boolean Formula} ($QBF_k$) or $k$-Quantified Satisfiability ($QSAT_k$).
Stockmeyer and Wrathall referred to it as $B_k$.

Stockmeyer \cite{DBLP:journals/tcs/Stockmeyer76} and Wrathall \cite{Wra76} proved the following completeness result.

\begin{theorem}
   {\sc $k$-Move Adversarial Selection Game on {\sc Sat-V}} is $\Sigma_k^p$-complete.
   If $k$ is permitted to grow with $n$, then {\sc Adversarial Selection Game on {\sc Sat-V}} is $\PSPACE$-complete.
\end{theorem}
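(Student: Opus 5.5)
This is the classical Stockmeyer--Wrathall result. The plan is to translate the adversarial selection game on \textsc{Sat-V} into the standard quantified Boolean formula problem and back, then use the known completeness of $\textsc{QSAT}_k$ and $\textsc{QBF}$. Membership in $\Sigma^p_k$ is already given by the preceding lemma. For the unbounded case, membership in $\PSPACE$ follows from the usual recursive depth-first evaluation of the game tree. At each level we loop over all $S_j \subseteq U_j$ and reuse space, so the space is polynomial in $|I|$ even when $k$ grows with $n$.

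For hardness we reduce from $\textsc{QSAT}_k$. An instance is $\exists X_1 \forall X_2 \cdots Q_k X_k \, \psi(X_1,\dots,X_k)$ with $\psi$ in CNF when $k$ is odd and in DNF when $k$ is even, which is the standard normal form for $\Sigma^p_k$-completeness. Identify $S_j \subseteq U_j = X_j$ with the assignment setting exactly the variables in $S_j$ to true; this is a bijection between subsets of $X_j$ and assignments to $X_j$.

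For odd $k$ the game question coincides literally with the QBF with CNF matrix $\psi$, since $\bigcup S_i \in \sol(\psi)$ iff the induced assignment satisfies $\psi$.

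For even $k$ the game asks $\exists S_1 \forall S_2 \cdots \exists S_k : \bigcup S_i \notin \sol(\varphi)$. Here we set $\varphi := \neg\psi$, which, for $\psi$ in DNF, is a CNF computable in linear time by De Morgan. Then $\bigcup S_i \notin \sol(\varphi)$ iff $\psi$ holds. Note that in the even case the last quantifier over $S_k$ is existential, matching $\textsc{QSAT}_k$ for even $k$ only if we read the quantifier prefix correctly. I will therefore double-check the parity convention of the definition (the $\exists$ on $S_k$ for even $k$ appears to be a typo for $\forall$). In either reading the instance reduces to a $\Sigma^p_k$-complete quantified formula, possibly after adding a dummy block $U_j=\emptyset$ to fix parity.

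The main obstacle is this bookkeeping of parity and normal form: the matrix of $\textsc{QSAT}_k$ must be in the right normal form (CNF versus DNF) for hardness to survive. It should be handled by citing Stockmeyer/Wrathall's normal forms. For $\PSPACE$-hardness, reduce from $\textsc{TQBF}$ with a CNF matrix. Pad with empty blocks so that the quantifiers strictly alternate and the last one is existential, and append a final dummy empty block if needed so that $k$ is odd. The game on \textsc{Sat-V} with the resulting partition is then exactly the TQBF instance.
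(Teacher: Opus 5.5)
Your proposal is correct and follows the paper, which proves this statement only by citing Stockmeyer and Wrathall; your explicit identification of the game with \textsc{QSAT}$_k$ (CNF matrix for odd $k$, and for even $k$ the CNF $\neg\psi$ of a DNF matrix $\psi$), together with the padding argument for \PSPACE{}, is precisely the content of that citation, and the paper makes the same CNF/DNF remark later in the interdiction section. One correction: drop the hedge that the reduction works ``in either reading''---the displayed formulas in the definition have the parity typo in \emph{both} cases (the odd case also shows $\forall S_k$ instead of $\exists S_k$), and under the literal prefix the quantifiers on $S_{k-1}$ and $S_k$ coincide, so the problem lies in $\Sigma^p_{k-1}$ and cannot be $\Sigma^p_k$-complete unless the hierarchy collapses; only the intended reading (Alice picks the odd-indexed $S_j$, Bob the even-indexed ones) gives the theorem.
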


We next extend the above theorem to every $\NPS$-complete problem.

\begin{theorem}
\label{Kmoveselection}
   Let $\Pi$ denote any $\NPS$-complete problem.
   Then {\sc $k$-Move Adversarial Selection Game on $\Pi$} is $\Sigma_k^p$-complete.
   If $k$ is permitted to grow with $n$, then {\sc Adversarial Selection Game on $\Pi$} is $\PSPACE$-complete.
\end{theorem}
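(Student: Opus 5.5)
Membership in $\Sigma^p_k$ (and in $\PSPACE$ when $k$ is part of the input) follows from the preceding lemma: the game is a $k$-fold alternating quantification over polynomial-size subsets, followed by a polynomial-time check $\bigcup_i S_i \in \sol(I)$. For unbounded $k$, the same game tree can be evaluated by depth-first search in polynomial space. The work is in hardness. I will reduce from {\sc $k$-Move Adversarial Selection Game on Sat-V}, which is $\Sigma^p_k$-complete (and $\PSPACE$-complete for unbounded $k$) by the theorem of Stockmeyer and Wrathall. Since \textsc{Sat-V} is in $\NPS$ and $\Pi$ is $\NPS$-complete, there is an SE reduction $(g,(f_I)_{I})$ from \textsc{Sat-V} to $\Pi$.

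Given an instance $(\varphi, U_1\cup\dots\cup U_k)$, I set $I' = g(\varphi)$ and $f = f_\varphi$. The embedded universe is partitioned by $U'_j := f(U_j)$. The auxiliary universe $\Uauxp$ must be assigned to some part, and the natural choice is the \emph{last} move $U'_k$, so that the player who moves last controls all auxiliary elements. The reason is that by \eqref{eq:SE:forward} and \eqref{eq:SE:backward}, a set $T\subseteq \Uembp$ extends to a solution of $I'$ via auxiliary elements if and only if $f^{-1}(T)\in\sol(\varphi)$.

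Correctness then follows by strategy mimicking in both directions (the ``dual mimicking''). Suppose the last mover has a winning strategy in the \textsc{Sat} game. In the $\Pi$ game, that player answers each move $S'_j\subseteq U'_j$ by pulling it back via $f^{-1}$, consulting the \textsc{Sat} strategy, and pushing the answer forward via $f$.

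If the last mover is the one who wants a solution, then at the final move the pulled-back union lies in $\sol(\varphi)$. By \eqref{eq:SE:forward}, the player can add auxiliary elements in $U'_k$ to complete a solution of $I'$.

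If the last mover wants to avoid a solution (so Alice wins when Bob fails), I argue contrapositively. Suppose the final union $S'$ were a solution of $I'$. Then by \eqref{eq:SE:backward}, $f^{-1}(S'\cap\Uembp)$, which is exactly the union produced in the simulated \textsc{Sat} play, would be a solution of $\varphi$. This contradicts the winning \textsc{Sat} strategy.

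The converse direction works the same way with the roles swapped: the opponent's winning strategy in the \textsc{Sat} game is mimicked in the $\Pi$ game. Because every play in the $\Pi$ game restricts via $f^{-1}$ to a legal \textsc{Sat} play, and the SE property makes the winning conditions coincide, the two games have the same winner. This gives a polynomial-time many-one reduction.

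The main obstacle is the exact quantifier bookkeeping in the two parity cases. The paper's definition states the even case with "$\notin \sol$" and an existential last quantifier, so I must check carefully which player moves last and make sure the auxiliary elements are always given to the last mover. Giving them to an earlier player would break the equivalence, since that player would have to commit to auxiliary elements before the embedded choices are fixed. A second, smaller point is the $\PSPACE$ case: $g$ and $f$ do not depend on $k$, so the reduction stays polynomial when $k$ grows with $n$.
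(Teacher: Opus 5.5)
Your proposal is correct and follows the paper's proof: the same instance ($U'_i=f(U_i)$ for $i<k$, with $\Uauxp$ attached to the last part $U'_k$) and the same dual-mimicking argument. The only differences are that at the final move you apply \eqref{eq:SE:forward} and \eqref{eq:SE:backward} directly where the paper uses the partial-solution Lemma~\ref{lem:se}, and that you treat both parities uniformly where the paper writes out only odd $k$. One small clean-up: by the game's definition the last mover always wins iff the final set is a solution (the paper's displayed even-case formula should end with $\forall S_k$), so your case ``the last mover wants to avoid a solution'' never arises---the \eqref{eq:SE:backward} argument you give there is exactly the converse direction, in which the avoider mimics and the last mover, although holding $\Uauxp$, still cannot complete a solution.
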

\begin{proof}
  We prove the theorem in the case that $k$ is odd.  Essentially, the same argument establishes the theorem in the case that $k$ is even.

  Suppose that $I$ is an instance of SAT-V with universal set $U$ and with solution set $\sol$.  Let $(I, \mathcal{P})$ be an instance of the $k$-move Adversarial Selection Game on SAT-V, where $\mathcal{P} = \{U_1, ..., U_k\}$ is a partition of the universal set $U$.   

Since $\Pi$ is $\NPS$-complete, there is a solution-embedding reduction $(g, f)$ from $\satV$ to $\Pi$.  
Let $g(I) = I'$.  Thus $I'$ is an instance of $\Pi$.  Let $U'$ and $\sol'$ denote its universe and solution set. Recall that $f : \U \to \U'$ is an injective function that embeds the universe of $\satV$ into the universe of $\Pi$.
    Here, the image $f(U) =: \Uemb'$ is the embedded universe, and $\Uaux' = U' \setminus f(U)$ is the auxiliary universe.  Note that $U' = \Uemb' \cup \Uaux'$.

We then create an instance $(I', \mathcal{P'})$ of the $k$-move Adversarial Selection Game on $\Pi$, where $\mathcal{P'} = \{U'_1, \ldots,  U'_k\}$ is defined as follows: (i) for $i \in [k-1]$, $U'_i = f(U_i)$; (ii) $U'_k = f(U_k) \cup \Uaux$.

Suppose that Alice and Bob play instance $I$ of the adversarial selection game on SAT, while Adam and Barbara play instance $I'$ of the adversarial selection game on $\Pi$. To prove the theorem, we need to establish that $I$ is a win for Alice if and only if $I'$ is a win for Adam.   

To establish the if and only if condition, we use a new proof technique that we refer to as \emph{dual mimicking}.  

We first prove that Adam can guarantee a win if Alice can guarantee a win.  We imagine that Alice and Barbara are experts in their respective games, and that each play perfect strategies.  We also imagine that Adam is mimicking Alice's moves, and Bob (who is helping Adam to win) is mimicking Barbara's moves.    

For each odd value of $i$ such that $i < k$, if Alice selects $S_i$, then Adam selects $S'_i = f(S_i)$.  After Adam moves, Barbara selects a subset $S'_{i+1}$.  And then Bob selects $S_{i+1} = f^{-1}(S'_{i+1})$.

Recall that $\widehat S_{k-1} = \bigcup_{i\in [k-1]} S_i$. We have assumed that Alice can win in the $k$-th move of the game. Thus $\widehat S_{k-1}$ is a partial solution of $\sol$ with respect to  $\widehat U_{k-1}$. 
In the other game, $\widehat S'_{k-1} = f(\widehat S_{k-1})$.  Because the reduction from $\Pi$ to $\Pi'$ is solution-embedding, by Lemma \ref{lem:se}, $f(\widehat S_{k-1})$ is a partial solution of $\sol'$ with respect to  $f(\widehat U_{k-1})$.  Therefore, Adam can win in his $k$-th move.

Since we have assumed that Adam's opponent plays perfect strategy, $I'$ must have been a guaranteed win for Adam.

Suppose instead that Bob can guarantee a win for instance $I'$.  To show that Barbara can guarantee a win, we reverse which players are experts and which do the mimicking.  Now Adam and Bob  are experts who play perfect strategies.  We imagine that Barbara is mimicking Bob's moves, and Alice (who is helping Barbara to win) is mimicking Adam's moves. 

For each odd value of $i$ such that $i < k$, if Adam selects $S'_i$, then Alice selects $S_i = f^{-1}(S_i)$.  After Alice moves, Bob selects a subset $S_{i+1} \subseteq U_{i+1}$.  And then Barbara selects  
$S'_{i+1} = f(S_{i+1})$.  

We have assumed that Alice cannot win in move $k$.  Therefore, $\widehat S_{k-1}$ is not a partial solution with respect to $\widehat U_{k-1}$.  By Lemma \ref{lem:se},  $f(\widehat S_{k-1})$ is not a partial solution with respect to $f(\widehat U_{k-1})$.  Therefore, Adam cannot win in his last move.
Since Adam was assumed to play perfect strategy, $I'$ must have been a guaranteed win for Barbara.
\end{proof}

\paragraph*{On dual mimicking proofs.} We will use dual mimicking proofs to prove a statement of the following form.  Adam has a guaranteed win in his game if and only if Alice has a guaranteed win in her game. 

To prove the if direction, we suppose that Alice has a guaranteed win in her game with Bob, and that both Alice and Barbara play optimally.  We suppose that Adam mimics Alice's moves, and Bob mimics Barbara's moves. We then show that Adam wins his game.  Because Barbara played optimally, a win by Adam could only occur if he had a guaranteed win. 

To prove the only if direction, we suppose that Bob has a guaranteed win in his game with Alice, and that both Bob and Adam play optimally.  We suppose that Alice mimics Adam's moves, and Barbara mimics Bob's moves.   We then show that Barbara wins her game.  Because Adam played optimally, a win by Barbara could only occur if she had a guaranteed win.

\section{Interdiction Problems}
\label{sec:interdiction}

In this section, we showcase the power of our $\NPS$ framework by applying it to the class of \emph{minimum cost blocker problems}, also called \emph{most vital vertex/most vital edge problems} or \emph{interdiction problems}.
We further consider multi-stage variants of these problems.
This is a large and natural class of very well-studied problems in the literature, concerned with finding a small set of elements that simultaneously intersects all solutions of a given nominal problem. 
The following is a natural example of an interdiction problem:

\begin{description}
    \item[]\textsc{Interdiction-Clique}\hfill\\
    \textbf{Instance:} Graph $G = (V, E)$, a vertex set $D \subseteq V$, budget $\Gamma \in \N$, number $k \in \N$.\\
    \textbf{Question:} Is there a set $B \subseteq D$ of size $|B| \leq \Gamma$ such that all cliques in the graph $G \setminus B$ have size at most $k-1$?
\end{description}

The problem \textsc{Interdiction-Clique} can be interpreted as a 2-move game between an \emph{interdictor} (Adam) and a \emph{protector} (Barbara). 
The protector wants to find a large clique (of size at least $k$), but the interdictor has the opposite goal. 
He is allowed to delete up to $\Gamma$ vertices $B \subseteq D$ in order to interfere with the goal of the protector, that is, destroy all cliques of size at least $k$. 
We call the set $D$ the set of \emph{vulnerable elements}. 

As outlined in the introduction, interdiction problems in the past have been studied for a vast number of nominal problems in many variants, in disjoint areas such as operational research, theoretical computer science, and pure graph theory. 
What is the common property of all these studied problems?
In almost all the cases, it is to find a small number of elements that interfere with every solution (for example, a small amount of edges intersecting every Hamiltonian cycle). 
Therefore the following \emph{abstract definition of an interdiction problem} is extremely natural.

\begin{definition}[Interdiction Problem]
\label{def:interdiction}
    Let an $\NPS$ problem $\Pi = (\I, \U, \sol)$ be given. The interdiction problem associated to $\Pi$ is denoted by $\textsc{Interdiction-}\Pi$ and defined as follows: The input is an instance $I \in \I$ together with a set $D \subseteq \U(I)$ and a budget $\Gamma \in \Z$.
    The question is whether
    $$
        \exists B \subseteq D \ \text{with} \ |B| \leq \Gamma : \forall S \in \sol(I):  B \cap S \neq \emptyset.
    $$
\end{definition}

The main result of this section is

\begin{theorem}
\label{interdictPi}
    $\textsc{Interdiction-}\Pi$ is $\Sigma^p_2$-complete for \emph{all} $\NPS$-complete problems $\Pi$.
\end{theorem}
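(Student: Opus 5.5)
The proof has two parts: membership in $\Sigma^p_2$, and hardness. I prove hardness once for a base problem and then transfer it to every $\NPS$-complete $\Pi$ by a direct use of \eqref{eq:SE}.

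\textbf{Membership.} The defining formula $\exists B \subseteq D,\ |B|\le \Gamma\ \forall S \subseteq \U(I): (S \in \sol(I) \Rightarrow B\cap S\neq\emptyset)$ has two quantifiers over polynomial-size subsets of $\U(I)$. The matrix is polynomial-time decidable via the verifier $V$ of $\Pi$. Hence $\textsc{Interdiction-}\Pi \in \Sigma^p_2$ for every $\Pi\in\NPS$.

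\textbf{Transfer step.} Suppose $\textsc{Interdiction-}\Pi_0$ is $\Sigma^p_2$-hard for some $\Pi_0$ with $\Pi_0\leqSE\Pi$; this holds for every $\NPS$-complete $\Pi$. Given an instance $(I,D,\Gamma)$, I map it to $(g(I), f_I(D), \Gamma)$. Take any $B'\subseteq f_I(D)$ and put $B=f_I^{-1}(B')$. Since $B'\subseteq \Uembp$, we have $B'\cap S' = B' \cap (S'\cap\Uembp)$ for every $S'\in\sol'(g(I))$. By \eqref{eq:SE}, the sets $S'\cap\Uembp$ are exactly the sets $f_I(S)$ with $S\in\sol(I)$. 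So $B'$ meets every solution of $g(I)$ if and only if $B$ meets every solution of $I$. Because $f_I$ is injective, $B\mapsto f_I(B)$ is a size-preserving bijection between subsets of $D$ and subsets of $f_I(D)$, which gives the equivalence. This step is routine.

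\textbf{Base problem.} The real work is a $\Sigma^p_2$-hardness proof for $\textsc{Interdiction-Sat-L}$. By \cref{thm:cook-levin}, $\textsc{Sat-L}$ is $\NPS$-complete and thus SE-reduces to every $\NPS$-complete $\Pi$, so this base suffices. I reduce from the $\Sigma^p_2$-complete problem $\exists X\,\forall Y\,\neg\psi(X,Y)$ with $\psi$ in CNF, i.e.\ the complement of the $\Pi^p_2$-complete $\forall\exists$-CNF problem. The natural instance is $\psi$ over $X\cup Y$ with literal universe, $D=$ the literals of $X$, and $\Gamma=|X|$. Blocking $\overline{x}_i$ forces $x_i$ true in every surviving solution, and vice versa, so a set $B$ containing exactly one literal per $x$-variable is precisely an assignment $\alpha$. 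Then $B$ meets every solution if and only if $\psi(\alpha,Y)$ is unsatisfiable.

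\textbf{Main obstacle.} The interdictor may instead block both $x_i$ and $\overline{x}_i$, which kills all solutions trivially. I will neutralize this with a gadget: replace $x_i$ by several copies $x_i^{1},\dots,x_i^{m}$ linked by equivalence clauses, set $D$ to the literals of all copies and $\Gamma = m|X|$. Blocking both literals of one copy then only forces an inconsistency that the equivalence clauses turn into a wasted move, since the remaining budget cannot cover all copies of every variable. The intended effect is that any interdictor win can be converted into a consistent one-literal-per-copy choice. If this does not close cleanly, my fallback is a direct Cook-style reduction: build from the $\Sigma^p_2$ verifier a $\textsc{Sat}$ instance whose interdictable literals encode the existential string and in which double-blocking is made locally satisfiable by adding a fresh escape variable per pair.
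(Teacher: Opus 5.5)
\textbf{Gap in the base case.} Your membership argument is correct, and so is your transfer step. For a single interdiction move, the transfer step is exactly what the paper's meta-reduction reduces to; the paper proves Theorem~\ref{interdictPi} as the case $k=2$ of its protection-interdiction theorem.

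The gap is in the base case, which you rightly call the real work. In \textsc{Sat-L}, every solution contains exactly one of $\ell,\overline{\ell}$ for every variable. So whenever $D$ contains both literals of some variable $z$, the set $\{z,\overline{z}\}$ meets every member of $\sol(I)$. It is then a winning blocker of size $2$, whatever the clauses are.

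Your copy gadget therefore changes nothing. Blocking $x_i^1$ and $\overline{x}_i^1$ does not create an ``inconsistency'' that equivalence clauses could absorb. It leaves no admissible set at all, because $|S\cap\{x_i^1,\overline{x}_i^1\}|=1$ is part of the definition of $\sol$, not a clause. Every instance of your construction with $\Gamma\ge 2$ is thus a yes-instance. The fallback with an ``escape variable per pair'' fails for the same reason if the pair consists of complementary literals: no added variable or clause can get around the exactly-one-literal requirement.

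\textbf{The missing idea.} It has two parts.

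First, $D$ must never contain both literals of one variable, so that blocking an element only forces it to false. The paper works over \satV{}. For each existential variable $u_j$ it adds a \emph{distinct} variable $w_j$ as a pseudo-complement, with $D=\{u_j,w_j : j\in[n]\}$ and budget exactly $\Gamma=n$.

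Second, a double block only ever hurts the protector locally, so it must be punished \emph{globally}, not by a gadget at the cheated pair. The paper adds a cheat-detection variable $s$ to every structural clause, e.g.\ $(\varphi\lor s)$ and $((\overline{u}_j\leftrightarrow w_j)\lor s)$. It lets $s$ be true only if both members of some pair are true, via $s_j\rightarrow(u_j\land w_j)$ and $(\overline{s}\lor\bigvee_j s_j)$.

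Suppose the interdictor double-blocks a pair or leaves one untouched. Since there are $n$ pairs and budget $n$, pigeonhole gives a completely unblocked pair. The protector sets both its members, $s_j$ and $s$ to true, and satisfies everything. Hence optimal play blocks exactly one element per pair. That choice encodes an assignment to the existential variables, and the formula collapses to $\varphi$.

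The same fix works inside \textsc{Sat-L} if only the positive literals of $u_j,w_j$ go into $D$; that is then the same construction. With this base case in place, your SE transfer completes the proof.
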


Theorem \ref{interdictPi} is a special case of Theorem \ref {thm:protection-interdiction-main-thm}, which we will prove later in this section.
In the literature, interdiction is sometimes studied with associated \emph{interdiction costs}, which prescribe to each element a different cost to interdict it.
This problem is alternatively called the \emph{minimum cost blocker problem}.

\begin{definition}[Minimum Cost Interdiction]
\label{def:interdiction-cost}
    Let an $\NPS$ problem $\Pi = (\I, \U, \sol)$ be given.
    The minimum cost interdiction problem associated to $\Pi$ is defined as follows:
    The input is an instance $I \in \I$ together with a cost function $c : \U(I) \to \Z$ and a threshold $t \in \Z$.
    The question is whether
    $$
        \exists B \subseteq \U(I) \ \text{with} \ c(B) \leq t : \forall S \in \sol(I):  B \cap S \neq \emptyset.
    $$
\end{definition}

\cref{def:interdiction-cost} includes \cref{def:interdiction} as a special case, as can be seen by setting $c(e) \in \set{1, |\U(I)|+1}$ for each element.
Therefore, the following is a corollary of Theorem \ref{interdictPi}.
For every $\NPS$ complete problem $\Pi$, the minimum cost interdiction problem associated to $\Pi$ is $\Sigma^p_2$-complete. 

\subsection{Technical Remarks}
We make a series of technical remarks, clarifying \cref{def:interdiction}.
For the bigger picture, these remarks can be skipped.

As a first remark, note that in our problem \textsc{Interdiction-$\Pi$}, the underlying instance is not changed.
In particular, we do not delete elements of the universe (e.g. vertices of the graph).
For some problems, there might be a subtle difference between deleting elements and forbidding elements to be in the solution.
The vertex cover problem is one such example:
It makes a big difference of deleting a vertex $v$ and its incident edges, or forbidding that $v$ is contained in the solution, but still having the requirement that all edges incident to $v$ get covered by the vertex cover.
In the first case, the vertex cover interdiction problem can be shown to be contained in $\sf coNP$, hence we can not hope to obtain a general $\Sigma^p_2$-completeness result.
In this paper we only consider the second case.

As the second remark, we note that in the literature, usually there is a slight difference between the minimum cost blocker problem and the most vital nodes/edges problem:
The minimum cost blocker problem asks for a minimum cost blocker which decreases the objective value by a set amount. On the other hand, the most vital nodes asks for the maximum value by which the objective can be decreased, when given a certain cost budget for the blocker.  
The interdiction problem, which we formulated here as a decision problem enables us to capture the $\Sigma^p_2$-completeness of both these variants.
It follows by standard arguments that both of the above problems become $\Sigma^p_2$-complete in our setting.

The last remark is that $\textsc{Interdiction-}\Pi$ can be understood as a game between Alice ($\exists$-player, trying to find a blocker) and Bob ($\forall$-player, trying to find a solution).

\subsection{Multi-stage protection-interdiction games}

We can generalize our main result even further, to not only show $\Sigma^p_2$-completeness of interdiction problems, 
but also to general $\Sigma^p_k$-completeness and $\PSPACE$-completeness of $k$-move multi-stage interdiction games.
These games are motivated by extending the initial interdiction scenario: 
For $k=3$, the protector is now allowed to \emph{protect} up to at most $\Gamma_P$ before the interdictor can interdict up to $\Gamma_B$ elements. 
The interdictor is not allowed to choose any of the elements that were protected. Such problems are called \emph{protection-interdiction problems} in the literature. 
For general $k \in \Z_{\geq 2}$, we can extend this idea, and we obtain a game with the following generalized ruleset: 

\begin{itemize}
    \item We imagine that the protector owns a total of $\Gamma_P$ \emph{protection tokens}, and the interdictor owns a total of $\Gamma_B$ \emph{interdiction tokens}.
    \item For a total of $k$ moves, the protector and interdictor alternatingly place their tokens on the set $\U(I)$. If $k$ is odd, the protector starts; if $k$ is even, the interdictor starts.
    \item The input contains some sequence $C_1 \subseteq C_2 \subseteq \dots \subseteq C_{k-1} \subseteq \U(I)$ of nested sets. In the $i$-th move, the players may place tokens only inside $C_i$. In other word, the players are not allowed to access the whole set $\U(I)$ from the start, instead they are allowed to access a region that grows with every turn.
    \item A token can never be placed on any element that is already occupied with an interdiction or protection token.
    \item In each move, the protector may place as many of their $\Gamma_P$ tokens as they wish, but at the end of the game the total amount may not exceed $\Gamma_P$. The analogous holds for the interdictor and $\Gamma_B$.
    \item In the very last move, the protector is not allowed to place tokens anymore. Instead, they select a subset $S \subseteq \U(I)$ that is disjoint from all interdiction tokens. The protector wins if $S \in \sol(I)$.  Otherwise, the interdictor wins.  Note that $S$ can, but does not need to contain the protection tokens. 
\end{itemize}

For $k=2$, we recover the problem \textsc{Interdiction-$\Pi$} from the previous section. In a formally precise way, we can express the game rules  as follows.

For a given problem $\Pi = (\I, \U, \sol)$ in {\sf NP-S}, define two different $k$-move interdiction games: one in which $k$ is even and one in which $k$ is odd.
To simplify the definition, we use $\widehat A_i$ to denote the cumulative union of the first sets of the protector and the interdictor.
More precisely let $A_1, \ldots, A_n$ be a collection of sets, we denote
$\widehat A_i =  \bigcup^i_{j=1} A_j.$
We use the convention that $A_j = \emptyset$ if $A_j$ is not further specified.
In particular, in the following the term $P_k$ denotes the set of protection tokens placed during the $k$-th move, and the term $\widehat P_k$ denotes the total set of all protection tokens placed after the first $k$ moves. Analogously we define $B_k$ and $\widehat B_k$ for the interdiction tokens. For each $k$, either $P_k = \emptyset$ or $B_k = \emptyset$.

As throughout the paper, we formulate every decision problem from Alice’s perspective, asking whether Alice has a winning strategy. For an even number of moves, Alice wins precisely if the protector (Bob) fails to obtain a feasible solution in the final move.

\begin{definition}[Combinatorial $k$-Move Interdiction Games on $\Pi$]
\label{def:comb-prot-int}
    Let a problem $\Pi = (\I, \U, \sol)$ in {\sf NP-S} be given.
    We define \textsc{Comb. $k$-Move Interdiction Game on $\Pi$} as follows:
    The input is an instance $I \in \I$,
    a nested family of subsets ${C_1, \ldots, C_{k-1}}$ of $\U(I)$
    (i.e., $C_i \subseteq C_{i+1}$, $1 \leq i \leq k-2$ and $C_{k-1}\subseteq \U(I)$) of protectable and blockable elements,
     a  budget $\Gamma_P \in \mathbb{Z}$ for the protector, and a budget  $\Gamma_B \in \mathbb{Z}$ for the interdictor (blocker).
    We distiguish between the cases in which $k$ is odd or even.
    \begin{enumerate}
    \item Protection-Interdiction (k is odd). The question is whether
    \begin{align*}
        \exists P_1 \subseteq C_1,
        \forall B_2 \subseteq (C_2 \setminus \widehat P_1),
        \exists P_3 \subseteq (C_3 \setminus \widehat B_2),
        \ldots ,
        \forall B_{k-1} \subseteq (C_{k-1} \setminus \widehat P_{k-2}):
        \qquad\quad\\
        \exists S \in \sol(I) :
        \text{If } |\widehat B_{k-1}| \le \Gamma_B,  \text{ then } |\widehat P_{k-2}| \le \Gamma_P \text{ and }  \widehat B_{k-1} \cap S = \emptyset,\\
        \text{i.e., Bob can obtain a feasible solution.}
    \end{align*}
    \item Interdiction-Protection (k is even).  The question is whether
    \begin{align*}
        \exists B_1 \subseteq C_1,
        \forall P_2 \subseteq (C_2 \setminus \widehat B_1),
        \exists B_3 \subseteq (C_3 \setminus \widehat P_2),
        \ldots,  \forall B_{k-1} \subseteq (C_{k-1} \setminus \widehat P_{k-2}):
        \quad\qquad\\
        \forall S \in \sol(I) :
        \text{If } |\widehat P_{k-2}| \le \Gamma_P, \text{then } |\widehat B_{k-1}| \le \Gamma_B \text{ and }  \widehat B_{k-1} \cap S \neq \emptyset,\\
        \text{i.e., Bob cannot obtain a feasible solution.}
    \end{align*}
    
    \end{enumerate}
\end{definition}

Note that games with an odd number of moves are different from games with an even number of moves since the question is always asked from the perspective of the first player to move.
In both cases, the protector is the last to move.
For an odd number of moves, the first player is the protector and thus wins if he is able to select a solution $S \subseteq \sol$ in the last move.
For an even number of moves, the first player is the interdictor and thus wins if he is able to deny the protector from choosing a solution $S \subseteq \sol$ in the last move.

If we consider $\Pi = \sat$, the formula corresponds to a CNF-formula for odd $k$ and a DNF-formula for even $k$ (by applying the negations).
Indeed, this also corresponds to the natural definition of $\exists$-quantified {\sc QBF$_k$-Sat}:
If $k$ is odd, then {\sc QBF$_k$-CNF-Sat} is the canonical $\Sigma^p_k$-complete problem.
In contrast, if $k$ is even, then {\sc QBF$_k$-DNF-Sat} is the canonical $\Sigma^p_k$-complete problem.

Finally, we consider one last variant of the protection-interdiction game. 
In the game as described above, both players have a budget $\Gamma_B,\Gamma_P$ that can be used at any point during the game, but can in total not be exceeded. Let us call this a \emph{global budget}. 
In contrast, we could also alter the rules of the game such that $\Gamma_B, \Gamma_P$ are local constraints for each move of the game, meaning that in each move the protector (interdictor, respectively) may place up to $\Gamma_P$ protection tokens ($\Gamma_B$ interdiction tokens, respectively). 
All other rules of the game remain the same. Let us call this variant \emph{local budget}.

\subsection{The Complexity of Protection-Interdiction Games}

We begin this subsection by proving the containment.

\begin{lemma}
\label{lem:comb-protection-containment}
    Let $\Pi$ be in $\NPS$.
    If $k \in \mathbb{N}$ is a constant, then the {\sc Comb. $k$-Move Interdiciton Game on $\Pi$} is contained in $\Sigma^p_{2k-1}$.
    If $k$ is part of the input, then it is contained in $\PSPACE$.
\end{lemma}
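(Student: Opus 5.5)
The plan is to write the question of \Cref{def:comb-prot-int} directly as an alternating quantifier formula over polynomial-size objects with a polynomial-time matrix, and then read off the level of the hierarchy. Every move $P_i$ or $B_i$ is a subset of $C_i \subseteq \U(I)$, and by \Cref{def:NP-S} we have $|\U(I)| = |I|^{O(1)}$, so each move is a bit vector of polynomial length. The final object $S$ is also a subset of $\U(I)$. The matrix compares the cardinalities $|\widehat B_{k-1}|$ and $|\widehat P_{k-2}|$ with $\Gamma_B$ and $\Gamma_P$, tests disjointness or intersection of $\widehat B_{k-1}$ with $S$, and decides $S \in \sol(I)$ with the verifier $V$ of $\Pi$. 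All of this takes polynomial time. The side conditions of each quantifier, namely $B_i \subseteq C_i \setminus \widehat P_{i-1}$ and similarly for $P_i$, are checked in polynomial time as well. As in the game-theoretic interpretation of \Cref{sec:ptimehierarchy}, I will fold them into the matrix: a universal move that violates its constraint makes the matrix true, and an existential move that violates its constraint makes it false.

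Next I count alternations. The formula has $k-1$ alternating token moves followed by the quantifier on $S$. For odd $k$ the prefix is $\exists\forall\cdots\forall$ followed by $\exists S$. For even $k$ it is $\exists\forall\cdots\exists$ followed by $\forall S$. The last token quantifier and the $S$-quantifier therefore differ, which gives at most $k$ blocks starting with $\exists$. This yields membership in $\Sigma^p_k$, and hence in the weaker bound $\Sigma^p_{2k-1}$ that the lemma states, since $\Sigma^p_k \subseteq \Sigma^p_{2k-1}$ for $k \ge 1$. I should double-check that no block collapse or budget condition introduces extra alternations. The implication ``if $|\widehat B_{k-1}| \le \Gamma_B$ then \ldots'' lives entirely inside the matrix, so it does not.

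For $k$ part of the input, I will give a recursive depth-first evaluation of the game tree. At depth $i$ it iterates over all candidate subsets of $C_i$, recurses, and combines the results with OR at existential levels and AND at universal levels. At the leaf it iterates over all $S \subseteq \U(I)$ and runs $V$. Each stack frame stores one subset of $\U(I)$, which is polynomial space. The recursion depth is $k$, and $k$ may be taken to be at most polynomial in the input length; if needed, I would argue $k \le |\U(I)|+1$ by observing that moves placing no tokens are harmless. The total space is therefore polynomial, and the problem is in $\PSPACE$.

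The main subtlety I expect is the treatment of budget violations, namely which player loses when a budget is exceeded. I will ensure that the matrix encoding exactly matches \Cref{def:comb-prot-int}, so that the quantifier count is not affected.
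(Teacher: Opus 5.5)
Your proposal is correct and follows the paper's own argument. You read the alternating-quantifier form directly off \Cref{def:comb-prot-int}, observe that every move and $S$ is a polynomial-size subset of $\U(I)$, and check the matrix with the verifier of $\Pi$. Your explicit count even yields $\Sigma^p_k \subseteq \Sigma^p_{2k-1}$, which is the bound actually used in \Cref{thm:protection-interdiction-main-thm}.

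For the \PSPACE{} part, $k$ is polynomially bounded simply because the input explicitly lists $C_1,\dots,C_{k-1}$. So you do not need the ``empty moves are harmless'' compression, which would itself need a justification you have not given.
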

\begin{proof}
    Let $\Pi = (\I, \U, \sol)$ is in $\NPS$ and $I \in \I$ one of its instances.
    \Cref{def:comb-prot-int} directly describes the problem by alternating quantifiers.
    Furthermore, all sets $S_i$, $\widehat P_i$, $P_i$, $\widehat B_i$, and $B_i$ can be encoded in polynomial space since they are subsets of $\U(I)$.
    At last, the condition $\widehat B \cap S \stackrel{?}{=} \emptyset$ can be checked in polynomial time since $\Pi$ is in $\NPS$.
\end{proof}

The main results of this section is the next theorem.

\begin{restatable}{theorem}{mainThmInterdiction}
\label{thm:protection-interdiction-main-thm}
    Let $\Pi$ be an $\NPS$-complete problem.
    If $k\in \mathbb{N}$ is a constant, then {\sc Comb. $k$-Move Interdiction Game on $\Pi$} is $\Sigma^p_k$-complete.  If $k$ is part of the input, then it is $\PSPACE$-complete. (This holds for both the global and local budget variant.)
\end{restatable}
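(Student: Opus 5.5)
Membership in $\Sigma^p_k$ (respectively $\PSPACE$) follows directly from \Cref{def:comb-prot-int}. There are $k-1$ alternating token moves, and the final existential or universal choice of $S\in\sol(I)$ can be merged with the last block of quantifiers of the same type. Every move is a subset of $\U(I)$, and the budget and intersection checks run in polynomial time. This gives $k$ alternations, which sharpens the coarse bound of \Cref{lem:comb-protection-containment}, and it gives a polynomial-space game-tree evaluation when $k$ is part of the input.

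Hardness goes in two steps, following the pattern of \Cref{Kmoveselection}. \textbf{Step 1} is to prove that the game on \textsc{Sat-L} is $\Sigma^p_k$-hard by reduction from $k$-quantified satisfiability with variable blocks $X_1,\dots,X_k$ (CNF if $k$ is odd, DNF if $k$ is even). I take the universe to be the literals and set $C_i=\bigcup_{j\le i}(X_j\cup\overline X_j)$. For each variable, the player owning block $i$ must fix a truth value by placing a token on exactly one of its two literals. A protection token on $x$ means $x$ is chosen, and an interdiction token on $\overline x$ forbids $\overline x$, which forces $x$ in the final solution.

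The gadget must keep the owner from skipping or double-marking variables. To enforce this, I pad each variable by duplicating its literal pair with copies tied together by equivalence clauses, and I calibrate budgets so that a player wasting tokens loses. With a local budget, I set the per-move budget equal to the block size and make unmarked pairs harmful to the player who owns them. With a global budget, I give budgets exactly $\sum_{i \text{ owned}}|X_i|$, and an argument about budget exhaustion shows that deviations cannot help. This mirrors the restricted-game Rules 1--4 that the paper sets up via the macro \verb|\restrictedGameRules|. I would first prove that the restricted game (Rules 1--4) is equivalent to quantified SAT, and then show that any deviation from the rules loses for the deviating player. The final universal block is handled by the protector choosing $S$, or, in the even case, by the interdictor's last blocking move.

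\textbf{Step 2} transfers hardness to an arbitrary $\NPS$-complete $\Pi$. Take the SE reduction $(g,f)$ from \textsc{Sat-L} to $\Pi$ guaranteed by \Cref{thm:cook-levin} and \Cref{lem:SE-transitive}. Set $C'_i=f(C_i)$ and keep the same budgets, so auxiliary elements can never receive tokens. Then use dual mimicking: each token placed on $f(u)$ in one game corresponds to a token on $u$ in the other. At the end, the blocked set $B'$ lies in $\Uembp$, and the protector wins iff some $S'\in\sol'$ avoids $f(B)$. By \eqref{eq:SE} this holds iff some $S\in\sol$ avoids $B$, since $S'\cap\Uembp=f(S)$. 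Hence the final predicates coincide and the winner is preserved. This works identically for local and global budgets, and for unbounded $k$ it yields $\PSPACE$-hardness from QBF.

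The main obstacle is Step 1: designing the gadget and budgets so that a free-form token game faithfully simulates assigning every variable of a block. This requires showing that a player's deviations (skipping variables, or spending tokens in later rounds) are never profitable. The global-budget variant is the delicate case, because tokens saved early can be spent later.
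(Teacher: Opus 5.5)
Your membership argument is correct, and it actually gives the $\Sigma^p_k$ bound the theorem needs; \Cref{lem:comb-protection-containment} as stated only gives $\Sigma^p_{2k-1}$. Your Step~2 is the paper's meta-reduction: $C'_i=f(C_i)$, unchanged budgets, dual mimicking, and the final move handled via \eqref{eq:SE} and injectivity of $f$.

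The gap is in Step~1. The gadget is not only left unspecified; the concrete choices you do fix make the reduction fail, in two ways.

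\emph{Double-blocking.} With the literal universe and $C_i\supseteq X_j\cup\overline X_j$, the interdictor can place tokens on both $y$ and $\overline y$ of any unprotected pair. Examples are a variable of $X_2$ in move~2, or a variable of $X_1$ in move~1 when $k$ is even. Every $S\in\sol$ contains exactly one literal of each pair, so no admissible solution remains. The interdictor therefore wins with two tokens, independently of the formula. Padding with equivalence-tied copies does not help, because each copy pair carries the same exactly-one constraint. So your claim that deviations never help is false here: double-blocking is a winning deviation.

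\emph{Non-binding protection.} Even if double-blocking were excluded, a protection token does not commit the protector, since the final $S$ need not contain protected elements. ``A protection token on $x$ means $x$ is chosen'' holds only if the interdictor afterwards blocks $\overline x$. Your interdictor budget is $\sum_{i\text{ owned}}|X_i|$ (or the block size per move), which leaves no tokens for this locking. The protector then in effect postpones all of its choices to the final move, and the quantifier order is lost. For instance, $\exists x\,\forall y\,\exists z\,(x\leftrightarrow y)$ with $k=3$ is false, yet the protector would win the game.

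The missing idea is the paper's cheat-detection construction.
\begin{itemize}
\item It works over \satV{}, with a separate pseudo-complement variable $w_{ij}$ for each $u_{ij}$, and only these positive elements can receive tokens. A doubly blocked pair then just means both variables are false, and a fully unblocked pair may have both set true.
\item The interdictor gets budget $n(k-1)$ (resp.\ $2n$ per move for local budgets). This is exactly enough to fix her own blocks \emph{and} to block the unprotected element of every protector pair, which is Rule~3.
\item Every constraint is relaxable through escape variables $s,t$, via clauses such as $(u_{ij}\lor w_{ij}\lor s)$ and $((\overline u_{ij}\leftrightarrow w_{ij})\lor s\lor t)$.
\item $s$ can be set true only through a helper $s_{ij}\rightarrow(u_{ij}\land w_{ij})$ on an interdictor block, and $t$ only through an analogous helper $t_{ij}$ on a protector block.
\end{itemize}

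A pigeonhole argument on these tight budgets then shows that whoever first skips, double-marks, or (for the interdictor) fails to lock loses. An interdictor double block leaves some pair fully unblocked, so the protector fires $s$ or $t$. A protector skip or double protection lets the interdictor falsify some $(u_{ij}\lor w_{ij}\lor s)$ on a protector block, where $t$ does not help.

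Without some mechanism of this kind, Step~1 is not established, and so neither is the hardness claim.
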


We will first show that {\sc Comb. $k$-Move Interdiction Game on $\Pi$} is $\Sigma^p_k$-complete if $\Pi = \textsc{Sat-V}$.
For this proof, we will use a reduction from {\sc $k$-Move Adversarial Selection Game on Sat-V}.
We provide the reduction and proof in the case that $k$ is odd.
After the proof, we point out how the reduction and proof would be modified for the case that $k$ is even.
After proving the $\Sigma^p_k$-completeness for $\Pi = \textsc{Sat-V}$, we will show that {\sc Comb. $k$-Move Interdiction Game on $\Pi$} is $\Sigma^p_k$-complete if $\Pi$ is any $\NPS$-complete problem. 

The reduction from {\sc $k$-move Adversarial Selection Game on Sat-V} is quite intricate.  
To explain the reduction, we divide the transformation into two parts.
In the first part, we reduce an arbitrary instance $(I, U_1, U_2, \ldots, U_k)$ of {\sc $k$-Move Adversarial Selection Game on Sat-V} to a \emph{restricted} interdiction game in which there are four additional rules that the protector Adam and the blocker (interdictor) Barbara must satisfy.
We will show that Adam has a guaranteed win for this instance of the restricted interdiction game if and only if Alice has a guaranteed win for the original instance of the adversarial selection game. 
In the second part of the proof, we will show how to relax the four additional rules.
That is, we will reduce the instance of the restricted interdiction game into an instance of the (original) interdiction game.
We will show that Adam has a guaranteed win for this instance of the interdiction game if and only if he has a win for the restricted interdiction game.
Finally, we show the completeness of the protection-interdiction game for \emph{all} $\NPS$-complete problems.

\subsection{The complexity of Protection-Interdiction Sat-V}

In this subsection, we establish the complexity of {\sc Comb. $k$-Move Interdiction $\satV$}.
We remark that the proof is quite involved.
It is possible for the reader to skip this subsection, assuming only its main result. 

\begin{theorem}
    The {\sc Comb. $k$-Move Interdiction Game on Sat-V} is $\Sigma^p_k$-complete for all constants $k$.
    If $k$ is part of the input, then it is $\PSPACE$-complete.
\end{theorem}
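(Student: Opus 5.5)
Membership is immediate from the quantifier form of \Cref{def:comb-prot-int}: there are $k$ alternating quantified subsets of $\U(I)$ followed by a final existential (for odd $k$) or universal (for even $k$) choice of $S$. That final choice merges with the last quantifier block, which has the same type ($\forall B_{k-1}$ with $\exists S$ collapses only after negation, so the count must be done carefully). The budget checks and the verifier for \textsc{Sat-V} run in polynomial time, so the problem lies in $\Sigma^p_k$, and in $\PSPACE$ when $k$ is part of the input. For hardness I reduce from \textsc{$k$-Move Adversarial Selection Game on Sat-V}, which is $\Sigma^p_k$-complete (resp.\ $\PSPACE$-complete) by Stockmeyer and Wrathall. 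I treat odd $k$ in detail and handle even $k$ symmetrically, with the roles of CNF and DNF exchanged.

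\textbf{Step 1 (restricted game).} Given $(\varphi, U_1,\dots,U_k)$, I double every variable $x_j\in U_i$ with $i<k$ into two fresh variables $u_{ij}, w_{ij}$, meaning ``$x_j$ true'' and ``$x_j$ false''. I add clauses forcing, in the final solution, exactly one of $u_{ij},w_{ij}$ to be selected as true. Every occurrence of $x_j$ (resp.\ $\overline x_j$) in $\varphi$ is replaced by $u_{ij}$ (resp.\ $w_{ij}$). The variables of $U_k$ stay free for the final move. I set $C_i$ to the union of the pairs $U_{i'}\cup W_{i'}$ for $i'\le i$. In an odd move the protector protects the element matching Alice's choice, and in an even move the blocker blocks the element opposite to Bob's choice. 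Rules~1--4 (\verb|\restrictedGameRules|) pin down the play: every pair is decided, the unprotected elements of the protector's pairs get blocked, and at the end the unblocked literal of each pair must be selected. Under these rules, play maps one-to-one onto play of the selection game, and the final $S$ is a satisfying assignment iff $\varphi$ is satisfied by the induced assignment. So Adam wins the restricted game iff Alice wins the selection game; I would prove this by dual mimicking, exactly as in \Cref{Kmoveselection}.

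\textbf{Step 2 (removing the rules).} This is the main obstacle. Rules 2 and 3 are enforced by budgets. With $n$ variables per level, I set $\Gamma_B$ and $\Gamma_P$ to exactly the number of tokens the rules prescribe. I then add ``penalty'' clauses or padding gadgets so that any deviation by a player either wastes tokens or leaves a pair in which both literals are available. In the latter case the opponent can punish, for instance by having the blocker block both literals of a pair, which makes the ``exactly one'' clause unsatisfiable. For the global-budget version, I need to stop a player from hoarding tokens for later moves. My first attempt is to give each level its own private set of dummy elements that become accessible only in $C_i$ and must be spent immediately. Failing that, I weight levels by making later pairs expensive via copies, so that deferring tokens never helps. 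Rule 4 is automatic, since blocked elements cannot be selected and the exactly-one clauses force selection of the unblocked one. The proof then shows: a deviating player has no better outcome than a conforming one, so the winner of the relaxed game equals the winner of the restricted game. For the local-budget variant, the per-move budgets make the counting simpler.

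Finally, the reduction is polynomial even when $k$ is part of the input, which gives $\PSPACE$-hardness.
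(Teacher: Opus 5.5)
Step~1 matches the paper's restricted game. Step~2 has a genuine gap: nothing punishes the \emph{blocker} for deviating, and with your unconditional ``exactly one'' clauses she wins every produced instance.

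For a blocker level $i$, the pair $\{u_{ij},w_{ij}\}$ first becomes accessible in $C_i$, and move $i$ belongs to the blocker. She can therefore block both elements before the protector can touch either. That makes $(u_{ij}\lor w_{ij})$ unsatisfiable and decides the game regardless of the quantified formula.

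A tight $\Gamma_B$ does not help. Her overspending only means that some other pair later stays fully unblocked, and the resulting free choice for the protector is worthless once one clause is dead. Your only listed punishment (``the blocker blocks both literals of a pair'') is aimed at the protector. The dummy-element and copy-weighting ideas for hoarding do not touch this issue either. So the claim that ``a deviating player has no better outcome than a conforming one'' fails for the blocker.

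The missing idea is to turn a blocker's deviation into a protector win. The paper does this with cheat-detection variables:
\begin{itemize}
\item $\varphi$ and all rule-enforcing clauses are weakened by $\lor s$ (resp.\ $\lor s\lor t$).
\item $s$ can be set to true only through helper variables, via $s_{ij}\rightarrow(u_{ij}\land w_{ij})$ and $\overline s\lor\bigvee s_{ij}$. So $s$ is available only if some pair has both elements unblocked; $t$ works the same way via the $t_{ij}$ at protector levels.
\item Since $\Gamma_B=n(k-1)$ equals the number of pairs, a double block forces, by pigeonhole, some fully unblocked pair. A skipped block is answered immediately by protecting (or selecting) both elements. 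Either way the protector sets a detection variable and satisfies every clause.
\end{itemize}
Two separate detection variables are essential. The variable $t$, triggered by a fully unblocked protector-level pair (a Rule~3 violation), must not satisfy the protector-level clauses $(u_{ij}\lor w_{ij}\lor s)$. Otherwise a protector who protects both elements of one pair and neither of another could trigger detection himself after the blocker double-blocks the unprotected pair.

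The exactly-one constraint must also be conditional, $(\overline u_{ij}\leftrightarrow w_{ij})\lor s\lor t$, because detection requires both elements of a pair to be true. The paper then argues by cases on the first rule broken. This is precisely the part your proposal leaves as unspecified ``penalty clauses or padding gadgets.''

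A minor point on membership: $\forall B_{k-1}$ and $\exists S$ do not merge. The $k-1$ token moves plus the final choice of $S$ are exactly $k$ alternating blocks, which gives $\Sigma^p_k$.
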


\paragraph*{The Restricted Interdiction Game on $\satV$}

In this section, we will focus mainly on the variant with global budget, since its hardness proof is more involved.
The variant with local budget will follow as an easy consequence.
We begin by describing the restricted interdiction game on $\satV$.
Our hardness reduction starts from the $\Sigma^p_k$-complete {\sc $k$-move Adversarial Selection Game on $\satV$}, as defined in \cref{sec:selection-game}.
We show the transformation here for the case that $k$ is odd.

Let  $I_\text{sel} := (I, \mathcal P)$ be an instance of {\sc $k$-Move Adversarial Selection Game on $\satV$}, where $I$ is the corresponding \textsc{Sat} instance and $\mathcal P = \{U_1, \ldots, U_k\}$ is a partition of the universe $\U(I)$.  
Without loss of generality, we may assume that $|U_i| = n$ for all $i = 1$ to $k$.  (This can be accomplished by adding dummy elements to the smaller $U_i$.)
Let $U_i = \{u_{i1}, \dots, u_{in}\}$.
We recall the rules of the adversarial selection game:
Alice plays against Bob, and Alice goes first.
For $i \equiv 1 \pmod{2}$, Alice selects $S_i \subseteq U_i$. For $i \equiv 0 \pmod{2}$, Bob selects $S_i \subseteq U_i$.
Alice wins if $\bigcup_{i=1}^k S_i \in \mathcal{S}$.
Otherwise, Bob wins.
Given the instance $I_\text{sel}$, we define a new instance $I_\text{restr}$ of the restricted interdiction game on $\satV$, described as follows.  

\begin{itemize}
    \item For each element $u_{ij}$, we create another Boolean variable $w_{ij}$, which is called the \emph{pseudo-complement} of $u_{ij}$.
    For each $i = 1$ to $k$, we let $W_i := \set{w_{ij} : j \in [n]}$.
    \item The nested sets of the $k$-move interdiction game are $C_1, ..., C_k$, where for each $i$, $C_i = \widehat U_i \cup \widehat W_i$.
    \item  The clauses of $I_\text{restr}$ are the same clauses as in $I_\text{sel}$.
    Note that none of the clauses contains a variable from $W = \bigcup _{i=1}^k W_i$.
    \item The budgets are $\Gamma_P =n(k-1)/2$ and $\Gamma_B = n(k-1)$.
\end{itemize}

We now impose additional constraints on the behavior of Adam and Barbara in this restricted interdiction game. We impose four rules that are designed so that the following correspondences will hold when the players use dual mimicking. For $i = 1$ to $k-1$ and for $j = 1$ to $n$, a token is placed on $u_{ij}$ by Alice or Bob in the adversarial selection game if and only if Barbara does not place a token on $u_{ij}$ in the interdiction game.  We will use this correspondence to show that Alice can win the adversarial selection game if and only if Adam can win the interdiction game.

 \restrictedGameRules

Although we have specified the above as rules that should be satisfied, we consider the possibility that the rules are violated.  In the case that a rule is violated, the first player to violate a rule loses.  If no rule is violated, then Adam wins if he can select a set $S'$ of elements satisfying Rule 4 such that  $S'\cap U$ is a solution to the original {\sc Sat} formula, i.e., $S' \cap U \in \sol(I)$.  Otherwise, Barbara wins.
We refer to the $k$-move interdiction game with the above additional rules as the {\sc $k$-Move Restricted Interdiction Game}. 

\begin{lemma}
\label{lem:restricted-protect-interdict}
    Adam has a winning strategy for the instance $I_\text{restr}$ of {\sc $k$-move Restricted Interdiction Game on $\satV$} if and only if Alice has a winning strategy for the instance $I_\text{sel}$ of {\sc $k$-move Adversarial Selection Game on $\satV$}.
\end{lemma}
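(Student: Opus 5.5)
The plan is a dual mimicking argument, as in the proof of \Cref{Kmoveselection}. It rests on an exact correspondence between rule-abiding plays of $I_\text{restr}$ and plays of $I_\text{sel}$. The invariant to maintain is the one announced before the rules: for every $i<k$ and $j\in[n]$,
\begin{equation*}
u_{ij}\in S_i \iff u_{ij} \text{ is not blocked by Barbara by the end of move } k-1 .
\end{equation*}

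\textbf{Step 1: budgets are tight, so rule-abiding players have no slack.}
\begin{itemize}
\item Rule 2 forces Adam to place exactly $n$ protection tokens in each of the $(k-1)/2$ odd moves $i<k$. This uses all of $\Gamma_P = n(k-1)/2$, so Adam places no other protection tokens.
\item Rule 1 forces Barbara to place exactly $n$ blocks in each of the $(k-1)/2$ even moves.
\item Rule 3 forces her to block the $n$ unprotected elements of $U_i\cup W_i$ for each odd $i<k$. These elements are unprotected, hence distinct from the Rule 1 blocks, which lie in the even-indexed sets.
\item Together this is $n(k-1)=\Gamma_B$ blocks, so Barbara places no further blocks. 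In particular, for each even $i$ exactly one of $u_{ij},w_{ij}$ is blocked.
\item Elements of $U_k\cup W_k$ lie outside $C_{k-1}$ and are never blocked.
\end{itemize}
Hence in any rule-abiding play, for every $i<k$ and every $j$, exactly one of $u_{ij},w_{ij}$ ends up unblocked. By Rule 4, Adam's final set $S'$ must satisfy $S'\cap \widehat U_{k-1}=\{\text{unblocked } u_{ij}\}$, while $S'\cap U_k$ is his free choice. The $w$-variables occur in no clause, so they are irrelevant to whether $S'\cap U\in\sol(I)$.

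\textbf{Step 2: the dictionary between moves.}
\begin{itemize}
\item Odd move $i<k$: a choice $S_i\subseteq U_i$ of Alice corresponds to Adam protecting $\{u_{ij}:u_{ij}\in S_i\}\cup\{w_{ij}:u_{ij}\notin S_i\}$.
\item Even move $i$: a Barbara move blocking one element of each pair corresponds to Bob choosing $S_i=\{u_{ij}: u_{ij}\text{ not blocked}\}$. Conversely, Bob's $S_i$ corresponds to Barbara blocking $u_{ij}$ for $u_{ij}\notin S_i$ and $w_{ij}$ otherwise.
\item Rule 3 blocks are forced and carry no information. Barbara may distribute them over any even moves up to $k-1$, and this is harmless because Step 1 pins down the final blocked set.
\item Last move: Alice's $S_k$ corresponds to Adam's free choice $S'\cap U_k$.
\end{itemize}
Under this dictionary the invariant holds, so at the end $S'\cap U=\bigcup_i S_i$.

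\textbf{Step 3: dual mimicking in both directions.}
\begin{itemize}
\item ($\Leftarrow$) Suppose Alice has a winning strategy. Adam translates Alice's moves into protections. Whenever Barbara makes a move: if it violates a rule, Barbara loses immediately, since she is the first to violate one. Otherwise Bob mimics it via the dictionary. In the final move Alice's winning $S_k$ yields an $S'$ with $S'\cap U\in\sol(I)$, obeying Rule 4.
\item ($\Rightarrow$) Suppose Bob has a winning strategy. Barbara mimics Bob's choices, translated into her Rule 1 blocks, and performs her Rule 3 blocks at her next even move. Alice mimics Adam's protections, which must obey Rule 2 or Adam loses. At the end, every $S'$ allowed by Rule 4 satisfies $S'\cap U=\bigcup_i S_i$ for some $S_k$, and no such union is a solution.
\end{itemize}

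\textbf{Main obstacle.} The delicate point is Step 1: showing that a player cannot gain by deviating in ways the rules do not explicitly forbid. Examples are Adam protecting elements in even-indexed sets, Barbara blocking both $u_{ij}$ and $w_{ij}$, or Barbara timing her Rule 3 blocks late. The exact budget counting should rule all of these out, because any extra token forces the player to violate Rule 1, 2 or 3 later and thereby lose. I would check this carefully move by move.
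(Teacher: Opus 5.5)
Your proposal is correct and follows the paper's proof. It uses the same dual-mimicking argument and the same dictionary: Adam protects $u_{ij}$ or $w_{ij}$ according to Alice's choice, Bob selects the unblocked $u_{ij}$, and Barbara blocks according to Bob's choice and performs the Rule 3 blocks in her next move, which yields $S_i = U_i \setminus T'_i$ for all $i<k$. Your Step 1 budget count, where Rules 1--3 exactly exhaust $\Gamma_P = n(k-1)/2$ and $\Gamma_B = n(k-1)$, makes explicit something the paper only assumes with ``without any loss of generality Adam and Barbara satisfy all four rules'': rule-abiding play fixes the final blocked set.
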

\begin{proof}

We assume without any loss of generality that Adam and Barbara satisfy all four rules.  

Let $M_P = \{i: i \text{ is odd and } 1 \le i \le k-2\}$ be the iterations on which Adam is protecting elements.  Let $M_B = \{i: i \text{ is even and } 1 \le i \le k-1\}$ be the iterations on which Barbara is blocking elements.
Consider first the case that $I_\text{sel}$ is a win for Alice. We now show that $I_\text{restr}$ is a win for Adam even if Barbara plays optimally. Our proof uses a dual mimicking strategy argument. Suppose that Alice plays optimally, and is thus guaranteed a win. The dual mimicking strategies for Adam and Bob on moves $1$ to $k-1$ are as follows.

\begin{itemize}
  \item \textbf{Adam:} For each $i \in M_P$ and $j \in \{1, ..., n\}$, if Alice selects $u_{ij}$ on move $i$, then Adam protects $u_{ij}$. If Alice does not select  $u_{ij}$, then Adam protects $w_{ij}$.  (The rule guarantees that Adam will satisfy Rule 2 and protect exactly one of $u_{ij}$ and $w_{ij}$.)  
  \item \textbf{Bob:} For each $i \in M_B$ and $j \in \{1, ..., n\}$, if Barbara blocks $u_{ij}$, then Bob does not select $u_{ij}$.  If Barbara blocks $w_{ij}$, then Bob selects $u_{ij}$. (By Rule 1, Barbara blocks exactly one of $u_{ij}$ and $w_{ij}$.)
\end{itemize}

Let $S_i \subseteq U_i$ denote the subset of elements selected by Alice or Bob in move $i$ of the adversarial selection game. For $i \in M_P$, let $S'_i \subseteq (U_i \cup W_i)$ denote the elements protected by Adam.  By Rule 3, for $i \in M_P$, Barbara blocks $T'_i = (U_i \cup W_i)\setminus S'_i$.  For $i \in M_B$, let $T'_i \subseteq (U_i \cup W_i)$ denote the elements blocked by Barbara.

We now claim that for each $i = 1$ to $k-1$, $S_i = U_i\setminus T'_i$.  If the claim is true, then Adam can satisfy Rule 4 and select a subset $S^*$ in move $k$ that contains all unblocked elements of $\bigcup _{i=1}^{k-1} (U_i \cup W_i)$ (thus satisfying Rule 4), and such that $S^* \cap U = \bigcup_{i=1}^k S_i \in \sol(I)$ (thus winning the game).
To see that the claim is true for even values of $i$, note that Bob selects $S_i = U_i\setminus T'_i$. 
To see that the claim is true for odd values of $i$, note that Adam selects $S'_i$  such that $S'_i \cap U = S_i$. By Rule 3, Barbara blocks $T'_i = S'_i\setminus (U_i \cup W_i)$.  Therefore, $S_i = U_i\setminus T'_i$.
Because the claim is true, Adam has a guaranteed win.

We now consider the case that $I$ is a loss for Alice and a win for Bob. We will show that $I'$ is a loss for Adam and a win for Barbara, even if Adam plays optimally. 
We assume that Adam and Bob play optimal strategies for their respective games.  Alice and Barbara use the following mimicking strategies.

\begin{itemize}
  \item \textbf{Alice:} If $i \in M_P$ and Adam selects $u_{ij}$, then Alice selects $u_{ij}$. If Adam selects $w_{ij}$, then Alice does not select $u_{ij}$.
  \item \textbf{Barbara:} If $i\in M_B$ and Bob selects $u_{ij}$, then Barbara blocks $w_{ij}$. If Bob does not select $u_{ij}$, then Barbara blocks $u_{ij}$.  In addition, Barbara blocks the pseudo-complement of every element selected by Adam in his previous move.
\end{itemize}

As before, we let $S_i$ denote the elements of $U_i$ selected by Alice or Bob. For $i \in M_P$, we let $S'_i$ denote elements of $U_i \cup W_i$ that are protected by Adam.  And for $i \in (M_P \cup M_B)$, we let $T'_i$ denote elements of $U_i \cup W_i$ that are blocked by Barbara. 

We now claim that for each $i = 1$ to $k-1$, $S_i = U_i\setminus T'_i$.  If the claim is true, then Adam cannot win the game.  To see why, note that in the last move of the game, by Rule 4, for $i \in \{1, ..., k-1\}$, Adam must select $(U_i \cup W_i)\setminus T'_i$, and thus the elements of $\bigcup_{i=1}^{k-1} U_i$ that Adam selects is $\bigcup_{i=1}^{k-1} S_i$.  Since there is no $S_k \subseteq U_k$ such that $S = \bigcup_{i=1}^k S_i  \in \sol(I)$, Adam cannot win the game.  
To see that the claim is true for $i \in M_B$, note that Barbara's mimicking strategy ensures that $U_i\setminus T'_i = S_i$. 
To see that the claim is true for $i \in M_P$, note that Alice's mimicking strategy ensures that $S_i = S'_i \cap U_i$, and Rule 3 ensures that $S'_i \cap U_i = U_i \setminus T'_i$. 
As before, for $i \in M_P$, we suppose that Alice selects $S_i \subseteq U_i$ and Adam protects $S'_i \subseteq U_i \cup W_i$. For $i \in M_B$, we suppose that Bob selects $S_i \subseteq U_i$, and Barbara blocks $T'_i \subseteq U_i \cup W_i$.  For $i \in M_B$, let $S'_i$ denote the unblocked elements.  Because Bob has a winning game, there is no $S_k \subseteq U_k$ such that $S = \bigcup_{i=1}^k S_i  \in \sol(I)$. 

Because of Alice's mimicking strategy, for $i \in M_P$, $S_i = S'_i \cap U_i$.  Because of Barbara's mimicking strategy, for each $i \in M_B$, $S'_i \cap U_i = S_i$.  
\end{proof}

We now consider the case that $k$ is even.  Alice and Adam go first in their respective games, and Bob and Barbara go last.  This means that Adam is the blocker and Barbara is the protector.    This leads to a number of small changes in the problem instance and in the proof.

\begin{enumerate}
    \item $\Gamma_P = n(k-2)/2$;  $\Gamma_B = n(k-1)$.
    \item Adam wins if Bob cannot select a subset $S$ of unblocked elements with $S \in \sol(I)$.
    \item The blocker moves when the iteration number $i$ is odd, and the protector moves when $i$ is even.  
\end{enumerate}

As before, we would prove that $I_\text{sel}$ is a win for Alice if and only if $I_\text{restr}$ is a win for Adam.  The proofs take into account that when $k$ is even, we need to prove that Adam (the blocker) wins when Alice wins. This proof looks similar to the proof that when $k$ is odd, Barbara (the blocker) wins when Bob wins.  Other than these few changes, the proof is essentially the same.

\paragraph*{Introducing Gadgets for the rules}

Having shown the hardness of the restricted version, we can extend the result to show the hardness of the unrestricted version.

\begin{theorem}
    \textsc{Comb. $k$-Move Interdiction Game on \textsc{Sat-V}} is $\Sigma^p_k$-complete for all constants $k$, and $\PSPACE$-complete for $k$ part of the input.
\end{theorem}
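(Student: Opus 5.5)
Membership in $\Sigma^p_k$ (resp.\ $\PSPACE$) follows directly from \Cref{def:comb-prot-int}. There are exactly $k-1$ alternating token moves, followed by the final existential (odd $k$) or universal (even $k$) choice of $S$. That final choice merges with the $(k-1)$-st quantifier's successor, so we get $k$ alternating blocks over polynomial-size subsets of $\U(I)$ and a polynomial-time checkable matrix. For hardness, I will reduce the \textsc{$k$-Move Restricted Interdiction Game on $\satV$} to the unrestricted one, and then invoke \Cref{lem:restricted-protect-interdict} together with the Stockmeyer--Wrathall theorem. The idea is to keep the instance $I_\text{restr}$ (pairs $u_{ij},w_{ij}$, nested sets $C_i=\widehat U_i\cup\widehat W_i$, budgets $\Gamma_P=n(k-1)/2$, $\Gamma_B=n(k-1)$) but to modify the CNF formula so that each of Rules 1--4 is either forced by the budgets or enforced by the clauses. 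I treat odd $k$; even $k$ is symmetric with roles swapped.

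\textbf{Step 1: Rule 4.} I add, for every $i<k$ and $j\in[n]$, the clause $(u_{ij}\lor w_{ij})$. If exactly one element of a pair is blocked, this forces the other to be true. Since $\satV$ solutions are monotone only in the sense of the formula, setting $w_{ij}$ true costs nothing, because the $w$-variables occur in no other clause. Hence, as long as Rules 1--3 hold, every pair has exactly one blocked element, and the protector's final assignment is forced on $\widehat U_{k-1}$ to be ``unblocked $=$ true''. This is exactly Rule 4.

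\textbf{Step 2: escape gadget.} Next, I make any deviation from Rules 1--3 unprofitable. I replace the formula $\psi=\varphi\land\bigwedge_{i<k,j}(u_{ij}\lor w_{ij})$ by
\[
\psi' \;=\; \psi \;\lor\; \bigvee_{i<k,\,j\in[n]} (u_{ij}\land w_{ij}),
\]
converted to CNF via Tseitin variables. All Tseitin variables are placed outside $C_{k-1}$, so they are neither protectable nor blockable and are chosen freely in the final move. The effect is that whenever some pair is left entirely unblocked, the protector wins immediately.

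\textbf{Step 3: budget accounting.} Now the budgets are tight. Barbara has exactly one token per pair over all $k-1$ levels. If she ever blocks both elements of a pair, or skips a pair in an odd level (violating Rule 3), then by counting some pair ends up fully unblocked, and Adam escapes. The one exception is a pair where Adam himself protected both elements. Symmetrically, Adam has exactly one protection token per odd-level pair. If he protects both elements of some pair, or none (violating Rule 2), then some other odd-level pair is entirely unprotected. Barbara can then block both of its elements (destroying $(u\lor w)$ there), and I must check that she still has enough budget to avoid leaving a pair open elsewhere.

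I expect this last accounting, for the global budget, to be the main obstacle. Adam's double-protection saves Barbara a token on that pair, and that token exactly pays for her double-block of the abandoned pair. So the counts should balance, but this has to be verified carefully, including the possibility of spending tokens early in later-accessible regions $C_i$. If it fails, my first fix would be to give each pair a small weighted gadget, i.e.\ several parallel copies of $w_{ij}$, so that deviations cost strictly more than they gain. With Rules 1--3 thus enforced, winning strategies correspond one-to-one with those of the restricted game. The local-budget variant is easier, since per-move budgets of $n$ make the counting local to each level.
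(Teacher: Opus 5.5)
\textbf{The escape gadget in Step~2 is fatally flawed.} Your membership argument is fine, and so are the overall plan (reduce via \Cref{lem:restricted-protect-interdict}, then add gadgets enforcing Rules 1--4) and Step~1. The problem is that a single escape $\bigvee_{i<k,j}(u_{ij}\land w_{ij})$, which fires whenever \emph{any} pair is fully available, can be triggered by the protector himself. Protected elements can never be blocked. So for odd $k\ge 3$ and $\Gamma_P=n(k-1)/2\ge 2$, Adam protects both $u_{11}$ and $w_{11}$ in move~1. In move $k$ he sets both to true (the Tseitin variables are free) and satisfies $\psi'$ regardless of $\varphi$. Every produced instance is therefore a yes-instance. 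For even $k\ge 4$ the protector, now the second player, always wins in the same way, so every instance becomes a no-instance. Either way the reduction is incorrect.

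Your Step~3 does mention ``a pair where Adam himself protected both elements'', but only as an item in Barbara's budget count. The real issue is that this pair is itself an open pair. Duplicating $w_{ij}$ or re-balancing budgets does not address this. The trouble is not the cost of a deviation: an open pair at a protector level releases everything, and the protector can pay for opening one pair by abandoning another.

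\textbf{The missing idea is to make the escape depend on the parity of the level of the open pair.} The paper does this with two cheat-detection variables.
\begin{itemize}
\item A pair at a blocker level $i\in M_B$ is first accessible to Barbara, so it can be left open only by her cheating. Its escape $s$ (via $s_{ij}\to(u_{ij}\land w_{ij})$ and $\overline s\lor\bigvee s_{ij}$) may release every clause.
\item A pair at a protector level $i\in M_P$ can also be opened by Adam's double protection. Its escape $t$ therefore must \emph{not} release the clauses $(u_{ij}\lor w_{ij}\lor s)$ for $i\in M_P$. It does release $\varphi$, the complement-consistency clauses, and $(u_{ij}\lor w_{ij}\lor s\lor t)$ for $i\in M_B$.
\end{itemize}
With this asymmetry, the two kinds of cheating are handled differently:
\begin{itemize}
\item If Adam double-protects, the budget forces some protector-level pair to stay unprotected. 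Barbara blocks both of its elements, paying with the token she saved. That clause is then unsatisfiable, since $t$ does not cover it and $s$ is unavailable.
\item If Barbara skips a Rule-3 block, Adam still has one protected element in every protector-level pair. She cannot double-block any of those pairs, and $t$ covers whatever her spare token destroys at a blocker level.
\end{itemize}
With this asymmetric gadget in place of your single escape, the remaining case analysis goes through essentially as in the paper.
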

\begin{proof}
    We show that there exists a reduction from {\sc $k$-Move Adversarial Selection Game on Sat-V} to {\sc $k$-Move Interdiction Game on Sat-V}, such that all the additional constraints of the \emph{restricted} interdiction game can be modeled by certain gadgets in the reduction. 
    Thus, together with \cref{lem:restricted-protect-interdict}, this proves the theorem.
    
    We again first assume that $k$ is odd, and discuss the case where $k$ is even at the end.
    Given an instance $I_\text{sel} := (I, \set{U_1, \dots, U_k})$ of {\sc $k$-Move Adversarial Selection Game on $\satV$}, 
    we use it to define an instance $I_\text{restr}$ of the restricted interdiction game just as above. 
    We then further transform the instance $I_\text{restr}$ into an instance $I_\text{int}$ of the unrestricted interdiction game.
    The main idea is that we transform the initial CNF-{\sc Sat} formula $\varphi$ of the instance $I_\text{restr}$ into a new CNF-{\sc Sat} formula $\psi$ for the instance $I_\text{int}$, 
    such that whenever Adam and Barbara play the unrestricted game on $\psi$, then in any optimal strategy Adam and Barbara respect rules 1 -- 4. Therefore, for this choice of $\psi$, the restricted and the unrestricted game are equivalent.
    
    We first give the formal definition of $\psi$, and then explain the intuition behind its different components.
    Given the initial formula $\varphi$ on variables $u_{ij}$ for $i \in [k], j \in [n]$, we introduce new variables $w_{ij}$ for $i \in [k], j \in [n]$, which represents the pseudo-complement of $u_{ij}$  from the restricted game.
    We furthermore introduce additional helper variables $s_{ij}$ for all $i \in [k], i$ even, $j \in [n]$, variables $t_{ij}$ for all $i \in [k-1]$, $i$ odd, $j \in [n]$, and two final helper variables $s,t$. 
    
      Here, we are using the notation $U$, $W$, $u_{ij}$ and $w_{ij}$ in two different ways.  When used in a Boolean expression such as $\varphi(U)$, $U$ (and implicitly $u_{ij}$ and $w_{ij}$) are treated as Boolean vectors and variables. When treated as Boolean variables, the literals $\overline u_{ij}$ and $\overline w_{ij}$ are well-defined.  When we write $S \in \sol$, we interpret it as follows. We let $u_{ij} = true$ if $u_{ij} \in S$, and we let $u_{ij} = false$ otherwise. Then $\varphi(U) = true$ if every clause is satisfied.
    The formula $\psi$ is then defined as:

    \textbf{Formula $\psi$:}
        \begin{align}
        \left(\varphi(U) \lor s \lor t \right) \label{eq:psi-1} \tag{$\ell_1$}\\ 
        \land \left( \bigwedge_{i\in M_P} \ \bigwedge_{j=1}^n(u_{ij} \lor w_{ij} \lor s)\right)
        \land \left(\bigwedge_{i\in M_B}\ \bigwedge_{j=1}^n(u_{ij} \lor w_{ij} \lor s \lor t)\right) \label{eq:psi-2}\tag{$\ell_2$}\\
        \land \left(\bigwedge_{i\in M_B}\ \bigwedge_{j=1}^n (s_{ij} \rightarrow (u_{ij}\land w_{ij}))\right)
        \land \left(\bigwedge_{i\in M_P}\ \bigwedge_{j=1}^n (t_{ij} \rightarrow (u_{ij}\land w_{ij}))\right) \label{eq:psi-3}\tag{$\ell_3$}\\
        \land (\ \overline t \lor \bigvee_{i\in M_P}\ \bigvee_{j=1}^n t_{ij}) \land (\overline s \lor \bigvee_{i \in M_B}\ \bigvee_{j=1}^n s_{ij}) \label{eq:psi-4}\tag{$\ell_4$}\\
        \land \bigwedge_{i=1}^k\bigwedge_{j=1}^n
        ((\overline u_{ij} \leftrightarrow w_{ij}) \lor s \lor t). \label{eq:psi-5}\tag{$\ell_5$}
    \end{align}
\end{proof}

The nested sets of the unrestricted interdiction game, as well as the budgets of protector and interdictor remain exactly the same as in the restricted interdiction game, 
that is, $C_i = \widehat U_i \cup \widehat W_i$ for $i \in [k-1]$, and $\Gamma_P = n(k-1)/2$ and $\Gamma_B = n(k-1)$.
We clarify that from the four literals $u_{ij}, \overline u_{ij}, w_{ij}, \overline w_{ij}$, only the two positive literals $u_{ij},w_{ij}$ are contained in $C_i$ for $i \leq k-1$.  
As a consequence, for all $i \in [k-1]$, in the $i$-th move of the game the protector and interdictor are allowed to place their tokens on the set of positive literals $C_i$. 
Only in the last move, that is, the $k$-th move, the protector is allowed to freely choose an assignment of $\psi$, represented as a subset of all literals (including the literals corresponding to helper variables) that are not occupied by an interdiction token.
We now explain why the reduction is correct. 
Recall the rules 1 -- 4 from the restricted game, which we repeat here for convenience.

\begin{itemize}
  \item \textbf{Rule 1.}  In move $i \in M_B$, the Blocker (Barbara) must block exactly one of $u_{ij}$ or $w_{ij}$ for each $j \in \{1, ..., n\}$. 
  \item \textbf{Rule 2.}  In move $i \in M_P$, the Protector (Adam) must protect exactly one of $u_{ij}$ or $w_{ij}$ for each $j \in \{1, ..., n\}$.
  \item \textbf{Rule 3.}  By the end of move $k-1$, the Blocker must have blocked each unprotected element of the set $\bigcup_{i \in M_P}\ (U_i \cup W_i)$. 
  \item \textbf{Rule 4.}  When selecting elements in move $k$, the Protector must select each unblocked element of $\bigcup_{i=1}^{k-1} (U_i \cup W_i)$.  
\end{itemize}

The main idea is to introduce so-called \emph{cheat-detection variables} $s,t$.  The variables and clauses of the interdiction instance are designed so that the first person to cheat (i.e., violate one of the four rules) will lose the interdiction game. 
The cheat-detection works roughly as follows (we later explain this in more detail). If Barbara (the blocker) is the first player to cheat, and if she cheats by breaking Rule 1 in move $i \in M_B$, then Adam can protect elements in move $i+1$ that will ensure his ability to let $s=1$ in move $k$.  And if $s = 1$, then Adam wins.
If Adam cheats first by breaking Rule 2 in move $i \in M_P$, then Barbara can block variables in move $i+1$ that preclude Adam from winning the game.
If Barbara cheats first by violating Rule 3 at the end of move $k-1$, then Adam can set $t = 1$ in move $k$ and guarantee a win. Finally, if Rules 1, 2, and 3 are satisfied in moves 1 to $k-1$, then any satisfying Boolean expression will satisfy Rule 4. Thus, to have a possibility of winning, Adam must satisfy Rule 4.

To implement the behavior described above, we require two different cheat-detection variables $s,t$, with $s$ being responsible for even $i$, and $t$ being responsible for odd $i$. 
Under this interpretation, the definition of $\psi$ can be interpreted as follows.
Line \eqref{eq:psi-1} states that Adam needs to satisfy $\varphi(U)$, except if he detects a cheat by Barbara. 
Line \eqref{eq:psi-2} states that Adam needs to select at least one literal from each pair of pseudo-complements, unless he has detected a cheat, with slight difference in power of cheat-detection for odd $i$ and even $i$.
Line \eqref{eq:psi-3} states that setting a variable $s_{ij}$ or $t_{ij}$ is only possible, if a pair of pseudo-complement variables $u_{ij}, w_{ij}$ are both true.
Line \eqref{eq:psi-4} states that setting the cheat-detection variable $s$ to true is only possible if at least one variable $s_{ij}$ is true, 
for even $i$. Likewise, setting $t$ to true is only possible if at least one variable $t_{ij}$ is true, for odd $i \neq k$.
Finally, line \eqref{eq:psi-5} states that Adam's assignment of variables $w_{ij}$ needs to respect the pseudo-complements assignment, except if he detected a cheat.

Let us now argue that for optimal play of Adam and Barbara, Rules 1 -- 4 are indeed followed. We make a case distinction upon the \emph{first} rule that is broken.

 \textbf{Case 1:} \textit{Rule 1 is the first one broken.} Let us assume that for some even $i$, all the rules in moves $1,\dots,i-1$ are followed, but in move $i$ for some $j \in [n]$, Barbara does not block exactly one of $u_{ij}$ nor $w_{ij}$. This means she blocks either both, or neither of them. 
Observe that since Adam played according to the rules so far, at this point he has spent $n \cdot i/2$ of his total $\Gamma_P = n(k-1)/2$ protection budget so far.

\textbf{Case 1a:} Assume that Barbara blocked neither $u_{ij}$ nor $w_{ij}$ in move $i$. 
Then  in case $i+1 < k$, Adam protects both $u_{ij}$ and $w_{ij}$ in the immediate next move $i+1$. Note that this is possible since Adam still has two units of budget left. In case $i+1 = k$, Adam simply selects both $u_{ij}$ and $w_{ij}$ in move $i+1 = k$. 
In all cases, in the final move $k$, Adam can set the cheat-detection variable $s$ to true (and $t$ to false). Since the variable $s$ appears in every clause except \eqref{eq:psi-4}, Adam trivially wins the instance $I_\text{prot}$.

\textbf{Case 1b:} Assume that Barbara blocked both $u_{i',j'}$ and $w_{i',j'}$ in move $i'$. 
Then Adam employs the following strategy: Beginning from move $i'+1$, Adam still makes sure to follow Rule 1 by protecting exactly one of $u_{ij}$ and $w_{ij}$ for all $i \in M_P$ and all $j \in [n]$. (Adam's choice whether to protect $u_{ij}$ or $w_{ij}$ is allowed to be arbitrary.)
This is possible since Adam has sufficient budget left.
By the end of move $k-1$, Alice has blocked at most $\Gamma_B = n(k-1)$ elements, including both elements in pair $u_{i',j'}$ and $w_{i',j'}$.  By the pigeonhole principle, there is a pair of unblocked pseudo-complements in $S := \set{u_{i,j}, w_{i,j} : i \in [k-1], j \in [n]}$.

This means that Adam can set either the cheat-detection variable $s$ or $t$ to true. Then, Adam wins the game because line \eqref{eq:psi-2} for  $i \in M_P$ is true since Adam always followed Rule 1. All the other clauses are true because either $s$ or $t$ is true.

\textbf{Case 2:} \textit{Rule 2 is the first one broken.} 
Let us assume that for some $i \in M_P$, all the rules in moves $1,\dots,i-1$ are followed, but in move $i$ for some $j \in [n]$, Adam does not protect exactly one of $u_{ij}$ and $w_{ij}$.
This means Adam either protects both of them or neither of them.
Since Barbara has followed the rules so far, note that she has spent at most $n(i-1)$ of her budget $\Gamma_B$.

Then, Barbara can employ the following strategy starting from move $i+1$:
Let us assume first that Adam protected neither of $u_{ij}$ or $w_{ij}$. Then in the immediate next move $i+1$, Barbara can block both $u_{ij}$ and $w_{ij}$. And in every move after i, Barbara follows Rule 1.  (Ultimately, this will cause Barbara to break Rule 3, but she will win nonetheless.)

Because Barbara follows Rule 1 each time, Adam will not be able to set $s$ to true, and the clause $u_{ij} \lor w_{ij} \lor s$ from \eqref{eq:psi-2} will not be satisfied.  Thus Adam loses the game.
Finally, we consider the case that Adam protected both of $u_{ij}$ and $w_{ij}$.  Given that Adam's budget is $\Gamma_P = |M_P|n$,  by the pigeon hole principle, Adam will leave some unprotected pair $u_{i',j'}$ and $w_{i',j'}$ for $i' \in M_P$ and $j' \in [n]$. Barbara can follow Rule 1 in each move and also block both of the pair $u_{i',j'}$ and $w_{i',j'}$ in move $i'+1$.  As before, Adam will lose.

\textbf{Case 3:} \textit{Rule 3 is the first rule broken.} Note a breaking of Rule 3 can only be identified  at the end of move $k-1$.  By assumption, Rules 1 and 2 were satisfied in each move. Since Adam has maintained Rule 2, at least one of $u_{ij}, w_{ij}$ has been protected for all $i \in M_P$ and all $j \in [n]$. 
Suppose Barbara does not block at least one of $u_{ij}, w_{ij}$ for $i \in M_P$.  Then, in move $k$, Adam can select both $u_{ij}, w_{ij}$, and therefore set the cheat-detection variable $t$ to true. 
Then, Adam trivially satisfies all clauses, except the clause \eqref{eq:psi-2} for $i\in M_P$. 
But these clauses can also be satisfied by Adam, since at least one of every pair $u_{ij}, w_{ij}$ has been protected for all odd $i$, $i \neq k$. In total, in this case Adam can trivially win.

\textbf{Case 4:} \textit{Rule 4 is the first one broken.} If Rules 1 -- 3 are maintained, then Adam must set the cheat-detection variables $s$ and $t$ to false.

Since rules 1 -- 3 have been followed, and since $\Gamma_B = n(k-1)$, for every $i\in [k-1]$ and $j \in [n]$, exactly one of $u_{ij}, w_{ij}$ has been blocked. 
For each $i \in [k-1]$ and $j \in [n]$, to satisfy clauses \eqref{eq:psi-5}, Adam must select the unblocked element in $\{u_{ij}, w_{ij}\}$. Hence Adam has to follow rule 4 to fulfill the \sat-formula.

We finally arrive at the following conclusion: The optimal strategy for Adam and Barbara in the unrestricted game $I_\text{int}$ is as follows: 
\begin{itemize}
    \item For each $i \in M_P$, in move $i$, Adam protects exactly one of  $u_{ij}, w_{ij}$ for each $j \in [n]$.
    \item For each $i \in M_B$, in move $i$, Barbara blocks exactly one of $u_{ij}, w_{ij}$ for each $j \in [n]$.
    \item By the end of move $k-1$, for each $i \in M_P$ and $j \in [n]$, Barbara blocks the unprotected element in $\{u_{ij}, w_{ij}\}$.
    \item In move $k$, for all $i < k$ and $j \in [n]$, Adam selects the unblocked element in $\{u_{ij}, w_{ij}\}$.
\end{itemize}
We conclude that the gagdets we introduced indeed reduce the instance $I_\text{in}$ of the unrestricted game to the instance $I_\text{restr}$ of the restricted game. In total, we have that Adam can win if and only if the original instance of the adversarial selection game on {\sc Sat-V} is a yes-instance. 

This completes the hardness proof for odd $k$ and global budgets $\Gamma_P,\Gamma_B$. We now discuss the case of local budgets. We can copy the hardness proof of the global budgets, 
and now set $\Gamma_P = n$ and $\Gamma_B = 2n$ instead. 
Following the logic of the above proof, this is not a restriction in any way, i.e.\ already in the construction for the global budget in an optimal strategy the protector places exactly $n$ tokens per turn, and the interdictor places exactly $2n$ tokens per turn. Hence by exactly the same proof, we can prove hardness of the variant with local budgets.

Finally, we discuss the case that $k$ is even. In this case, we keep the exact same construction, with precisely the same gadgets, except that now the interdictor is the first player, and plays all the moves with odd indices. 
Likewise, the protector moves second, and moves with all even indices. 
To account for this change, the role of \enquote{odd} and \enquote{even} is swapped in rules 1 -- 4 and in the gadgets. Furthermore, the budgets are modified to $\Gamma_P = n(k-2)/2$ and  $\Gamma_B = n(k-1)$.
The rest of the proof stays essentially the same. It can be seen, that the same logic applies, and also in the case of even $k$, the restricted game can be reduced to the unrestricted game.

In summary, we obtain that \textsc{Comb. $k$-Move Interdiction Game on \textsc{Sat-V}} is $\Sigma^p_k$-hard, if $k$ is a constant, and $\PSPACE$-hard, if $k$ is part of the input.
Containment in these classes follows from \Cref{lem:comb-protection-containment}.

\subsection{The Meta-Reduction}
In the previous subsection, we have established that the \textsc{Comb. $k$-Move Interdiction Game} in the special case of $\satV$ is $\Sigma^p_k$-hard.
In this subsection, we now make use of the structural properties of the $\NPS$ framework to extend this result to \emph{all} $\NPS$-complete problems, thereby proving \cref{thm:protection-interdiction-main-thm}. 
For convenience, we repeat the statement.

\mainThmInterdiction
    
\begin{proof}
    The containment in $\Sigma^p_k$ or $\PSPACE$ is established by \Cref{lem:comb-protection-containment}.
    
    We show the hardness by using a dual mimicking proof.
    Since $\Pi$ is $\NPS$-complete, there is a solution-embedding reduction $(g, f)$ from $\satV$ to $\Pi$.
    Let an instance of the \textsc{Comb. $k$-Move Interdiction Game on $\satV$} be given for an arbitrary $k$. 
    We define a new instance of \textsc{Comb. $k$-Move Interdiction Game on $\Pi$}.
    In order to do this, let $I$ be the underlying instance of $\satV$ in {\sc Comb. $k$-Move Interdiction Game on $\satV$}. 
    Then we can consider the instance $g(I) = I'$, which is an instance of $\Pi$.
    
    We turn $I'$ into an instance of \textsc{Comb. $k$-Move Interdiction Game-$\Pi$} in the following way. The nested sets 
    $C'_i$ are given by $C'_i := f(C_i)$ for all $i \in [k-1]$. 
    The budgets $\Gamma_P$ and $\Gamma_B$ remain exactly the same. This defines an instance of \textsc{Comb. $k$-Move Interdiction Game on $\Pi$}. This instance can be computed in polynomial time due to our assumptions on $(f,g)$.
    Recall that $f : \U(I) \to \U(I')$ is an injective function that embeds the universe of $\satV$ into the universe of $\Pi$.
    Here, we called its image $f(\U(I)) =: \Uemb$ the embedded universe, and $\Uaux = \U(I') \setminus \U(I)$ the auxiliary universe. 
    Note that by the construction of the instance of \textsc{Comb. $k$-Move Interdiction Game on $\Pi$}, the first part of the game, where tokens are placed, takes place only in $\Uemb$.
    
    We now claim, that in both these instances the same player has a winning strategy. If we can show this we have proved the theorem.
    For the sake of clarity, let us say that Alice and Bob play the combinatorial interdiction game for $\satV$, while Adam and Barbara play the combinatorial interdiction game for $\Pi$.

    Suppose Alice has a winning strategy in the game on $\satV$. Then Adam also has a winning strategy in the game on $\Pi$.  He partners with Bob, and both Bob and Adam will adopt mimicking strategies in their respective games.  The mimicking works as follows: 
    Whenever Barbara makes the $i$-th move of the game on $\Pi$, i.e.\ she places her tokens on some set $T'_i \subseteq C'_i$, Bob makes the move $T_i = f^{-1}(T'_i)$ in the game on $\satV$. 
    Then Adam observes Alice's best response $T_{i+1}$ in the game on $\satV$, and then places tokens on the set $T'_{i+1} = f(T_i)$.
    This means that Adam effectively steals the strategy of Alice in the game on $\satV$ while Bob is stealing Barbara's strategy in the game on $\Pi$. Adam uses the injective function $f$ in moves 1 to $k-1$. 
    In the final move $k$, this is not possible.
    Instead, by the rules of the game the protector wins the game on $\Pi$ if and only if they can find some $S \in \sol(I')$ which is disjoint from all interdiction tokens. 
    Let $T$ be the set of all interdiction tokens placed in the game on $\satV$, and $T'$ be the set of all interdiction tokens placed in the game on $\Pi$. Then $T' = f(T)$.
    By the solution embedding property, there exists some $S' \in \sol(I')$ with $S' \cap T' = \emptyset$ if and only if there exists $S \in \sol(I)$ with $S \cap f^{-1}(T') = S \cap T = \emptyset$. Because Alice can win her game,  Adam can also win his game. 
    
    Let us now assume that Bob has a winning strategy in the game on $\satV$. Then Barbara also has a winning strategy in the game on $\Pi$. This is because by an analogous argumentation, she can steal Bob's strategy by applying the injective function $f$.  And Alice can steal Adam's strategy by applying the function $f^{-1}$.  In move $k$ of the game on $\satV$, Alice will have no way of winning because Bob has a guaranteed win.  Because of the solution embedding property, in move $k$ of the game on $\Pi$, Adam will have no way of winning.   

    \begin{figure}[!ht]
        \centering
        \begin{tikzpicture}[scale=0.82]
            \node[below right] at (-5,-2.8) {$\U_{\textsc{\satV{}}}$};
            \draw[] ($(-5,0)$) rectangle ($(-2,-0.67)$);
            \node[above left] at (-2,-0.67) {$C_1$};
            \draw[] ($(-5,-0.67)$) rectangle ($(-2,-1.33)$);
            \node[above left] at (-2,-1.33) {$\ldots$};
            \draw[] ($(-5,-1.33)$) rectangle ($(-2,-2)$);
            \node[above left] at (-2,-2) {$C_{k-1}$};
            \draw[] ($(-5,-2)$) rectangle ($(-2,-2.8)$);
            \node[above left] at (-2,-2.8) {$\U_{\textsc{\satV{}}} \setminus \hat{C}_{k-1}$};
                        
            \draw[] ($(0,0.2)$) rectangle ($(5.2,-4)$);
            \node[below right] at (2,-2.8) {$f(\U_{\textsc{\satV{}}})$};
            \node[below right] at (0,-4) {$\U_\Pi$};
            \draw[] ($(1,0)$) rectangle ($(5,-0.67)$);
            \node[above left] at (5,-0.67) {$f(C_1)$};
            \draw[] ($(1,-0.67)$) rectangle ($(5,-1.33)$);
            \node[above left] at (5,-1.33) {$\ldots$};
            \draw[] ($(1,-1.33)$) rectangle ($(5,-2)$);
            \node[above left] at (5,-2) {$f(C_{k-1})$};
            \draw[] ($(1,-2)$) rectangle ($(5,-2.67)$);
            \node[above left] at (5,-2.8) {$f(\U_{\textsc{\satV{}}} \setminus \hat{C}_{k-1})$};
                        
            \draw[->, thick] (-2,-0.7) -- (1,-0.7);
            \node[above] () at (-1,-0.7) {$f$};
        \end{tikzpicture}
        \caption{
            The relation between the universes when applying the meta-reduction from the \textsc{Comb. $k$-Move Interdiction Game} for $\satV$ to the \textsc{Comb. $k$-Move Interdiction Game} for the problem $\Pi$.
            The function $f$ maintains a one-to-one correspondence between the elements of $C_i$ and $f(C_i)$.
        }
        \label{fig:schematic-meta-reduction}
    \end{figure}
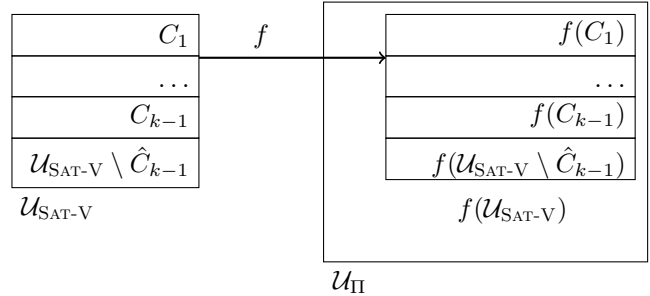
\end{proof}

\section{Adjustable Robust Problems}
\label{sec:two-stage}

In this section, we consider adjustable robust optimization problems with discrete budgeted uncertainty.
We show that for every single $\NPS$-complete problem, the corresponding two-stage adjustable problem is $\Sigma^p_3$-complete.
Moreover, we generalize these concepts of two-stage adjustable robustness to a multi-stage setting and show that $k$-stage adjustable problems are $\Sigma^p_{2k-1}$-complete and, if $k$ is part of the input, $\PSPACE$-complete.

In two-stage adjustable optimization, a decision maker has to find a solution under uncertainty.
Instead of having one stage of decision making, as in usual robust optimization, which may be too conservative, the decisions are divided into two stages:
first-stage decisions, also called here-and-now decisions, and second-stage decisions, also called wait-and-see decisions.
The first-stage decisions must be made before the uncertainty has been realized.
Then the uncertainty for the second stage is realized, after which the second-stage decisions are made.

We will consider two variants of adjustable robust problems.
In the first version, the costs on the universe elements are part of the uncertainty set.
In the second version, the available decisions are part of the uncertainty set.
Here, the unavailable decisions can be considered as blocked, or perhaps that the costs are so high as to make the decisions too costly.
We first present the version in which the uncertainty sets are defined with discrete budgeted uncertainty over the cost function.

The decision maker is given a cost function $c_1: \U(I) \to \Z$ for the first stage decisions.
The input for the second stage decisions includes lower bound costs $\underline c_2 \in \Z$ and upper bound costs $\overline c_2 \in \Z$ as well as an uncertainty parameter $\Gamma \in \Z$.
The uncertainty set for the second stage consists of all scenarios with $\underline c_2(u) \le c_2(u) \le \overline c_2(u)$ for all $u \in \U(I)$ such that at most $\Gamma$ elements from $u \in \U(I)$ have costs greater than $\underline c_2$.
The second stage costs are set by an adversary.  Whenever, the adversary sets a cost $c_2(e) > \underline c_2(e)$, it is optimal for the adversary to let $c_2(e) = \overline c_2(e)$.
Thus, we can formally define the uncertainty set $C_\Gamma$ of discrete budgeted uncertainty over the cost functions by all cost functions such that at most $\Gamma$ elements $u \in \U(I)$ have costs of $\overline c_2(u)$, while all other have $\underline c_2(u)$.
$$
    C_\Gamma := \{c_2 \mid \forall u \in \U(I) : c_2(u) = \underline c_2(u) + \delta_u(\overline c_2(u) - \underline c_2(u)), \ \delta_u \in \bin, \sum_{u \in \U(I)} \delta_u \leq \Gamma \}.
$$

We observe that if $c_2(u) \ge c_1(u)$ for all $u \in \U(I)$, it would be optimal for the decision maker to choose a complete decision in the first stage, thus making the second stage irrelevant.
However, for many problems of interest, $\underline c_2(u) < c_1(u)$ for some $u \in \U(I)$, and the second stage decisions are needed.

We model two-stage adjustable optimization as a game with three moves, one of which occurs in the first stage, and two of which occur in the second stage.
In the first move of the game, the decision maker selects a set $S_1 \subseteq \U(I)$ that incurs costs of $c_1(S_1)$.
In the second move, the adversary selects a cost function $c_2 \in C_\Gamma$.
In the third move, the decision maker selects an additional subset $S_2 \subseteq \U(I)$.
The decision maker wins if the solution $S_1 \cup S_2$ is valid, i.e. $S_1 \cup S_2 \in \sol(I)$, and has total cost is at most some specified threshold $t_{TS}$; i.e, $c_1(S_1) + c_2(S_2) \le t_{TS} $.
Otherwise, the adversary wins.

The following is an illustration of a two-stage adjustable robust problem derived from the TSP.

\begin{description}
    \item[]\textsc{Two-Stage Adjustable TSP}\hfill\\
    \textbf{Instance:}
    Graph $G = (V, E)$,
    first-stage cost function $c_1 : E \to \Z$,
    second-stage cost functions $\underline c_2 : E \to \Z$, $\overline c_2 : E \to \Z$,
    threshold $t_{TS} \in \Z$,
    and budget $\Gamma \in \Z$.\\
    \textbf{Question:}
    Is there a subset of edges $S_1 \subseteq E$ such that for all cost functions $c_2 \in C_\Gamma$, there is a an additional subset of edges $S_2 \subseteq E \setminus S_1$ such that $S_1 \cup S_2$ is a Hamiltonian cycle of weight $c_1(S_1) + c_2(S_2) \leq t_{TS}$?
\end{description}

When \textsc{Two-Stage Adjustable-TSP} is interpreted as a 3-move, 2 person game, it is easy to see that 
{\sc Two-Stage Adjustable TSP} is in $\Sigma^p_3$.
We next define two-stage adjustable problems for every problem in $\NPS$.

\begin{definition}[Two-Stage Adjustable Problem]
\label{def:two-stage}
    Let a problem $\Pi = (\I, \U, \sol)$ in $\NPS$ be given.
    The two-stage adjustable problem associated to $\Pi$ is denoted by \textsc{Two-Stage Adjustable-$\Pi$} and defined as follows:
    The input is an instance $I \in \I$ together with a first stage cost function $c_1: \U(I) \rightarrow \Z$, two second stage cost functions $\underline c_2: \U(I) \rightarrow \Z$ and $\overline c_2: \U(I) \rightarrow \Z$, a threshold $t_{TS} \in \Z$, and an uncertainty parameter $\Gamma \in \Z$.
    The question is whether
    $$
        \min_{S_1 \subseteq \U(I)} \max_{c_2 \in C_\Gamma} \min_{\substack{S_2 \subseteq \U(I) \setminus S_1\\ S_1 \cup S_2 \in \sol(I)}} c_1(S_1) + c_2(S_2) \leq t_{TS}.
    $$
\end{definition}

The main result of this section is the following theorem.

\begin{theorem}
\label{2stageadjustable}
    For each $\NPS$-complete problem $\Pi$, the two-stage adjustable variant \textsc{Two-Stage Adjustable-$\Pi$} is $\Sigma^p_3$-complete.
\end{theorem}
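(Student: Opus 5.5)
Membership in $\Sigma^p_3$ follows directly from the quantifier form of \Cref{def:two-stage}: the instance is a yes-instance iff there exists $S_1 \subseteq \U(I)$ such that for all $\delta \in \bin^{\U(I)}$ with $\sum_u \delta_u \le \Gamma$ there exists $S_2 \subseteq \U(I)\setminus S_1$ with $S_1\cup S_2 \in \sol(I)$ and $c_1(S_1)+c_2(S_2)\le t_{TS}$. All objects have polynomial size, and the final predicate is decidable in polynomial time by the verifier of $\Pi$. For hardness, I follow the same two-phase pattern as in \Cref{sec:interdiction}. First, I prove that \textsc{Two-Stage Adjustable-$\satV$} is $\Sigma^p_3$-hard. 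Second, I lift this to every $\NPS$-complete $\Pi$ by a meta-reduction through an SE reduction $(g,f)$ from $\satV$ to $\Pi$.

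For the $\satV$ base case, I reduce from the \textsc{Comb. $3$-Move Interdiction Game on $\satV$}, which is $\Sigma^p_3$-complete by \Cref{thm:protection-interdiction-main-thm} (local-budget variant). A reduction from the $3$-move adversarial selection game with additional gadgets would be an alternative. The encoding is as follows. Protecting an element in move 1 corresponds to buying it in the first stage, and blocking in move 2 corresponds to the adversary raising $c_2$ to a prohibitively large value $M$. The concrete choices are:
\begin{itemize}
  \item first-stage costs $c_1(u)=1$ on $C_1$ and $c_1(u)=M$ off $C_1$;
  \item second-stage costs $\underline c_2(u)=0$ everywhere;
  \item $\overline c_2(u)=M$ on $C_2$ and $\overline c_2(u)=0$ off $C_2$;
  \item uncertainty budget $\Gamma=\Gamma_B$;
  \item threshold $t_{TS}=\Gamma_P$, with $M>\Gamma_P$.
\end{itemize}

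The point that needs care is a mismatch between the two games. In the interdiction game, protected elements need not lie in the final solution $S$. In the adjustable problem, $S_1$ is forced into the solution, and items of $S_1$ cannot be "blocked" anyway. To fix this, I would either reduce from a variant of the interdiction game in which protected elements must be used, or pad the universe. The gadget proof above is helpful here: in optimal play Adam protects exactly one of $u_{ij},w_{ij}$ and then selects every unblocked element (Rule 4). So the protected set is always contained in the final solution, and the correspondence goes through with a dual mimicking argument. I would state this as a lemma and verify that the cheat-detection cases of that proof survive the change.

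The meta-step then mimics the proof of \Cref{thm:protection-interdiction-main-thm}. I set $I'=g(I)$ and transport the costs via $f$ onto $\Uemb$. On the auxiliary universe $\Uaux$ I use $c_1=M$, $\underline c_2=\overline c_2=0$, so auxiliary elements are free in stage two and never worth attacking. Because the adversary's cost increases only matter where they hit elements, attacks on $\Uaux$ are useless, and any attack in the $\Pi$-game can be pulled back by $f^{-1}$. The dual mimicking argument then works as before: Adam copies Alice's $S_1$ via $f$, Bob copies Barbara's scenario via $f^{-1}$, and in the last move (P1) and (P2) together with \Cref{lem:se} guarantee that a cheap completion exists in one game iff it exists in the other. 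Costs agree because auxiliary elements cost $0$.

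I expect the main obstacle to be the base case, specifically making the first-stage purchase semantics match protection exactly, with costs rather than hard token rules. Adversary cost increases on already purchased elements, and purchases of elements the protector would not actually use, must both be shown harmless. I would first try to resolve this through the Rule~4 structure of the gadgets from the preceding subsection.
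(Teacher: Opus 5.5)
Your plan is sound, but your $\satV$ base case takes a different route from the paper. The paper never goes through the interdiction game. It proves the general $k$-stage combinatorial statement by reducing the $(2k-1)$-move adversarial selection game to a restricted adjustable game: the decision maker selects exactly one of $u_{ij},w_{ij}$ in each of his moves, and the adversary blocks exactly one per pair in $D=\bigcup_{i=2}^{k}(U_{2i-2}\cup W_{2i-2})$. It enforces these rules with a new formula that has a single cheat-detection variable $s$, and obtains the two-stage theorem as the case $k=2$ via the $0/1$-cost encoding. The meta-step is the same SE dual mimicking you describe.

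You instead recycle the 3-move protection-interdiction instances built from $\psi$, encoding protection as a unit-cost first-stage purchase under threshold $\Gamma_P$ and blocking as raising costs to $M$. This works, and the lemma you defer is easier than you expect.
\begin{itemize}
\item Since $S_2\subseteq\U(I)\setminus S_1$, attacks on purchased elements only waste adversary budget. So your decision maker is exactly a protector who must additionally use everything he protected. Hence \enquote{interdictor wins $\Rightarrow$ adversary wins} holds for arbitrary instances, and only the converse needs the structure of $\psi$.
\item For the converse, in every branch of the rule-following winning strategy from the gadget proof, the final assignment can set all protected variables true. If the blocker obeys Rules 1 and 3, then $s=t=0$ is forced, and each protected variable is precisely the unblocked partner that $(\ell_5)$ forces true.
\item If she deviates, the budget $\Gamma_B=2n$ leaves a fully unblocked pair. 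Setting $s\lor t$ true then switches off $(\ell_1)$ and $(\ell_5)$, and all remaining clauses are monotone in the $u,w$ variables.
\end{itemize}
So, as you suspected, your reduction is the composition of the selection game, $\psi$, and your cost encoding, not a black-box reduction from arbitrary interdiction instances.

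Your route reuses an existing gadget and treats the cost model directly. The paper's dedicated construction yields full $\Sigma^p_{2k-1}$- and $\PSPACE$-completeness for all $k$, of which the two-stage theorem is only the case $k=2$.
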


We defer the proof of Theorem \ref{2stageadjustable} to the next subsection, where we prove a more general result.
The above theorem remains true even if every cost is either 0 or 1.
Such problems are equivalent to \emph{Combinatorial Two-Stage Adjustable Problems}, which we define next.

\begin{definition}[Combinatorial Two-Stage Adjustable Problem]
\label{def:comb-two-stage}
    Let an $\NPS$ problem $\Pi = (\I, \U, \sol)$ be given.
    The combinatorial two-stage adjustable problem associated to $\Pi$ is denoted by \textsc{Comb. Two-Stage Adjustable-$\Pi$}.
   Its input is an instance $I \in \I$ together with three sets
    $U_1 \subseteq \U(I)$,
    $U_2 = \U(I) \setminus U_1$, and
    $D \subseteq U_2$,
    and an uncertainty parameter $\Gamma \in \N_0$.
    The question is whether
    \begin{align*}
        & \exists S_1 \subseteq U_1 :
        \forall B \subseteq D \ \text{with} \ |B| \leq \Gamma :
        \exists S_2 \subseteq (U_2 \setminus B) :
        S_1 \cup S_2 \in \sol(I).
    \end{align*}
\end{definition}

We refer to the elements in $B$ as \emph{vulnerable} elements, as they are elements that can be blocked by the adversary.
In a combinatorial two-stage adjustable problem, the adversary can block the elements in $B$ from being selected by the decision maker.
It can be reformulated as a problem in which every cost is either $0$ or $1$ and where the decision maker must select a solution with cost $0$.
First, we let $c_1(e) = 0$ for each element $e \in U_1$ and $c_1(e') = 1$ for each element $e' \in U_2$.
Second, for each element $e \in U_2$, we let  $\underline c_2(e) = 0$.
Furthermore, for each element $e \in D$, we let $\overline c_2(e) = 1$ and for each element $e' \in U_2 \setminus B$, we let $\overline c_2(e') = 0$. 
Finally, we set the threshold to $t_{TS} = 0$.
Since the decision maker wins by selecting a solution with cost 0, the adversary is able to block $\Gamma$ elements from $B$ by setting costs to $1$.
These are the same conditions under which the decision maker wins the combinatorial version.

\subsection{Multi-stage Adjustable Robust Problems}
In this subsection, we extend the definition on two-stage adjustable robustness to a definition of multi-stage adjustable robustness.
Based on this generalization, we prove that $k$-stage problem adjustable robust problems are $\Sigma^p_{2k-1}$-complete.
If $k$ is part of the input, we prove that the problem is $\PSPACE$-complete.

To motivate our discussion of multi-stage adjustable robustness, we first describe an online variant of the well-known knapsack problem, which is called online knapsack with estimates \cite{DBLP:journals/corr/abs-2504-21750}.
Given a knapsack with capacity $C \in \N$ and objects $O = \{1, \ldots, n\}$ with profits $p: O \to \N$ and weights between $\underline w: O \to \N$ and $\overline w: O \to \N$, as well as an uncertainty parameter $\Gamma \in \N$, a decision maker wants to solve the following task.
The weight of each object $i$ is known to be between $\underline w_i$ and $\overline w_i$. At most $\Gamma$ objects will a have weight of more than $\underline w$.
Now, the objects arrive one at a time.
The decision maker is then able to weigh the arriving object and identify its actual weight.
Given the actual weight of the object, the decision maker has to irrevocably decide whether to pack the object in the knapsack or discard the object.
The decision maker wants to maximize the value in the knapsack.
This problem falls into a family of online optimization problems with estimates that have been studied:
knapsack \cite{DBLP:journals/corr/abs-2504-21750},
graph exploration \cite{DBLP:journals/corr/abs-2501-18496},
bin packing \cite{DBLP:journals/corr/abs-2505-09321}.
We can model the above problem as a multi-stage adjustable robust problem.
Thus, the theorems of this section provide complexity results for online problems with estimates as well.

For the rest of this section, we describe these problems as games between the two players, a decision maker and an adversary, who we refer to as the interdictor.
We further focus our complexity analysis to combinatorial multi-stage adjustable robust problems, which include a set of vulnerable elements $D$.
As already indicated above, the cost version of multi-stage adjustable robustness includes the combinatorial version as a special case.   We state the general complexity of the combinatorial version in \Cref{cor:adjustable-robust-main-thm}.
For each $k \in \Z_{\geq 2}$, we obtain a game with the following rule set:
\begin{itemize}
    \item The input includes a partition of the universe $U$ into $k$ disjoint subsets $U_1, U_2, \ldots, U_k$, a  subset $D \subseteq U$ of vulnerable elements $D$, and a budget $\Gamma$ for the interdictor.
    \item In the first round, the agent selects a subset $S_1 \subseteq U_1$.
    \item At the beginning of each round $i \in \fromto{2}{k}$, the interdictor blocks a subset $B_i \subseteq (U_i \cap D)$.
    Afterwards, the agent selects a subset $S_i \subseteq U_i \setminus B_i$.
    \item In each round, the interdictor is able to block as many elements as he wants, but at the end of the game the number of blocked elements may not exceed $\Gamma$, i.e., $\sum_{2 \leq i \leq k} |B_i| \leq \Gamma$.
    \item The agent wins if $S_1, \ldots, S_k$ such that $\bigcup^{k}_{i = 1} S_i \in \sol(I)$.
    Otherwise, the interdictor wins.
\end{itemize}

This game is similar to the protection-interdiction game  game presented in Section \ref{sec:interdiction}.  In contrast to the protection-interdiction game, the decision maker in this adjustable robust game directly selects a subset of non-blocked elements instead of protecting them by placing tokens.

\begin{definition}[Combinatorial $k$-stage Adjustable Robust Problem]
\label{def:combinatorial-multi-stage-adjustable-robust-problem}
    Let an $\NPS$ problem $\Pi = (\I, \U, \sol)$ be given.
    The combinatorial $k$-stage adjustable robust game associated to $\Pi$,
    denoted by {\sc Comb. $k$-stage Adjustable Robust $\Pi$ Game}, is defined as follows:
    The input is an instance $I \in \I$ together with
    a partition  of the universe $\U(I)$ into subsets $U_1,  \ldots, U_k$,
    a set of vulnerable elements $D$,
    and an uncertainty parameter $\Gamma \in \N_0$.
    The question is whether
    \begin{align*}
        && \exists S_1 \subseteq U_1, \forall B_2 \subseteq (D \cap U_2), \exists S_2 \subseteq (U_2 \setminus B_2), \ldots, \forall B_k \subseteq (D \cap U_k), \exists S_k \subseteq (U_k \setminus B_k) :\\
        && \text{if } \sum^k_{i = 1} |B_i| \leq \Gamma, \text{ then } \bigcup^k_{i = 1} S_i \in \sol(I).
    \end{align*}                    
\end{definition}

\subsection{The Complexity of Multi-Stage Adjustable Robust Problems}

We first state the containment in $\Sigma_{2k-1}^p$ as a lemma.

\begin{lemma}
\label{lem:comb-adjustable-robust-containment}
    Let $\Pi$ be a problem in $\NPS$.
    If $k \in \mathbb{N}$ is a constant, then the {\sc Comb. $k$-Stage Adjustable Robust $\Pi$} is contained in $\Sigma^p_{2k-1}$.
    If $k$ is part of the input, then it is contained in $\PSPACE$.
\end{lemma}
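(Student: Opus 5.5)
The plan is to read the quantifier prefix in \Cref{def:combinatorial-multi-stage-adjustable-robust-problem} directly as a $\Sigma^p_{2k-1}$ formula, in the same way \Cref{lem:comb-protection-containment} does for interdiction games. The prefix is
$$\exists S_1\, \forall B_2\, \exists S_2\, \cdots\, \forall B_k\, \exists S_k,$$
which consists of one existential block followed by $k-1$ pairs of a universal and an existential block. That gives $1+2(k-1)=2k-1$ alternating blocks, beginning with $\exists$ and ending with $\exists$, matching the definition of $\Sigma^p_{2k-1}$.

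Each quantified object is a subset of $\U(I)$, and \Cref{def:NP-S} gives $|\U(I)|=|I|^{O(1)}$. So each $S_i$ and $B_i$ can be encoded as a characteristic vector of polynomial length; padding every block to a common length $p(|x|)$ fits the formal definition of $\PH$.

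The matrix predicate $R$ has to check the following, and it can do so in polynomial time:
\begin{itemize}
\item the range constraints $S_1\subseteq U_1$, $B_i\subseteq D\cap U_i$ and $S_i\subseteq U_i\setminus B_i$;
\item the budget condition $\sum_i|B_i|\le\Gamma$;
\item membership $\bigcup_i S_i\in\sol(I)$, by running the polynomial-time verifier $V$ guaranteed by \Cref{def:NP-S}.
\end{itemize}

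The only point needing care is how range violations are handled, since the formal definition quantifies over all strings. I will build $R$ so that a string violating a constraint for a universal block makes $R$ true, and a string violating a constraint for an existential block makes $R$ false. This is the same ``first forbidden move loses'' construction given in the game-theoretic interpretation of $\PH$. With it, the budget implication in the definition becomes exactly the condition that $R$ accepts whenever $\sum|B_i|>\Gamma$. Checking that this relativized encoding is equivalent to the bounded-quantifier formula is routine, and I expect it to be the main (minor) obstacle.

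For $k$ part of the input, the number of alternations is $2k-1$ with $k\le|\U(I)|$ without loss of generality, since empty parts can be dropped; hence the number of alternations is polynomial. I would evaluate the game tree by depth-first recursion: at each level, enumerate candidate subsets one at a time while reusing space, and take the $\exists$ or $\forall$ of the recursive results. The recursion depth is $2k-1$ and each frame stores one subset of polynomial size plus counters, so the total space is polynomial. The leaves are evaluated by $V$ in polynomial time. This places the problem in $\PSPACE$.
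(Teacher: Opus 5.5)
Your proposal is correct and matches the paper's argument: you read the quantifier prefix of the definition directly as $2k-1$ alternating blocks over polynomially encodable subsets of $\U(I)$, and the matrix is checked in polynomial time using the $\NPS$ verifier. You also fill in details the paper leaves implicit, namely relativizing the bounded quantifiers via the first-forbidden-move convention and the explicit depth-first polynomial-space evaluation when $k$ is part of the input.
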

\begin{proof}
    Let $\Pi = (\I, \U, \sol)$ is in $\NPS$ and $I \in \I$ one of its instances.
    \Cref{def:combinatorial-multi-stage-adjustable-robust-problem} directly describes the problem by alternating quantifiers.
    Furthermore, all sets $S_i$ and $B_i$ can be encoded in polynomial space since they are subsets of $\U(I)$.
    The condition $\bigcup^k_{i = 1} S_i \in \sol(I)$ can be checked in polynomial time since $\Pi$ is in $\NPS$.
\end{proof}

The main results of this section are the next two theorems.

\begin{theorem}
\label{thm:comb-adjustable-robust-main-thm}
    Let $\Pi$ be an $\NPS$-complete problem.
    If $k \in \mathbb{N}$ is a constant, then the {\sc Comb. $k$-Stage Adjustable Robust $\Pi$} is $\Sigma^p_{2k-1}$-complete.
    If $k$ is part of the input, then it is $\PSPACE$-complete.
\end{theorem}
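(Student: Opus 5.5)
Containment is \Cref{lem:comb-adjustable-robust-containment}, so only hardness needs work. I would follow the same two-step template as for \cref{thm:protection-interdiction-main-thm}. First prove $\Sigma^p_{2k-1}$-hardness for $\Pi = \satV$. Then transfer it to every $\NPS$-complete $\Pi$ by a dual-mimicking meta-reduction.

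\textbf{Step 1: the base case $\satV$.} I reduce from the \textsc{$(2k-1)$-Move Adversarial Selection Game on $\satV$}, which is $\Sigma^p_{2k-1}$-complete. Its moves alternate $\exists S_1 \subseteq V_1, \forall T_2 \subseteq V_2, \exists S_2 \subseteq V_3,\dots$, ending with an existential move on $V_{2k-1}$. Stage $1$ of the robust game uses $U_1 := V_1$. For stage $i\ge 2$, the universal move on $V_{2i-2}$ is simulated by interdiction, using a pseudo-complement gadget as in the restricted interdiction game. For each variable $u \in V_{2i-2}$, add a fresh variable $w_u$, place both $u,w_u$ in $U_i \cap D$, and add clauses $(u \lor w_u)$ and $(\overline u \lor \overline w_u)$. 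The intended play is that the interdictor blocks exactly one of $u,w_u$, and the agent must select the other, which encodes the adversary's choice. The existential move on $V_{2i-1}$ is placed in $U_i \setminus D$. I set $\Gamma := \sum_i |V_{2i-2}|$.

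The obstacle here is the budget. The interdictor may block both $u$ and $w_u$, which wins immediately because the agent then cannot satisfy $(u\lor w_u)$. Or she may block neither, which lets the agent choose $u$ freely. To punish the first deviation, I would use a cheat-detection variable in the spirit of $s,t$ from the interdiction proof. Concretely, replace $(u\lor w_u)$ by $(u \lor w_u \lor c)$, where $c\in U_k\setminus D$ is forced true only if some pair was left fully unblocked at the end. Blocking both members of a pair forces, by pigeonhole on the exact budget $\Gamma$, some later pair to remain unblocked. The agent can then select that whole pair, set auxiliary indicator variables, and turn $c$ on, which relaxes all clauses. Blocking neither member of a pair in stage $i$ likewise leaves a fully unblocked pair that the agent exploits in the same way. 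The care needed is to make the cheat flag usable only through an unblocked pair. As in lines \eqref{eq:psi-3}--\eqref{eq:psi-4}, this means the indicator variables are forced to imply $u\land w_u$ for some pair. With that in place, the argument becomes a case analysis on the first rule broken, copying Cases 1--4 of the interdiction proof. I expect this gadget verification to be the main work. The other directions are then a routine dual-mimicking argument, as in \cref{lem:restricted-protect-interdict}.

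\textbf{Step 2: the meta-reduction.} Take an SE reduction $(g,f)$ from $\satV$ to $\Pi$. Set $U'_i := f(U_i)$ for $i<k$, $U'_k := f(U_k)\cup \Uaux'$, $D' := f(D)$, and keep $\Gamma$. Alice's and Adam's moves are translated by $f$, and Bob's and Barbara's by $f^{-1}$. Blocked sets correspond exactly since $D'\subseteq \Uemb'$. By \cref{lem:se}, after stage $k-1$ the embedded partial selection extends to a solution avoiding the blocks iff the original one does. This also holds for the last stage: auxiliary elements are never blocked, and the last-stage blocks lie in $\Uemb'$, so the SE property applies directly. Dual mimicking in both directions then gives hardness, and the $\PSPACE$ statement follows identically for $k$ part of the input.
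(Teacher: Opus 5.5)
Your proposal is correct and follows essentially the paper's route. The paper proves hardness for \satV{} by reducing from the $(2k-1)$-move adversarial selection game. It uses pseudo-complement pairs placed in $D$, budget equal to the number of pairs, and a single cheat-detection variable whose activation requires some pair $u,w_u$ to be selected together (the paper's $s$ and $s_{ij}$ in $\psi$). It then lifts the result to every $\NPS$-complete $\Pi$ by a dual-mimicking meta-reduction along an SE reduction $(g,f)$. Your deviations are cosmetic: you pseudo-complement only the universal variables, and you fold the restricted-game lemma into the gadget case analysis. Your handling of the last stage (auxiliary elements are never blocked and all blocks lie in $\Uembp$, so (P1)/(P2) apply directly) is more explicit than the paper's sketch of the meta-reduction.
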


Theorem \ref{thm:comb-adjustable-robust-main-thm} can be used to prove the corresponding theorem for multi-stage adjustable optimization with cost functions instead of a set of vulnerable elements.

\begin{corollary}
\label{cor:adjustable-robust-main-thm}
    Let $\Pi$ be an $\NPS$-complete problem.
    If $k \in \mathbb{N}$ is a constant, then the {\sc $k$-Stage Adjustable Robust $\Pi$} is $\Sigma^p_{2k-1}$-complete.
    If $k$ is part of the input, then it is $\PSPACE$-complete.
\end{corollary}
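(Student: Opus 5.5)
The corollary should follow from \Cref{thm:comb-adjustable-robust-main-thm} by reusing the 0/1-cost encoding given after \Cref{def:comb-two-stage}. First we fix the cost version formally: in each stage $i \ge 2$ the adversary picks $c_i$ from a budgeted set as in $C_\Gamma$, with one budget $\Gamma$ shared over all stages, and the decision maker then picks $S_i$. This gives the $2k-1$ alternating quantifiers $\exists S_1 \forall c_2 \exists S_2 \cdots \forall c_k \exists S_k$ with the final condition $\bigcup_i S_i \in \sol(I)$ and $\sum_i c_i(S_i) \le t$. The final condition is checkable in polynomial time because $\Pi \in \NPS$ and costs are given in binary. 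Each $c_i$ is determined by its indicator vector $\delta$, which is polynomial size. This yields membership in $\Sigma^p_{2k-1}$, and membership in $\PSPACE$ when $k$ is part of the input, by the same argument as \Cref{lem:comb-adjustable-robust-containment}.

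For hardness, we reduce from {\sc Comb. $k$-Stage Adjustable Robust $\Pi$}, which is $\Sigma^p_{2k-1}$-hard (respectively $\PSPACE$-hard) by \Cref{thm:comb-adjustable-robust-main-thm}. Given $(I, U_1,\dots,U_k, D, \Gamma)$, we keep $I$ and $\Gamma$ and set the threshold to $t=0$. Each element of $U_i$ gets cost $0$ in its own stage $i$ and cost $1$ in every other stage. The lower stage-$i$ cost of such an element is $0$, and its upper stage-$i$ cost is $1$ if the element lies in $D$ and $0$ otherwise.

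The equivalence argument is as follows. Any cost-$0$ strategy selects each element only in its own stage. An adversary raising the cost of $u \in D \cap U_i$ in stage $i$ is then exactly blocking $u$. Raising the cost of an element outside its own stage is pointless for the adversary: such elements already cost $1$ there, or have upper cost equal to lower cost. The budget constraint $\sum_i |B_i| \le \Gamma$ matches the global count $\sum_u \delta_u \le \Gamma$. We then show that winning strategies transfer in both directions stage by stage, exactly as in the two-stage remark.

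The main obstacle is the timing and order of moves. In the combinatorial game the adversary may block elements of $U_i$ only at the start of round $i$, after seeing $S_1,\dots,S_{i-1}$. So the cost model must reveal stage-$i$ costs at that moment, with one global budget across stages, and it must not let the adversary raise costs for later stages early. If the model instead allows raising costs of future-stage elements early, I would argue this can only help the decision maker: the adversary wastes budget and reveals information. Hence the adversary's best strategies are exactly the combinatorial ones, and the equivalence is preserved.
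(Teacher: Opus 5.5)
Your proposal is correct and follows essentially the same route as the paper: you reduce from \textsc{Comb. $k$-Stage Adjustable Robust $\Pi$} (\Cref{thm:comb-adjustable-robust-main-thm}) using threshold $t_{TS}=0$, lower cost $0$ on the current stage's elements and $1$ elsewhere, and upper cost $1$ exactly on vulnerable current-stage elements and on other stages' elements, so that raising a cost is the same as blocking. Your explicit membership argument and your treatment of when stage-$i$ costs are revealed supply details that the paper's short proof leaves implicit, since the paper never formally defines the $k$-stage cost model.
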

\begin{proof}
    Similar to $k=2$, we generalize combinatorial adjustable robustness based on problems in $\NPS$ to multi-stage adjustable robust problems, which include cost functions $\underline c_i : \U \to \Z$ and $\overline c_i : \U \to \Z$ for each stage $1 \leq i \leq k$.
    For this, let $t_{TS} = 0$. 
    On the one hand, we set $\underline c_i(u) = 0$ for all $u \in U_i$.
    Otherwise, $\underline c_i(u) = 1$.
    On the other hand, we set $\overline c_i(u) = 0$ for all $u \in U_i \setminus D$.
    Otherwise, $\overline c_i(u) = 1$.
    Thus, only non-blocked elements of the current stage can be selected by the decision maker.
\end{proof}

Our proof for {\sc Comb. $k$-Stage Adjustable Robust $\Pi$} follows along the same lines as our proof for the Combinatorial $k$-Move Protection-Interdiction Game.
We first reduce an instance of {\sc Adversarial Selection $\satV$ Game} to an instance of restricted version of {\sc Comb. $k$-Stage Adjustable Robust $\satV$}, a version that includes additional rules.
We then modify this reduction by adding gadgets to prove that the original {\sc Comb. $k$-Stage Adjustable Robust $\satV$} is $\Sigma^p_{2k-1}$-complete.
Finally, we show that there is a meta-reduction from {\sc Comb. $k$-Stage Adjustable Robust $\satV$} to {\sc Comb. $k$-Stage Adjustable Robust $\Pi$} for any $\Pi$ that is $\NPS$-complete.

\subsection{The complexity of Adjustable Robust \satV{}}

In this subsection, we establish the complexity of {\sc Comb. $k$-Stage Adjustable Robust $\satV$}.

\begin{theorem}
\label{thm:comb-adjustable-robust-sat-thm}
    The \textsc{Comb. $k$-Stage Adjustable Robust Game $\Pi$} for $\Pi = \textsc{SAT-V}$ is $\Sigma^p_{2k-1}$-complete for all constants $k$.
    If $k$ is part of the input, then it is $\PSPACE$-complete.
\end{theorem}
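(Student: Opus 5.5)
Containment is \Cref{lem:comb-adjustable-robust-containment}, so only hardness needs an argument. The plan mirrors the protection-interdiction case: reduce from the $\Sigma^p_{2k-1}$-complete {\sc $(2k-1)$-Move Adversarial Selection Game on $\satV$}, first to a \emph{restricted} adjustable game with extra rules, and then enforce those rules with cheat-detection gadgets. Let $(I,\{U_1,\dots,U_{2k-1}\})$ be the selection instance, padded so that $|U_i|=n$ for all $i$. The game has $2k-1$ moves, Alice moves at odd indices, and we write $U_i=\{u_{i1},\dots,u_{in}\}$. The adjustable game has $k$ rounds, with the agent's $S_1$ followed by $k-1$ pairs (interdictor blocks $B_r$, agent selects $S_r$). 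This gives exactly $2k-1$ alternating moves. We align Alice's move $2r-1$ with the agent's selection in round $r$, and Bob's move $2r$ with the interdictor's blocking at the start of round $r+1$.

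\textbf{Restricted game.} For each $u_{ij}$ with $i$ even (a Bob element), add a pseudo-complement variable $w_{ij}$. Place $u_{ij},w_{ij}$ in stage $U'_{i/2+1}$ and put both in $D$. For $i$ odd, place $u_{ij}$ in stage $U'_{(i+1)/2}$, outside $D$. Set $\Gamma=n(k-1)$. The rules are as follows. In each round the interdictor blocks exactly one of $u_{ij},w_{ij}$ for every $j$ of the corresponding Bob block. The agent selects every unblocked element among the pseudo-complement pairs. The agent's choice on odd blocks is free and corresponds directly to Alice's choice. A dual mimicking argument, exactly as in \Cref{lem:restricted-protect-interdict}, shows that Bob "selects $u_{ij}$" iff the interdictor blocks $w_{ij}$. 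Under this correspondence the agent wins iff Alice wins. The final existential move of the selection game is the agent's last selection in round $k$.

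\textbf{Gadgets.} Next, replace $\varphi$ by a formula $\psi$ in the style of \eqref{eq:psi-1}--\eqref{eq:psi-5}, with a single cheat variable $s$. The agent can choose $s$ freely in the last round (helper variables go into $U'_k\setminus D$). Add helper variables $s_{ij}$ with $s_{ij}\rightarrow(u_{ij}\land w_{ij})$, and require $\overline s\lor\bigvee s_{ij}$. Add the clauses $(\varphi\lor s)$, $(u_{ij}\lor w_{ij})$ and $((\overline u_{ij}\leftrightarrow w_{ij})\lor s)$.

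The rules are then enforced as follows. If the interdictor blocks both elements of a pair, the clause $u_{ij}\lor w_{ij}$ forces nothing, so the agent cannot exploit the double block directly. However, the budget $\Gamma=n(k-1)$ then leaves some later pair entirely unblocked, by the pigeonhole argument from Case 1b. The agent selects both of its elements, sets $s_{ij}$ and then $s$, and wins. A double block in the last round needs care; the budget might be distributed over rounds to handle it, or every clause containing $u_{ij}\lor w_{ij}$ may need an $s$ disjunct, as in \eqref{eq:psi-2}. If the interdictor blocks neither element of a pair, the agent takes both and sets $s$. If the interdictor follows the rules, then $s$ is unusable, the clauses $\overline u_{ij}\leftrightarrow w_{ij}$ force the agent to pick exactly the unblocked element, and $\varphi$ must hold.

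\textbf{Main obstacle.} The hardest step is choosing the exact clauses and budget so that every deviation by the interdictor, a double block or an omission in any round including the last, is punished while no false win is granted to the agent. Unlike the interdiction game, here the agent never commits to "protection", so only interdictor cheats need detecting. That simplifies matters, but the timing of detection in the final round is delicate. I would first try enforcing per-round exactness by partitioning $D$ and the budget so that leftover budget is always visible.

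\textbf{Lifting to $\Pi$.} Finally, for a general $\NPS$-complete $\Pi$ (needed for \Cref{thm:comb-adjustable-robust-main-thm}), apply the meta-reduction via $\satV\leqSE\Pi$. Set $U''_i=f(U'_i)$ for $i<k$, set $U''_k=f(U'_k)\cup\Uaux$, and let $D''=f(D)$. Dual mimicking together with \Cref{lem:se} then transfers wins in both directions. The $\PSPACE$ case follows identically with $k$ part of the input.
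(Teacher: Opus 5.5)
\textbf{The gap is in your gadget formula.} You add the clause $(u_{ij}\lor w_{ij})$ \emph{without} the cheat variable $s$. Suppose the interdictor blocks both $u_{ij}$ and $w_{ij}$. Both are then forced false, so this clause is falsified whatever the agent does afterwards; it does not merely ``force nothing''. Hence $\psi$ becomes unsatisfiable, and the interdictor wins every instance your reduction produces just by double-blocking a single pair. Your pigeonhole response (take both elements of a fully unblocked pair, set $s_{ij}$ and $s$) cannot rescue this, because $s$ does not occur in that clause. You sense the problem (``may need an $s$ disjunct'') but leave the clause design open. So, as written, the reduction is incorrect.

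\textbf{The fix.} Delete $(u_{ij}\lor w_{ij})$. The clause $((\overline u_{ij}\leftrightarrow w_{ij})\lor s)$ already expands to $(u_{ij}\lor w_{ij}\lor s)\land(\overline u_{ij}\lor\overline w_{ij}\lor s)$. What remains is exactly the paper's formula $\psi$. The paper also adds pseudo-complements for Alice's blocks, which, as you observe, are unnecessary.

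\textbf{No special last-round handling is needed.} With this formula there is no difficulty in the final round and no need to partition the budget per round. $D$ consists of exactly $n(k-1)$ pairs and $\Gamma=n(k-1)$. Suppose the interdictor first deviates in round $r$, by a double block or an omitted pair.
\begin{itemize}
\item Every earlier Bob round used exactly $n$ tokens. So the remaining $n(k-r+1)$ tokens must cover the $n(k-r+1)$ pairs of rounds $r,\dots,k$.
\item Hence some pair in round $r$ or later is fully unblocked at the moment the agent selects in that round.
\item The agent takes both elements of that pair. In round $k$ it sets the corresponding $s_{ij}$ and $s$ (these helpers lie in $U'_k\setminus D$) and sets all other $s_{i'j'}$ false. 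This satisfies every clause.
\end{itemize}
If the interdictor never deviates, $s$ is forced false. Then $((\overline u_{ij}\leftrightarrow w_{ij})\lor s)$ forces the agent to select exactly the unblocked element, and $\psi$ reduces to $\varphi$, which is your restricted game.

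Finally, your lifting paragraph belongs to \Cref{thm:comb-adjustable-robust-main-thm}, not to this statement.
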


\paragraph*{The Restricted Adjustable Robust \textsc{Sat-V}}
We begin by describing the restricted adjustable robust problem on \textsc{Sat-V}. 
Our hardness reduction starts from the $\Sigma^p_{2k-1}$-complete {\sc $(2k-1)$-Move Adversarial Selection Game on $\satV$}, as defined for $k \in \N$ in \cref{sec:selection-game}.
Let $\satV = (\I_{\sat},\U_{\sat},\sol_{\sat})$.
Further, let  $I_\text{sel} := (I, U_1, \dots, U_{2k-1})$ be an instance of {\sc $(2k-1)$-Move Adversarial Selection Game $\satV$}, $k \in \N$, where $I \in \I_{\sat}$, and $U_1 \cup \dots \cup U_{2k-1}$ is a $(2k-1)$-partition of its set of variables $\U_{\sat}(I)$.  
Without loss of generality, we may assume that $|U_i| = n$ for all $i \in \fromto{1}{2k-1}$.
At last, let $U_i = \{u_{i1}, \dots, u_{in}\}$.

We recall the rules of a $(2k-1)$-move adversarial selection game:
Alice plays against Bob, and Alice goes first.
For $i \equiv 1 \pmod{2}$, Alice selects $S_i \subseteq U_i$.
For $i \equiv 0 \pmod{2}$, Bob selects $S_i \subseteq U_i$.
Alice wins if $\bigcup_{i=1}^{2k-1} S_i \in \mathcal{S}(I)$.
Otherwise, Bob wins.

Given the instance $I_\text{sel}$, we define a new instance $I_\text{restr}$ of restricted $k$-stage adjustable robust $\satV$.
The construction is a modification of our construction used in the completeness proof for {\sc $(2k-1)$-Move Interdiction Game on $\satV$}.
For each $U_i := \set{u_{ij} : j \in [n]}$, let $W_i := \set{w_{ij} : j \in [n]}$.
The element $w_{ij}$ is called the \emph{pseudo-complement} of $u_{ij}$.
Let $W = \bigcup_{i=1}^{2k-1} W_i$.
We construct $I_\text{restr}$ as follows:

\begin{itemize}
    \item The instance $I_\text{restr}$ contains the elements
        $$
            \bigcup_{i=1}^{2k-1} U_i \cup \bigcup_{i=1}^{2k-1} W_i,
        $$
        where $U$ is also the universe of $I_\text{sel}$.
     \item For $i \in [k]$, we define $C_i$ as follows.  $C_1 = U_1 \cup W_1$.  For each $i \in \fromto{2}{k}$, 
        $$
            C_i = (U_{2i-2} \cup W_{2i-2}) \cup (U_{2i-1} \cup W_{2i-1}).
        $$
    \item $D = \bigcup_{i=2}^{k} (U_{2i-2} \cup W_{2i-2})$.
    \item The budget is $\Gamma = n(k-1)$.
\end{itemize}

We refer to the players of the Restricted Adjustable Robust Game as Adam (the selector) and Barbara (the blocker).  As in the restricted Protection-Interdiction game, we now impose additional rules on Adam and Barbara which will result in a natural correspondence between the moves of Alice and Adam as well as a correspondence between the moves of Bob and Barbara. In the latter case, Barbara will block the elements of $U_i$ that are not selected by Bob.

\begin{itemize}
  \item \textbf{Rule 1.} In Stage 1 (and move 1), Adam must select exactly one of $u_{1,j}$ and $w_{1,j}$ for each $j \in \{1, \ldots, n\}$.
  \item \textbf{Rule 2.} In Stage $i \geq 2$ (and move $2i-2$), Barbara must block exactly one of $u_{2i-2,j}$ and $w_{2i-2,j}$ for each $j \in \{1, \ldots, n\}$. 
  \item \textbf{Rule 3.} In Stage $i \geq 2$ (and move $2i - 1$), Adam must select each unblocked element of $U_{2i-2} \cup W_{2i-2}$ and exactly one of $u_{2i-1,j}$ and $w_{2i-1,j}$ for each $j \in \{1, ..., n\}$. 
\end{itemize}

Although we have specified the above as rules that should be satisfied, we consider the possibility that the rules are violated.  The first player to violate a rule loses the game. If no rule is violated, let $S'$ denote the union of the items selected by Adam.  Then Adam wins if $S' \cap U \in \sol(I_\text{SAT})$.  Otherwise, Barbara wins.

We refer to the $k$-stage adjustable robust game with the above additional rules as {\sc restricted $k$-Stage Adjustable Robust Game}.

\begin{lemma}
\label{lem:restricted-adjustable-robust}
    Adam has a winning strategy for the instance $I_\text{restr}$ of {\sc $k$-Stage Restricted Adjustable Robust Game on $\satV$} if and only if Alice has a winning strategy for the instance $I_\text{sel}$ of {\sc $(2k-1)$-Move Adversarial Selection Game on $\satV$}.
\end{lemma}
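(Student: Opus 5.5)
We will prove \Cref{lem:restricted-adjustable-robust} by a dual mimicking argument, in the same way as \Cref{lem:restricted-protect-interdict}. The central invariant is the following. Fix a stage $i\ge 2$ and an index $j\in[n]$. In move $2i-2$, the element $u_{2i-2,j}$ is selected by Bob if and only if Barbara leaves $u_{2i-2,j}$ unblocked, which means she blocks $w_{2i-2,j}$. Likewise, for odd moves $2i-1$, Alice selects $u_{2i-1,j}$ if and only if Adam selects $u_{2i-1,j}$ rather than $w_{2i-1,j}$. Since rule violations lose immediately, we may assume throughout that Adam and Barbara obey Rules~1--3. Under these rules, the set $S'$ finally selected by Adam satisfies
\[
S'\cap U_m = S_m \quad \text{for all } m\in\fromto{1}{2k-1}.
\]
For odd $m$ this holds by Rule~1 or by the second half of Rule~3. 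For even $m$ it holds by the first half of Rule~3 together with Rule~2: the unblocked elements of $U_m\cup W_m$ are exactly one element per pair, and Adam is forced to take all of them.

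\textbf{Direction 1: Alice wins the selection game, so Adam wins.} Let Alice and Barbara play optimally.
\begin{itemize}
\item In stage~1, Adam selects $u_{1j}$ if Alice selected it, and $w_{1j}$ otherwise.
\item In stage $i\ge2$, Barbara blocks exactly one element of each pair $\{u_{2i-2,j},w_{2i-2,j}\}$. Bob then selects $S_{2i-2}=U_{2i-2}\setminus B_i$, and Alice answers with some $S_{2i-1}$.
\item Adam selects all unblocked elements of $U_{2i-2}\cup W_{2i-2}$. He also selects, for each $j$, $u_{2i-1,j}$ if $u_{2i-1,j}\in S_{2i-1}$ and $w_{2i-1,j}$ otherwise.
\end{itemize}
These moves are legal. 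The elements of $U_{2i-1}\cup W_{2i-1}$ are not in $D$, so they are never blocked. Adam's selections lie in $C_i$ and respect Rule~3. The budget equals $\Gamma=n(k-1)$, and Rule~2 uses exactly $n$ blocks in each of the $k-1$ adversarial stages, so the budget condition is consistent. By the invariant, $S'\cap U=\bigcup_m S_m\in\sol(I)$, so Adam wins. Because Barbara played optimally, Adam has a winning strategy.

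\textbf{Direction 2: Bob wins the selection game, so Barbara wins.} Now let Adam and Bob play optimally.
\begin{itemize}
\item Alice copies Adam's stage-$i$ choices on $U_{2i-1}$: she selects $u_{2i-1,j}$ if and only if Adam selected $u_{2i-1,j}$.
\item Barbara copies Bob's move on $U_{2i-2}$: she blocks $w_{2i-2,j}$ if Bob selected $u_{2i-2,j}$, and blocks $u_{2i-2,j}$ otherwise.
\end{itemize}
Barbara thereby obeys Rule~2 and stays within budget. If Adam violates Rule~1 or Rule~3, he loses immediately. Otherwise the invariant again gives $S'\cap U=\bigcup_m S_m$, and this set is not in $\sol(I)$ because Bob wins. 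Hence Adam loses, and since Adam played optimally, Barbara has a winning strategy.

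\textbf{Main obstacle: aligning the moves.} The main obstacle is the bookkeeping of the correspondence between stages and moves. Stage $i$ of the robust game bundles move $2i-2$ (Barbara's block) with move $2i-1$ (Adam's selection). We must therefore check two things. First, Bob's response in move $2i-2$ is fully determined before Alice's move $2i-1$. Second, Adam's selection in stage $i$ uses only information available at that point. In the direction where Alice copies Adam, Alice's move $2i-1$ must depend on Adam's stage-$i$ choice, which comes after Barbara's block. This ordering is consistent, because Barbara's block is derived from Bob's move $2i-2$, which precedes it. The remaining check is that nothing is ever selected from a later $U_m$ before its stage; this is guaranteed because each $C_i$ contains only the parts belonging to stage $i$.
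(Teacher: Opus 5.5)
Your proposal is correct and follows essentially the same dual mimicking argument as the paper. In one direction Adam copies Alice's odd moves via $u_{2i-1,j}$ versus $w_{2i-1,j}$ while Bob sets $S_{2i-2}=U_{2i-2}\setminus B_i$; in the other Alice copies Adam while Barbara blocks $w_{2i-2,j}$ or $u_{2i-2,j}$ according to Bob's choice, so $S'\cap U=\bigcup_m S_m$ in both cases. Your extra checks fill in details the paper leaves implicit: that $(U_{2i-1}\cup W_{2i-1})\cap D=\emptyset$, that Rule~2 uses exactly $\Gamma=n(k-1)$ blocks, and that the information ordering within a stage is consistent.
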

\begin{proof}

Let $S'_1$ denote the elements of $U_1 \cup W_1$ selected by Adam in Stage 1.
For $2 \le i \le k$, let $T'_{2i-2}$ denote the elements that Barbara selects from $U_{2i - 2} \cup W_{2i-2}$.
Further, let $S'_{2i-2} = (U_{2i - 2} \cup W_{2i-2})\setminus T'_{2i-2}$ denote the elements that Adam selects from $U_{2i - 2} \cup W_{2i-2}$. Let $S'_{2i-1}$ that Adam selects from $U_{2i - 1} \cup W_{2i-1}$.

We now prove that $I_\text{restr}$ is a win for Adam if $I_\text{sel}$ is a win for Alice.
We first assume that $I_\text{sel}$ is a win for Alice, and we will prove that it is also a win for Adam.
Our proof uses a dual mimicking strategy argument.
Suppose that Alice and Barbara play optimally in their respective games.
The mimicking strategies for Adam and Bob on stages $1$ to $k$ are as follows.

\begin{itemize}
  \item \textbf{Adam:}  For each $i = 1$ to $k$,  if Alice selects $u_{2i-1,j}$ at stage $i$, then Adam selects $u_{2i-1,j}$. Otherwise, Adam selects $w_{2i-1,j}$.  In addition, for $i \ge 2$, Adam selects 
  $S'_{2i-2} = (U_{2i-2} \cup W_{i-2})\setminus T'_{2i-2}$.
  \item \textbf{Bob:} For each $i = 1$ to $k$, $S_{2i-2}=U_{2i-2}\setminus T'_{2i-2}$.
\end{itemize}

Because Adam is mimicking Alice, it follows that for $i = 1$ to $k$, $S'_{2i-1} \cap U_{2i-1} = S_{2i-1}$.
Because Bob is mimicking Barbara, $S_{2i-2}=U_{2i-2}\setminus T'_{2i-2}$, and thus $S_{2i-2} = U_{2i-2} \cap S'_{2i-2}$.  
It follows that $S' \cap U = S$, and thus Adam wins the game.
We now assume that $I_\text{sel}$ is a win for Bob, and we will prove that it is also a win for Barbara.  In this case, the proof relies on mimicking strategies by Alice and Barbara.

\begin{itemize}
  \item \textbf{Alice:}  For each $i = 1$ to $k$, Alice selects $S_{2i-1} = S'_{2i-1} \cap U_{2i-1}$. 
  \item \textbf{Barbara:} For each $i = 1$ to $k$, and for each $j = 1$ to $n$, if Bob selects $u_{2i-2, j}$, then Barbara blocks $w_{2i-2, j}$.
  If Bob does not select $u_{2i-2, j}$, then Barbara blocks $u_{2i-2, j}$.
\end{itemize}

Because Alice is mimicking Adam, it follows that $S'_{2i-1} \cap U_{2i-1} = S_{2i-1}$ for $i = 1$ to $k$.
Because Barbara is mimicking Bob, it follows that $U'_{2i-2} \setminus T'_{2i-2} = S_{2i-2}$.
Since Adam is choosing the pseudo-complements of Barbara's blocked elements, $S'_{2i-2} =  (U_{2i-2} \cup W_{2i-2})\setminus T'_{2i-2}$, and thus $S'_{2i-2} \cap U_{2i-2} = S_{2i-2}$.
We conclude that $S' \cap U = S$.
Since the game is a winning game for Bob, $S \notin \sol$, and thus $S' \cap U \notin \sol$, and it is a winning game for Barbara.
\end{proof}

\paragraph*{Introducing Gadgets for the rules}
Having established the hardness of the restricted adjustable robust $\satV$, we next introduce gadgets that enforce the rules in a general adjustable robust $\satV$ instance.
In doing so, we derive the following theorem.

\begin{theorem}
    \textsc{Comb. $k$-Stage Adjustable Robust \textsc{Sat-V}} is $\Sigma^p_{2k-1}$-complete for all constants $k$, and $\PSPACE$-complete if $k$ is part of the input.
\end{theorem}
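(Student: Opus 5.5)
Containment in $\Sigma^p_{2k-1}$, respectively $\PSPACE$, is already given by \Cref{lem:comb-adjustable-robust-containment}, so only hardness remains. The plan mirrors the protection-interdiction case. Start from the instance $I_\text{restr}$ built from $I_\text{sel}$ above, and transform its CNF formula $\varphi(U)$ into a formula $\psi$ over $U\cup W$ plus helper variables. The aim is that, in any optimal play of the unrestricted {\sc Comb. $k$-Stage Adjustable Robust $\satV$} game on $\psi$, both players obey Rules 1--3. Then \Cref{lem:restricted-adjustable-robust} transfers the winner, and the known $\Sigma^p_{2k-1}$-, respectively $\PSPACE$-, hardness of the $(2k-1)$-move adversarial selection game on $\satV$ gives the theorem.

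The stages and the vulnerable set are kept as in $I_\text{restr}$: stage $i$ consists of $C_i$, with $D=\bigcup_{i\ge2}(U_{2i-2}\cup W_{2i-2})$ and $\Gamma=n(k-1)$. The helper variables are placed in the last stage, outside $D$, so that they are free final choices. We then add gadget clauses of three kinds.
\begin{itemize}
\item \emph{Adam's ``exactly one'' rules.} For every selector pair $(u_{2i-1,j},w_{2i-1,j})$ we add the clauses $(u\lor w)$ and $(\overline u\lor\overline w)$, without escape literals. This enforces Rule 1 and the second half of Rule 3 directly, since Adam's variables in those pairs are fixed at stage $i$. Any violation simply makes $\psi$ unsatisfiable, so Adam never profits from it.
\item \emph{Barbara's pairs.} We add $(u_{2i-2,j}\lor w_{2i-2,j}\lor s)$. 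We also add $s\rightarrow\bigvee_{i,j}s_{ij}$ and $s_{ij}\rightarrow(u_{2i-2,j}\land w_{2i-2,j})$, where $s$ is a cheat-detection variable. Every remaining clause, including $\varphi(U)$ and $(\overline u_{2i-2,j}\leftrightarrow w_{2i-2,j})$, gets the escape literal $s$.
\item \emph{Adam's reward.} The clause $(\overline u_{2i-2,j}\leftrightarrow w_{2i-2,j})\lor s$ ensures that when Barbara blocks exactly one per pair, Adam must take the unblocked one. This is the first half of Rule 3.
\end{itemize}

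The case analysis then runs over the first violated rule. If Barbara blocks both elements of some pair, she overspends her budget. Since $\Gamma=n(k-1)$ equals the number of blockable pairs, by pigeonhole some pair in a later (or the same) stage stays fully unblocked. Adam then selects both of its elements and sets the corresponding $s_{ij}$ and $s$, which satisfies everything; here one must check that all other clauses carry $s$. If Barbara blocks neither element of a pair, Adam again takes both and sets $s$. If Barbara follows Rule 2 throughout, then $s$ is impossible, because every pair has a blocked element. Adam is then forced, by the $\leftrightarrow$ clauses and the ``exactly one'' clauses, to play exactly as in the restricted game and to satisfy $\varphi(U)$.

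The main obstacle I expect is the timing of Barbara's blocks relative to Adam's detection. Adam must be able to exploit a cheat made at stage $i$ even though his $s$-witness variables are only chosen at the end. Barbara's blocks, however, could later target the witness pair; the helper variables must be unblockable, and the argument must ensure that her remaining budget cannot repair the situation. I would resolve this by the pigeonhole counting above, taken over all stages simultaneously, and by placing $s,s_{ij}$ outside $D$. For $k$ part of the input, the construction is polynomial in $nk$, giving $\PSPACE$-hardness.
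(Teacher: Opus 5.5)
Your proposal is correct and follows essentially the same route as the paper. It uses the same stages $C_i$, vulnerable set $D$ and budget $\Gamma=n(k-1)$, a single cheat-detection variable $s$ with witnesses $s_{ij}\rightarrow(u_{ij}\land w_{ij})$ on Barbara's pairs, the escape literal $s$ on $\varphi(U)$ and on the pair clauses, and the same first-violation case analysis with the pigeonhole argument on Barbara's budget, closed off by the restricted-game lemma. The only differences are minor: you impose Adam's exactly-one clauses on the odd-indexed pairs without the escape literal $s$, which is harmless since those pairs lie outside $D$ and Adam can always satisfy them, and you state explicitly that the helper variables sit in the last stage outside $D$, which the paper leaves implicit.
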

\begin{proof}
    Containment is established by \Cref{lem:comb-adjustable-robust-containment}.
    For the hardness, we use gadgets to reduce {\sc $(2k-1)$-Move Adversarial Selection Game on Sat-V} to the \emph{unrestricted} version of {\sc $k$-Stage Adjustable Robust Sat-V}, such that the three additional constraints of the \emph{restricted} $k$-stage adjustable robust problem are guaranteed to be satisfied if the players use optimal strategies. 
    Thus, together with \cref{lem:restricted-adjustable-robust}, this proves the theorem.
    
    Given an instance $I_\text{sel} := (I_\text{SAT}, U_1, \dots, U_{2k-1})$ of {\sc $(2k-1)$-Move Adversarial Selection Game on $\satV$}, 
    we have defined an instance $I_\text{restr}$ of {\sc Restricted $k$-Stage Adjustable Robust Sat-V}. 
    We then further transform the instance $I_\text{restr}$ into an instance $I_\text{adj}$ of the unrestricted {\sc $k$-Stage Adjustable Robust Sat-V}, such that whenever Adam and Barbara play the unrestricted game, then in any optimal strategy Adam and Barbara respect rules 1 -- 3.
    Therefore, the restricted and the unrestricted problem are equivalent.
    We first give the formal definition of $\psi$, and then explain the intuition behind its different components.
    Given the initial formula $\varphi$ on variables $u_{ij}$ for $i \in [k], j \in [n]$, we introduce pseudo-complements $w_{ij}$ for $i \in [k], j \in [n]$.
    In addition, we introduce additional helper variables $s_{ij}$ for all $i \in [k], i$ even, $j \in [n]$, variables $t_{ij}$ for all $i \in [k-1]$, $i$ odd, $j \in [n]$, and two final helper variables $s,t$.
    The formula $\psi$ is then defined as:

    \textbf{Formula $\psi$:}
    \begin{align}
        & \left(\varphi(U) \lor s \right) \label{eq:adj:psi-1} \tag{$\ell_1$}\\
        & \qquad \land \bigwedge_{i \text{ even}} \ \bigwedge_{j=1}^n (s_{ij} \rightarrow (u_{ij}\land w_{ij})) \label{eq:adj:psi-2} \tag{$\ell_2$}\\
        & \qquad \land (\overline s \ \lor \bigvee_{i \text{ even}} \ \bigvee_{j=1}^n s_{ij}) \label{eq:adj:psi-3} \tag{$\ell_3$}\\
        & \qquad \land \bigwedge_{i=1}^{2k-1}\bigwedge_{j=1}^n 
         ((u_{ij} \vee w_{ij}\lor s)\land
        (\overline u_{ij} \vee \overline w_{ij} \vee s)) . \label{eq:adj:psi-4} \tag{$\ell_4$}
    \end{align}

    The partition $C_1, \ldots, C_{2k-1}$, the set of vulnerable elements $D$ as well as the budget $\Gamma$ stay the same as in the restricted game.
    That is, $C_1 = U_1 \cup W_1$ and $C_i = (U_{2i-2} \cup W_{2i-2}) \cup (U_{2i-1} \cup W_{2i-1})$ for $1 \leq i \leq k$, $D = \bigcup_{i=2}^{k} (U_{2i-2} \cup W_{2i-2})$, and $\Gamma = n(k-1)$.
    Note that the interdictor can only assign the value false by blocking variables $u_{ij}, w_{ij}$ contained in the set $D$.
    However, the decision maker can assign the value true or false in each of his moves.

    As in the proof for the interdiction game, we introduce a cheat-detection variable $s$ for situation in which the interdictor cheats.  For this proof, there is no need to introduce a second cheat-detection variable.  Recall  the rules of the restricted game.
    
    \begin{itemize}
      \item \textbf{Rule 1.} In Stage 1 (and move 1), Adam must select exactly one of $u_{1,j}$ and $w_{1,j}$ for each $j \in \{1, \ldots, n\}$.
      \item \textbf{Rule 2.} In Stage $i \geq 2$ (and move $2i-2$), Barbara must block exactly one of $u_{2i-2,j}$ and $w_{2i-2,j}$ for each $j \in \{1, \ldots, n\}$. 
      \item \textbf{Rule 3.} In Stage $i \geq 2$ (and move $2i - 1$), Adam must select each unblocked element of $U_{2i-2} \cup W_{2i-2}$ and exactly one of $u_{2i-1,j}$ and $w_{2i-1,j}$ for each $j \in \{1, ..., n\}$.
    \end{itemize}
    
    We claim that if one player cheats on a move, then the other player has a guaranteed win.
    We again consider the importance of all rules following a hierarchy over the moves.
    More precisely, Rule 1 is the most important as it is valid for the first stage.
    Then, for each stage $i \in \fromto{2}{k}$, we consider Rule 2 (Barbara's move) more important than Rule 3 (Adam's move) in the same stage $i$ and the rules of stage $i-1$ more important than the rules of stage $i$.
    Therefore, the first player to break a rule will lose the game.

    Line \ref{eq:adj:psi-1} is the original formula $\varphi(U)$ of the original variables $U$ together with the cheat-detection variable $s$.
    If $s$ is set to true by Adam, he trivially satisfies this sub formula.
    Line \ref{eq:adj:psi-2} implements the detection of a cheat by the interdictor Barbara.
    If she blocks both $u_{ij}$ and $w_{ij}$, she has not enough budget to block one of $u_{i'j'}$ and $w_{i'j'}$.
    Then Adam can assign $u_{i'j'}$ and $w_{i'j'}$ the value true, and assign $s_{ij}$ the value true.
    This is propagated to Line \ref{eq:adj:psi-3} such that Adam can set $\overline s$ to false.
    Line \ref{eq:adj:psi-4} acts as a cheat-detection gadget for Adam, which guarantees that he selects exactly one of $u_{ij}$ and $w_{ij}$ for all $i \in \fromto{1}{2k-1}$ and $j \in \fromto{1}{n}$.
    In conclusion, we derive the following cases.

    \textbf{Case 1:} \textit{Rule 1 is the first one broken.}
    Accordingly, Adam sets both $u_{1j}$ and $w_{1j}$ either to true or to false for some $j \in \fromto{1}{n}$.
    Barbara obviously did not cheat beforehand.
    From that point on, Barbara just needs to adhere to rule 2 and she trivially wins the game, since Adam cannot set the variable $s$ to true.
    
    \textbf{Case 2:} \textit{Rule 2 is the first one broken.}
    We consider the first case in which Barbara blocked neither $u_{ij}$ nor $w_{ij}$.
    Then, Adam can set both $u_{ij}$ and $w_{ij}$ to true and is able to satisfy \ref{eq:adj:psi-2} by setting $s_{ij}$ to true and $s$ to true.
    Thus, \ref{eq:adj:psi-1} and \ref{eq:adj:psi-4} are trivially fulfilled and Adam wins the game.

    In the second case, Barbara blocked both $u_{ij}$ and $w_{ij}$.
    Since Barbara's budget is only $\Gamma = n(k-1)$, there are some $i' \geq i$ and $j \in \fromto{1}{n}$ such that neither $u_{i'j'}$ nor $w_{i'j'}$ are blocked.
    Then, Adam can win the game by playing along the rules until stage $i'$ and use the same strategy as in case 1.
    
    \textbf{Case 3:} \textit{Rule 3 is the first one broken.}
    Barbara played along the rules until stage $i$ so that Adam selects neither $u_{ij}$ nor $w_{ij}$ for some $j \in \fromto{1}{n}$.
    Barbara now has a winning strategy by just playing along the rules.
    Then, Adam cannot satisfy \ref{eq:adj:psi-4} since $s$ needs to be set to false and both $u_{ij}$ and $w_{ij}$ are set to false and loses the game.

    By this case distinction, we conclude that it is optimal for both Adam and Barbara to follow all three rules.
    If all three rules are followed, then the clauses in \ref{eq:adj:psi-2},\ref{eq:adj:psi-3}, and \ref{eq:adj:psi-4} are satisfied, and $s$ is forced to be 0.
    Thus, $\psi(U)$ reduces to $\varphi(U)$.
    In conclusion, Adam can win the restricted adjustable robust $\satV$ on instance $I_{restr}$ if and only if he can win the unrestricted adjustable robust $\satV$ on instance $I_{adj}$, completing the proof.
\end{proof}

\subsection{The Meta-Reduction}

As the last step, we want to establish the completeness of all adjustable robust problems that are based on an $\NPS$-complete problem.
For this, we show that there is a meta-reduction starting at {\sc Comb. Adjustable Robust $\satV$} using the underlying solution-embedding reduction from $\satV$ to $\Pi$.

\begin{theorem}
    If $\Pi$ is $\NPS$-complete, then {\sc Comb. $k$-Stage Adjustable Robust $\Pi$} is $\Sigma^p_{2k-1}$-complete, if $k$ is a constant, and $\PSPACE$, if $k$ is part of the input.
\end{theorem}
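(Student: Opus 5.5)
Containment is already given by \Cref{lem:comb-adjustable-robust-containment}, so only hardness remains. I would reduce from {\sc Comb. $k$-Stage Adjustable Robust $\satV$}, which is $\Sigma^p_{2k-1}$-hard (resp.\ $\PSPACE$-hard) by \Cref{thm:comb-adjustable-robust-sat-thm}. The argument mirrors the meta-reduction for protection-interdiction games. Since $\Pi$ is $\NPS$-complete, there is an SE reduction $(g,f)$ from $\satV$ to $\Pi$. Given an instance $(I, U_1,\dots,U_k, D, \Gamma)$ of the $\satV$ game, I set $I' = g(I)$ and build the following data:
\begin{itemize}
\item stage sets $U'_i := f(U_i)$ for $i<k$;
\item last stage set $U'_k := f(U_k)\cup \Uaux'$;
\item vulnerable set $D' := f(D)$;
\item the same budget $\Gamma$.
\end{itemize}
This is a partition of $\U(I')$ and is computable in polynomial time. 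Placing the auxiliary universe entirely in the last stage is the key choice. It ensures that no auxiliary element is selected or blocked before the decision maker's final move, and the adversary never touches auxiliary elements at all.

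I then use the dual mimicking technique. Alice and Bob play on $\satV$, and Adam and Barbara play on $\Pi$. Suppose Alice has a winning strategy.
\begin{itemize}
\item When Alice selects $S_i$, Adam selects $f(S_i)$.
\item When Barbara blocks $B'_i\subseteq D'\cap U'_i$, Bob blocks $f^{-1}(B'_i)\subseteq D\cap U_i$. The cardinalities agree because $f$ is injective, so Bob respects the budget exactly when Barbara does.
\item In the last stage, Alice answers Bob's blocks with some $S_k\subseteq U_k\setminus B_k$ such that $S=\bigcup_i S_i\in\sol(I)$.
\end{itemize}
By (P1) there is $S'\in\sol(I')$ with $S'\cap\Uemb' = f(S)$. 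Adam selects $f(S_k)\cup(S'\cap\Uaux')$ in the last stage. This set avoids $f(B_k)=B'_k$, because $f$ is injective and auxiliary elements are never vulnerable. The union of Adam's selections is then exactly $S'$, so Adam wins.

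Conversely, suppose Bob has a winning strategy.
\begin{itemize}
\item When Adam selects $S'_i$, Alice selects $f^{-1}(S'_i)$.
\item When Bob blocks $B_i$, Barbara blocks $f(B_i)$.
\item At the end, Adam's total selection $S'$ satisfies $S'\cap\Uemb' = f(S)$, where $S$ is the union of Alice's mimicked choices (including $f^{-1}$ of Adam's last-stage choice, which avoids $B_k$).
\end{itemize}
If $S'\in\sol(I')$, then (P2) gives $f^{-1}(S'\cap \Uemb')=S\in\sol(I)$. That contradicts Bob's winning strategy, so Barbara wins.

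The step needing most care is the last stage. The simple strategy-stealing for stages $1,\dots,k-1$ is routine. The final stage combines two things in one move: Adam must respond to the adversary's block and also complete the solution with auxiliary elements. The point to verify is that the completion promised by (P1) can be chosen independently of the already fixed embedded part. This holds because the SE property constrains only $S'\cap \Uemb'$, and the auxiliary elements are all available in stage $k$ and unblockable. A second, minor check is that the selections in the $\Pi$-game respect the stage partition, i.e.\ $f(S_i)\subseteq U'_i\setminus B'_i$. This is immediate from injectivity of $f$ and the definition of $U'_i$.
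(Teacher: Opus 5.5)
Your proposal is correct and follows the same route as the paper. You take the SE reduction $(g,f)$ from \textsc{Sat-V} to $\Pi$, apply it to the $\Sigma^p_{2k-1}$-hard \textsc{Sat-V} version, and run a dual-mimicking argument; the paper's proof is only a sketch, and yours makes explicit the details it leaves implicit. These are placing $\Uauxp$ entirely in the last stage and outside $D'=f(D)$, preservation of block cardinalities by injectivity, completing the solution in stage $k$ via (P1), and using (P2) in the converse direction.
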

\begin{proof}

    Containment is analogous to \Cref{lem:comb-adjustable-robust-containment}.
    We further show the hardness by using the dual mimicking strategy proof.
    Since $\Pi$ is $\NPS$-complete, there is a solution-embedding reduction $(g, f)$ from $\satV$ to $\Pi$.
    Let $I$ be the instance of $\satV$ and $g(I) = I'$ be the instance of $\Pi$.
    Alice and Bob play the adjustable robust $\satV$ game.
    Adam and Barbara play the adjustable robust $\Pi$ game.
    For each move of Adam $i \in \fromto{1}{2k-1}$, $i$ even:
    If Alice selects $S_i \subseteq U_i$, Adam selects $S'_i = f(S_i)$.
    For each move of Barbara $i \in \fromto{2}{2k-2}$, $i$ even:
    If Bob selects $S_i \subseteq U_i$, Barbara selects $B'_i = f(U_i \setminus S_i)$.
    Adam then selects $S'_i = f(U_i \setminus B_i)$.
    The optimality of these strategies can be argued analogously to the proof of \Cref{thm:protection-interdiction-main-thm}.
    
\end{proof}

\section{Alternative NP-S models }
\label{sec:alternative-NPS-models}

In this section, we investigate different models for $\NPS$ problems.
The universe and the corresponding solution sets of $\NPS$ problems have a great influence not only on the corresponding problems in the polynomial hierarchy and $\PSPACE$ but also on possible solution-embedding reductions between those problems.
On the one hand, the universe and solutions implicitly define different versions of problems in the polynomial hierarchy, for example for interdiction:
Here, we are asked to find a blocker $B \subseteq D$, where $D \subseteq \U$, that is, the set of vulnerable elements $D$ directly depends on the choice of the universe $\U$ in the $\NPS$ problem.
On the other hand, a great many $\NP$-completeness reductions are (perhaps implicitly) designed to be solution-embedding, as evidenced by the problems described in Section \ref{sec:ssp-reductions}.
Even in an $\NP$-completeness reduction involving complex \enquote{gadgets}, there often is an underlying solution-embedding mapping $f$.
However, there are $\NP$-complete problems that are not $\NPS$-complete and there are $\NP$-completeness reductions that are not solution-embedding.
Thus in this section, we further deal with such problems and reductions and argue that most of these can be interpreted as $\NPS$-complete problems or solution-embedding reductions under different models.

In the first part of this section, we start our investigation by analyzing different version of \textsc{Satisfiability} by discussing the use of complementary variables in the universe for a problem $\Pi \in \NPS$. 
Complementary variables, which are the counterpart of the existing variables, can be useful as part of a solution-embedding reduction;
we further can define literal-based models with them.
This can be extended to other combinatorial problems, which we will capture in the notion of \emph{dual problems}.
All of these models may also lead to desirable properties in the interdiction problems created from $\NPS$-complete problems.
In the second part, we investigate $\NPS$ problems that are \emph{universe covering}.
Such problems may be partitioning and packing problems, which are not directly $\NPS$-complete under the typical subset interpretation.
Indeed, these problems can be modeled differently so that they become $\NPS$-complete.
In the third part, we discuss a technique that we call \emph{2-step verification} and apply it to the problem {\sc Sequencing to Minimize Tardy Tasks} to show that there is an $\NPS$-complete model for this problem.

\subsection{Complementary variables and duals of problems}
\label{complementing}

In \Cref{sec:npsandlop}, we defined \textsc{Satisfiability (Variable-based)}, denoted by {\sc Sat-V}, in which the universe is the set of variables, and \textsc{Satisfiability (Literal-based)}, denoted by {\sc Sat-L}, in which the universe is the set of literals.
Additionally, it is possible to define the problem \textsc{Satisfiability (Complement Variable-based)}, in which the universe is again the set of variables, however, the solutions are defined as the variables that are assigned false.

Since all known reductions start from \textsc{Satisfiability}, the solution structure for $\NPS$-complete problems is similar to one of these problems, such that we generalize these notions.
Based on this, we discuss how the use of complementary models and literal-based models may enrich the modeling of the corresponding problems in the polynomial hierarchy and $\PSPACE$.

Let $\Pi = (\I, \U, \sol)$ be an $\NPS$ problem.
We may interpret $\U(I)$ as a set of binary variables for some instance $I$ as already discussed in \Cref{sec:npsandlop}.
We denote $\overline \U(I)$ the set of complementary variables to $\U(I)$.
That is, for $x_j \in \U(I)$, we have $\overline x_j = 1 - x_j$.
With this observation, we derive general notions for models of $\NPS$ problems.

\paragraph*{Complement-based models}
We define the \emph{complement-based} model of $\Pi$ by $\Pi_C = (\I, \U_C, \sol_C)$.
For this, we define $\U_C = \overline \U$ and modify the solution set $\sol$ appropriately, i.e., for each instance $I \in \I$: if $S \in \sol(I)$, then $\{\overline u \mid u \in S\} \in \sol_C(I)$, and if $S_C \in \sol_C(I)$, then $\{u \mid \overline u \in S_C\} \in \sol(I)$. 

For example, consider the problem {\sc Clique (Edge-based)}, in which we are given a graph $G=(V,E)$ and a number $k$ and we are asked to find a set $S \subseteq E$ such that if $\{u,v\}, \{v,w\} \in E$, then $\{u,w\} \in E$ and $|S| \geq k$.
Furthermore, consider the problem {\sc Independent Set (Edge-based)}, in which we are given a graph $G'=(V',E')$ and a number $k'$ and we are asked to find a set $S \subseteq E$ such that if $\{u,v\}, \{v,w\} \in E$, then $\{u,w\} \notin E$ and $|S| \geq k$.
The problem {\sc Clique (Edge-based)} is the complement-based model of {\sc Independent Set (Edge-based)}, by interpreting the complement of the edges $E$ of {\sc Clique (Edge-based)} as the edges $E' = \overline E$ in {\sc Independent Set (Edge-based)}.

\begin{lemma}
    The complement-based model of {\sc Sat-V} is $\NPS$-complete.   
\end{lemma}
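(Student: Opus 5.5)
We must show that every $\NPS$ problem SE-reduces to the complement-based model $\textsc{Sat-V}_C$. By \Cref{thm:cook-levin}, \textsc{Sat-V} is $\NPS$-complete. By \Cref{lem:SE-transitive}, SE reductions are transitive. So it suffices to give a single SE reduction $\textsc{Sat-V} \leqSE \textsc{Sat-V}_C$.

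Recall the solution structure of the complement-based model. For a formula $\varphi$ on variables $X=\fromto{x_1}{x_n}$, the universe is $\overline X$. A satisfying assignment $\alpha$ yields the solution $\set{\overline x_j : \alpha(x_j)=0}$; equivalently, the solution is the set of variables assigned false.

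\textbf{The reduction.} Given $\varphi(X)$, let $g(\varphi)$ be the formula $\varphi'$ obtained by substituting $\overline y_j$ for every occurrence of $x_j$ and $y_j$ for every occurrence of $\overline x_j$, over fresh variables $Y=\fromto{y_1}{y_n}$. This is a renaming with flipped polarity, so it is polynomial-time computable. An assignment $\alpha$ satisfies $\varphi$ if and only if $\beta := 1-\alpha$ (pointwise, $\beta(y_j)=1-\alpha(x_j)$) satisfies $\varphi'$, so satisfiability is preserved.

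Define the embedding $f_\varphi(x_j) := \overline y_j \in \overline Y = \U_C(\varphi')$. It is injective, and its image is the whole universe, so $\Uembp = \U_C(\varphi')$ and there are no auxiliary elements.

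\textbf{Checking the SE property.}
\begin{itemize}
  \item (P1): Take a solution $S=\set{x_j:\alpha(x_j)=1}$ of \textsc{Sat-V}. The assignment $\beta=1-\alpha$ satisfies $\varphi'$. Its complement-based solution is $\set{\overline y_j : \beta(y_j)=0} = \set{\overline y_j : \alpha(x_j)=1} = f_\varphi(S)$.
  \item (P2): Take a complement-based solution $S'$ of $\varphi'$ induced by $\beta$. Then $f_\varphi^{-1}(S')=\set{x_j : \beta(y_j)=0}$, which is exactly the true set of $\alpha=1-\beta$, and $\alpha$ satisfies $\varphi$.
\end{itemize}
Hence \eqref{eq:SE} holds. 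Membership of $\textsc{Sat-V}_C$ in $\NPS$ is immediate, since the verifier simply evaluates $\varphi'$ on the complemented assignment.

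\textbf{Alternative without renaming.} One could also take $\varphi'=\varphi$ itself, since the complement-based solution of $\alpha$ is then $\set{\overline x_j:\alpha(x_j)=0}$. But then the natural map $x_j\mapsto \overline x_j$ sends true-sets to false-sets, which is the wrong correspondence. The polarity flip above fixes exactly this.

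\textbf{Main obstacle.} The only delicate point is getting the polarity bookkeeping right, that is, making sure $f$ sends the set of true variables of one instance to the set of false variables of the other. Everything else is formal.
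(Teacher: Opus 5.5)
Your proof is correct and rests on the same key idea as the paper: the polarity-flipping substitution (replace $x_j$ by $\overline y_j$ and $\overline x_j$ by $y_j$) with embedding $f(x_j)=\overline y_j$, which turns the true-sets of $\varphi$ into the false-sets of $\varphi'$. The only difference is packaging. The paper applies this substitution directly inside Cook's construction from \Cref{thm:cook-levin}, whereas you isolate it as a standalone reduction $\textsc{Sat-V} \leqSE \textsc{Sat-V}_C$ and compose it with \Cref{thm:cook-levin} via \Cref{lem:SE-transitive}, which is slightly cleaner and makes the polarity bookkeeping explicit.
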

\begin{proof}
    Recall that in Theorem \ref{thm:cook-levin}, we established that {\sc Sat-V} is $\NPS$-complete.
    In that proof, we could have replaced each variable $x_i$ by a different variable $\overline y_i$.
    Thus each occurrence of $x_i$ in a clause would be replaced by $\overline y_i$, and each occurrence of $\overline x_i$ would be replaced by $y_i$.
    The resulting proof shows that the complement-based model of {\sc Sat-V} is $\NPS$-complete.    
\end{proof}

\begin{theorem}
\label{complement-based}
    Suppose that $\Pi$ is an $\NPS$-complete problem in which the universe consists of variables.
    Then the complement-based model of $\Pi$ is also $\NPS$-complete. 
\end{theorem}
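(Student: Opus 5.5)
The plan is to reduce $\NPS$-completeness of $\Pi_C$ to that of the complement-based model of \textsc{Sat-V}, which the preceding lemma establishes. We then use transitivity of SE reductions (\Cref{lem:SE-transitive}). So it suffices to show $\textsc{Sat-V}_C \leqSE \Pi_C$. Membership $\Pi_C \in \NPS$ is immediate: a verifier for $\Pi_C$ complements the input characteristic vector and runs the verifier $V$ of $\Pi$. Since $|\overline\U(I)| = |\U(I)|$, the universe size stays polynomial.

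The key observation is that complementation commutes with SE reductions. Because $\Pi$ is $\NPS$-complete, \Cref{thm:cook-levin} together with completeness gives an SE reduction $(g,(f_I))$ from \textsc{Sat-V} to $\Pi$. I would keep the same instance map $g$, which is valid because $\textsc{Sat-V}_C$ and \textsc{Sat-V} have the same instances and the same yes-instances, as do $\Pi_C$ and $\Pi$. I would also define $\overline f_I(\overline u) := \overline{f_I(u)}$, which is injective and polynomial-time computable. Writing $\overline{(\cdot)}$ for the bijection $S \mapsto \{\overline u : u \in S\}$, we have $\sol_C(I) = \{\overline S : S \in \sol(I)\}$ by definition.

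It then remains to verify \eqref{eq:SE} for $\overline f_I$. Since $\overline{(\cdot)}$ is a bijection commuting with images under injective maps and with intersections, we get $\overline f_I(\overline S) = \overline{f_I(S)}$ and $\overline{S'} \cap \overline f_I(\overline{\U(I)}) = \overline{S' \cap f_I(\U(I))}$. Applying $\overline{(\cdot)}$ to both sides of the SE equation for $(g,f)$ yields exactly the SE equation for $(g,\overline f)$ between $\textsc{Sat-V}_C$ and $\Pi_C$.

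The main obstacle is interpretive rather than technical. The definition of the complement-based model says the universe is $\overline\U$ and solutions are mapped elementwise, so the sets $\overline S$ are simply relabelled copies of $S$. One should check that this reading, rather than ``set complement $\U\setminus S$'', is the intended one. Under the elementwise relabelling the argument above goes through; indeed $\Pi_C$ is then isomorphic to $\Pi$, so even the detour through $\textsc{Sat-V}_C$ could be replaced by a direct SE reduction $\Pi \leqSE \Pi_C$ via $u \mapsto \overline u$.

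If the intended reading is set complement, this is where I would expect difficulty, because then the correspondence $f_I(S) = S' \cap \Uembp$ must be turned into a statement about complements. In that case I would argue that $\Uembp \setminus f_I(S) = (\U' \setminus S') \cap \Uembp$, which holds since $f_I(\U(I)) = \Uembp$. This identity lets the same $g$ and $f_I$ serve as an SE reduction from $\textsc{Sat-V}_C$ to $\Pi_C$, and the proof then concludes by transitivity as before.
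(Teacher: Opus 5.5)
Your proposal is correct and is essentially the paper's proof. The paper also keeps the instance map $g$ of the SE reduction from \textsc{Sat-V} to $\Pi$ and sends $\overline x_i$ to the complement of $f(x_i)$. It then uses that $S\in\sol(I)$ iff $\{\overline u : u\in S\}\in\sol_C(I)$, and, via the preceding lemma on $\textsc{Sat-V}_C$ and transitivity, concludes $\NPS$-completeness of $\Pi_C$.

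Your explicit check of the SE equation, the membership argument, and the alternative set-complement argument via $f_I(\U(I)\setminus S)=\Uembp\setminus f_I(S)$ are careful additions that the paper leaves implicit. One small slip in the membership argument: under the elementwise reading, the verifier for $\Pi_C$ should merely relabel the characteristic vector, not complement it.
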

\begin{proof}
    The solution-embedding reduction from {\sc Sat-V} to $\Pi$ is also a solution-embedding reduction from the complement-based model of {\sc Sat-V} to the complement-based model of $\Pi$.
    To see this, let the reduction from {\sc Sat-V} to $\Pi$ map $x_i$ to $f(x_i)$, then we can map a variable $\overline x_i$ in $\textsc{Sat-V}_C$ to $f(\overline x_i)$ in $\Pi_C$.
    Since $S \in \sol(I)$, then $\{\overline u \mid u \in S\} \in \sol_C(I)$, and vice versa, we have a correct solution-embedding reduction.
\end{proof}

\paragraph*{Literal-based models}
The \emph{literal-based} model of $\Pi$ is obtained by redefining $\Pi = (\I, \U, \sol)$ to $\Pi_{L} = (\I, \U_L, \sol_L)$, where $\U_L(I) = \U(I) \cup \overline \U(I)$ for each instance $I \in \I$.
We further replace $S \in \sol(I)$ by $S' \in \sol_L(I)$ as follows:
(i) $x_j \in S$, iff $x_j \in S'$, and (ii) $x_j \notin S$, iff $\overline x_j \in S'$.

\begin{theorem}\label{literal-based}
    Suppose that $\Pi$ is a variable-based model of some $\NPS$-complete problem.
    Then the literal-based variant $\Pi_L$ is $\NPS$-complete.
\end{theorem}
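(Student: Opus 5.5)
Since $\Pi$ is $\NPS$-complete, every problem in $\NPS$ SE-reduces to $\Pi$; I will show that $\Pi \leqSE \Pi_L$. By \Cref{lem:SE-transitive} this gives $\Pi' \leqSE \Pi_L$ for every $\Pi' \in \NPS$. It also remains to check that $\Pi_L \in \NPS$. This is immediate: given $C \subseteq \U(I) \cup \overline \U(I)$, a verifier checks that $|C \cap \set{x_j, \overline x_j}| = 1$ for every $j$, and then runs the verifier of $\Pi$ on $C \cap \U(I)$. The universe $\U_L(I)$ has size $2|\U(I)| = |I|^{O(1)}$.

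For the reduction $\Pi \leqSE \Pi_L$, take $g$ to be the identity on instances and $f_I(x_j) := x_j$, so that $f_I$ embeds $\U(I)$ into the positive half of $\U_L(I)$. The first condition of \Cref{def:se-reduction} holds trivially, since $S \mapsto S \cup \set{\overline x_j : x_j \notin S}$ is a bijection from $\sol(I)$ onto $\sol_L(I)$; hence $\sol(I) \neq \emptyset$ iff $\sol_L(I) \neq \emptyset$.

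For (P1), given $S \in \sol(I)$, I take $S' = S \cup \set{\overline x_j : x_j \notin S}$. Then $S' \in \sol_L(I)$ and $S' \cap \Uembp = S = f_I(S)$. For (P2), given $S' \in \sol_L(I)$, the defining rule (i) of $\sol_L$ says that $S' \cap \U(I)$ is exactly the set $S \in \sol(I)$ that $S'$ was derived from, so $f_I^{-1}(S' \cap \Uembp) = S \in \sol(I)$. Together these give the \eqref{eq:SE}.

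The only point needing care is the definition of $\sol_L$. I must read rules (i) and (ii) as saying that $\sol_L(I)$ consists exactly of the sets $S'$ obtained from some $S \in \sol(I)$, so that each $S'$ contains exactly one literal per variable. With this reading, (P2) holds and no element of $\sol_L(I)$ is spurious. I expect no real obstacle beyond stating this precisely. Alternatively, one could compose with the Cook–Levin reduction of \Cref{thm:cook-levin} to \textsc{Sat-L}, but the direct embedding above is simpler.
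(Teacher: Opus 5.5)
Your proposal is correct and is essentially the paper's proof. The paper also takes $g$ to be the identity on instances and $f_I(x_i)=x_i$, embedding $\U(I)$ into the positive literals of $\U_L(I)$, and simply asserts that this reduction is solution-embedding. You go further by actually verifying $\Pi_L \in \NPS$, checking (P1) and (P2), and invoking transitivity. The paper leaves these steps implicit.
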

\begin{proof}
    In the reduction of $\Pi = (\I, \U, \sol)$ to $\Pi_L = (\I, \U_L, \sol_L)$, we define $g(I) = I$ for all $I \in \I$.
    Since $\U_L(I)$ includes the set $\U(I)$, we define $f(x_i) = x_i$ for each variable $x_i \in \U(I)$.
    This reduction is solution-embedding.
\end{proof}

\begin{lemma}\label{satvandlequiv}
    The following are true: {\sc Sat-V} $\leqSE$ {\sc Sat-L} and {\sc Sat-L} $\leqSE$ {\sc Sat-V}.
\end{lemma}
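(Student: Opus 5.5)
The quickest route to both directions is to exhibit explicit SE reductions in which $g$ is the identity on formulas. Another option is to derive them from \Cref{thm:cook-levin}, which already shows that both \textsc{Sat-V} and \textsc{Sat-L} are $\NPS$-complete; that would give both directions immediately. The paper, however, treats that proof for \textsc{Sat-L} as ``analogous'', so I would give direct constructions as well.

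\textbf{\textsc{Sat-V} $\leqSE$ \textsc{Sat-L}.} Let $\varphi$ be a formula over variables $x_1,\dots,x_n$. Set $g(\varphi)=\varphi$ and $f_\varphi(x_i)=\ell_i$, the positive literal; $f_\varphi$ is clearly injective. The satisfying assignments of the two instances are the same. An assignment $\alpha$ corresponds to the set $S_V(\alpha)=\set{x_i:\alpha(x_i)=1}$ in \textsc{Sat-V} and to $S_L(\alpha)$ in \textsc{Sat-L}. We have $S_L(\alpha)\cap\Uembp=\set{\ell_i:\alpha(x_i)=1}=f_\varphi(S_V(\alpha))$. Ranging over all satisfying $\alpha$ gives \eqref{eq:SE} directly. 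This is essentially \Cref{literal-based} applied to \textsc{Sat-V}.

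\textbf{\textsc{Sat-L} $\leqSE$ \textsc{Sat-V}.} Here the embedding must send both $\ell_i$ and $\overline\ell_i$ to variables, so the formula has to be modified. Given $\varphi$ over $x_1,\dots,x_n$, construct $\varphi'$ over fresh variables $p_1,\dots,p_n,q_1,\dots,q_n$:
\begin{itemize}
\item replace every occurrence of $x_i$ by $p_i$;
\item replace every occurrence of $\overline x_i$ by $q_i$;
\item add the clauses $(p_i\lor q_i)\land(\overline p_i\lor\overline q_i)$ for every $i$.
\end{itemize}
Then set $f_\varphi(\ell_i)=p_i$ and $f_\varphi(\overline\ell_i)=q_i$. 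The map is injective, $\Uembp$ is the whole universe, and $\Uauxp=\emptyset$.

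The verification runs as follows. Any satisfying assignment $\beta$ of $\varphi'$ has exactly one of $p_i,q_i$ true for each $i$. Defining $\alpha(x_i)=\beta(p_i)$ therefore makes $\alpha$ satisfy $\varphi$, and $f_\varphi^{-1}(S_V(\beta))=S_L(\alpha)$, which gives (P2). Conversely, a satisfying $\alpha$ for $\varphi$ yields $\beta$ with $\beta(p_i)=\alpha(x_i)$ and $\beta(q_i)=1-\alpha(x_i)$. This $\beta$ satisfies $\varphi'$, and $S_V(\beta)=f_\varphi(S_L(\alpha))$, which gives (P1). Both $g$ and $f$ are computable in linear time.

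The only step needing care is this second direction. The consistency clauses are exactly what makes the image of a \textsc{Sat-V} solution a legal literal set, that is, one containing exactly one of $\ell_i,\overline\ell_i$ for each $i$. Without them (P2) would fail, so I would check that step explicitly.
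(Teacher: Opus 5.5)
Your proposal is correct and follows the same route as the paper. The first direction is exactly \Cref{literal-based} (identity on formulas, $x_i\mapsto\ell_i$), and the second is the paper's construction with your $p_i,q_i$ in the roles of $x^t,x^f$, including the two consistency clauses; your explicit verification of (P1) and (P2) spells out what the paper leaves implicit.
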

\begin{proof}
    {\sc Sat-V} $\leqSE$ {\sc Sat-L} follows from \Cref{literal-based}.
    To show {\sc Sat-L} $\leqSE$ {\sc Sat-V}, we define the following reduction.
    For each pair of literals $\ell, \overline \ell \in L$, we create a variable pair $x^t, x^f \in X$.
    We further define $f(\ell) = x^t$ and $f(\overline \ell) = x^f$.
    Moreover, we substitute each occurrence of $\ell$ in some clause $c \in C$ by $x^t$ and each occurrence of $\overline \ell$ in some clause $c \in C$ by $x^f$.
    At last, we add the clauses $\{x^t \lor x^f \}$ and $\{\overline x^t \lor \overline x^f\}$ to the clause set.
\end{proof}

\paragraph*{Dual problems}
For each literal-based problem in \NPS, we define another problem that we call the ``dual''.
For a literal-based problem $\Pi \in \NPS$, the \emph{dual} of $\Pi = (\I, \U, \sol)$, denoted by $\Pi_D = (\I, \U, \sol_D)$, is defined by $\sol_D(I) = \{S \subseteq \U(I) : (\U(I) \setminus S) \in \sol(I) \}$.

As an example, consider the problem {\sc Vertex Cover}, in which we are given a graph $G=(V,E)$ and a number $k$ and we are asked to find a set $S \subseteq V$ such that $|S \cap e| \geq 1$ for all $e \in E$ and $|S| \leq k$.
In the literal-based variant of {\sc Vertex Cover}, the universe is $\U(I) = V \cup \overline V$.
For each vertex $v \in V$, a solution contains $v \text{ or } \bar v$, according to whether $v$ is selected or not.
Its dual problem is the literal-based version of {\sc Independent Set}, in which we are given a graph $G=(V,E)$ and a number $\ell$, and we are asked to find a subset $S \subseteq V$ such that $|S| \ge \ell$ and for each edge $(v, w) \in E$, $|S \cap \{v, w\}| \leq 1$.

To see the direct correspondence, suppose that $\sol_{VC}$ is the solution set of {\sc Vertex Cover} for the instance with graph $G = (V, E)$ and threshold $k$.
Let $\sol_{IS}$ be the solution set of {\sc Independent Set} for the instance with graph $G = (V, E)$ and $\ell = n - k$.
Then $\sol_{IS} = \{S \subseteq (V \cup \overline V) : ((V \cup \overline V) \setminus S) \in \sol_{VC}\}$.

\begin{theorem}
    If a literal-based problem $\Pi$ is $\NPS$-complete, then the dual of $\Pi$ is also $\NPS$-complete.
\end{theorem}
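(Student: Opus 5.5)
The plan is to show that $\Pi \leqSE \Pi_D$. Once that holds, transitivity (\Cref{lem:SE-transitive}) gives $\Pi' \leqSE \Pi \leqSE \Pi_D$ for every $\Pi' \in \NPS$. Membership $\Pi_D \in \NPS$ is immediate: a verifier for $\Pi_D$ on input $(I,S)$ runs the verifier of $\Pi$ on $(I,\U(I)\setminus S)$, which is polynomial since $|\U(I)| = |I|^{O(1)}$.

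The natural reduction uses the identity on instances, $g(I) = I$, together with literal swapping, $f_I(x_j) = \overline x_j$ and $f_I(\overline x_j) = x_j$. This $f_I$ is an injective (indeed bijective) map $\U(I) \to \U(I)$, so $\Uembp = \U(I)$ and $\Uauxp = \emptyset$. The SE property then reduces to the identity
\[
\set{f_I(S) : S \in \sol(I)} = \sol_D(I).
\]
In a literal-based model, every solution $S$ contains exactly one of $x_j, \overline x_j$ for each $j$. Hence $\U(I)\setminus S = f_I(S)$, because the complement of $S$ is exactly the set of literals opposite to those chosen in $S$. It follows that $T \in \sol_D(I)$ iff $\U(I)\setminus T \in \sol(I)$ iff $f_I(T) \in \sol(I)$, using the fact that $f_I$ is an involution. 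In particular, $\sol(I) \neq \emptyset$ iff $\sol_D(I) \neq \emptyset$, so $g$ is a correct many-one reduction. Both (P1) and (P2) hold, since $S' \cap \Uembp = S'$.

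The main obstacle is the step $\U(I)\setminus S = f_I(S)$. It uses the structural fact that solutions of a literal-based problem pick exactly one literal per variable. By the definition of literal-based models this holds automatically for $\Pi_L$ obtained from a variable-based $\Pi$, and I will check that this is the intended meaning of "literal-based". If the statement is meant for a more general $\Pi$ with universe $\U\cup\overline\U$ whose solutions might contain both or neither literal of a pair, I would fall back to a different route. First compose with an SE reduction $\textsc{Sat-L} \leqSE \Pi$. Then observe that, since the dual of \textsc{Sat-L} is again a satisfiability problem on literals, the same variable-renaming trick used for the complement-based model of \textsc{Sat-V} shows that $\textsc{Sat-L}$ reduces to its own dual. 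Finally, the SE reduction $\textsc{Sat-L} \leqSE \Pi$ can be transported to the duals by mapping through $f$ and complementing within the universe, as in the proof of \Cref{complement-based}. I expect the direct literal-swapping argument to suffice in the setting the paper considers.
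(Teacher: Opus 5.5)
Your direct argument is the paper's proof: the identity on instances, the literal swap $f_I(x_j)=\overline x_j$, $f_I(\overline x_j)=x_j$ as a bijection with $\Uembp=\U(I)$, and the observation that $f_I(S)=\U(I)\setminus S$ for every solution, since literal-based solutions contain exactly one literal of each pair. You also spell out the membership $\Pi_D\in\NPS$ and the appeal to transitivity, which the paper leaves implicit. The fallback is not needed, because the paper's definition of literal-based models already guarantees the exactly-one-per-pair property. (As you suspect, transporting an SE reduction to the duals does work in general, since $f(\U(I)\setminus S)=\Uembp\setminus f(S)$ for injective $f$.)
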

\begin{proof}
    Suppose that $\Pi = (\I, \U, \sol)$ is a literal-based model of an $\NPS$-complete problem and that $\Pi_D = (\I, \U, \sol_D)$ is its dual.
    Since $\Pi$ is a literal-based model, the universe is a partition $\U(I) = U \cup \overline U$ for each instance $I \in \I$.
    Furthermore, by definition of a literal-based model, for each $S \in \sol(I)$, $u \in S$ if and only if $\overline u \notin S$.
    Thus, for all instances $I \in \I$, we can map $f(u) = \overline u$ and $f(\overline u) = u$ for all $u \in \U(I)$.
    It follows that $\U(I) = f(\U(I))$ and $S \in \sol(I)$ if and only if $(\U(I) \setminus S) \in \sol_D(I)$.
    This mapping is a bijection, and it is also solution-embedding.
\end{proof}

As discussed earlier, different models of the same problem in $\NP$ lead to different problems in $\NPS$, and they are mapped to different problems in $\PH$.    
Consider, for example, the problem \textsc{Interdiction-Knapsack} that can arise.
When the universe $\U$ consists of variables, then the blocker, at her turn, can block items from being in the knapsack.
If, instead, the set $\U$ contains both variables and their complements, then the blocker can also force items to be in the knapsack.
Literal models and variable models (be it complementary or not) thus lead to different interdiction games.

\subsection{Universe-Covering Problems}
A class of problems in $\NPS$ that cannot be $\NPS$-complete are those that have the property of \enquote{universe covering}.
We say that a problem $\Pi$ in $\NPS$ is \emph{universe covering} if for every instance $I \in \I$ of $\Pi$ such that $\sol(I) \neq \emptyset$, we have $\bigcup _{S\in \sol(I)} S = \U(I)$.

\begin{lemma}
\label{lem:ucovering}
     If a problem $\Pi$ is universe covering, then the problem cannot be $\NPS$-complete.
\end{lemma}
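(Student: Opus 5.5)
We argue by contradiction: if a universe-covering problem $\Pi = (\I,\U,\sol)$ were $\NPS$-complete, then there would be an SE reduction $(g,(f_I)_{I\in\I})$ from \textsc{Sat-V} to $\Pi$, since \textsc{Sat-V} lies in $\NPS$. We exhibit a single satisfiable \textsc{Sat-V} instance for which the \eqref{eq:SE} cannot hold against a universe-covering target. The natural witness is a formula forcing one variable to be false, for example $\varphi = (\overline x_1)$ over the single variable $x_1$ (optionally with a second, unconstrained variable if one wishes to avoid degenerate universes). Here $\sol(\varphi) = \set{\emptyset}$: the only satisfying assignment sets $x_1$ false, so no solution contains $x_1$.

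Next we use the SE property on $I' = g(\varphi)$. Since $\sol(\varphi) \neq \emptyset$, the many-one part of the reduction gives $\sol'(I') \neq \emptyset$. Because $\Pi$ is universe covering, $\bigcup_{S' \in \sol'(I')} S' = \U'(I')$. In particular, the element $f_\varphi(x_1) \in \Uembp \subseteq \U'(I')$ lies in some solution $S' \in \sol'(I')$. By \eqref{eq:SE:backward}, $f_\varphi^{-1}(S' \cap \Uembp)$ is a solution of $\varphi$. Injectivity of $f_\varphi$ shows that this preimage contains $x_1$, which contradicts the fact that every solution of $\varphi$ avoids $x_1$. Hence no SE reduction from \textsc{Sat-V} to $\Pi$ exists, and $\Pi$ is not $\NPS$-complete.

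The argument is short, and the one step requiring care is choosing a witness instance in which some universe element belongs to no solution although a solution exists. Equivalently, the source instance itself must fail to be universe covering. The single clause $(\overline x_1)$ achieves this. The general principle is that SE reductions reflect, through \eqref{eq:SE:backward}, the property that an embedded element occurs in some solution. We also check that nothing breaks formally: $\U(\varphi)$ is nonempty, so $f_\varphi(x_1)$ exists, and the covering hypothesis applies because $\sol'(I')\neq\emptyset$.
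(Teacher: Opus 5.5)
Your proof is correct and takes essentially the same approach as the paper: assume an SE reduction from \textsc{Sat-V}, pick an instance in which $x_1$ lies in no solution, use universe covering to find a solution $S'$ of $g(\varphi)$ containing $f(x_1)$, and pull it back via (P2) to obtain a solution containing $x_1$. Your explicit choice of a \emph{satisfiable} witness such as $(\overline x_1)$ is a good addition, because it shows why the covering hypothesis applies to $g(\varphi)$; the paper's proof leaves the requirement $\sol(I)\neq\emptyset$ implicit. (A minor point: injectivity is not needed to conclude $x_1 \in f^{-1}(S'\cap f(\U(\varphi)))$; this follows directly from the definition of preimage.)
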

\begin{proof}
    We carry out a proof by contradiction.
    Suppose that $\Pi = (\I_\Pi, \U_\Pi, \sol_\Pi)$ is both universe-covering and $\NPS$-complete.
    Then there must be an SE reduction from $\textsc{Sat-V} = (\I_{\textsc{Sat}}, \U_{\textsc{Sat}}, \sol_{\textsc{Sat}})$ to $\Pi$.
    Now, let $I \in \I_{\textsc{Sat}}$ be an instance of \textsc{Sat-V} that is chosen with the property that $x_1 \in \U_{\textsc{Sat}}(I)$, and $x_1$ is not in any solution $S \in \sol_{\textsc{Sat}}(I)$.
    Furthermore, let $g(I)$ be the instance obtained from $I$ by the SE reduction.
    Because $\Pi$ is universe covering, there is a solution $S' \in \sol_\Pi(g(I))$ such that $f(x_1) \in S'$.
    Because the reduction is SE, there is a solution $S = f^{-1}(S'\ \cap f(\U_\Pi(I))) \in \sol_{\textsc{Sat}}$ that contains $x_1$, contrary to our assumption.     
\end{proof}

We follow up with different natural and important problems that are universe covering.
The first example of a universe covering problem in $\NPS$ is the problem \textsc{3-Coloring} expressed as follows.

\begin{samepage}
    \begin{mdframed}
    	\begin{description}
            \item[]\textsc{3-Coloring (IP-based model)} \hfill\\
            \textbf{Instances $\I$:} \quad Undirected graph $G = (V, E)$.\\
            \textbf{Universe $\U$:} \quad $V \times \{1,2,3\}$.\\
            \textbf{Solutions $\sol$:} \quad The set of all $S \subseteq V \times \{1,2,3\}$, where $(v, i) \in S$ means that vertex $v$ is given color $i$, such that if $(v, i) \in S$ and $(w, i) \in S$, then $\{v, w\} \notin E$. 
        \end{description}
    \end{mdframed}
\end{samepage}
If $S$ is any feasible solution, then another solution $S'$ can be obtained by swapping color 1 with color 2 or with color 3.  The symmetries in the solution sets cause the problem to be universe-covering, and prevent it from being $\NPS$-complete.  

Another example of a universe-covering problem in $\NPS$ is the problem \textsc{Bin Packing} as described next. 
\begin{samepage}
    \begin{mdframed}
    	\begin{description}
            \item[]\textsc{Bin Packing (IP-based model)} \hfill\\
            \textbf{Instances $\I$:} \quad  Set $A \subseteq \N$, bin capacity $B \in \N$, number $k \in N$. \\
            \textbf{Universe $\U$:} \quad $\{(a, j) : a \in A \text{ and } j \in \{1, \ldots, k\} \}$ representing assignments of an element of $A$ to one of the bins $\{1, \ldots, k\}$.\\
            \textbf{Solutions $\sol$:}  \quad The set of all $S \subseteq \{(a, j): a \in A \text{ and } j \in \{1, \ldots, k\} \}$, such that for each $a \in A$ there is exactly one $j$ with $(a, j) \in S$ and $\sum_{(a, j) \in S} a \le B$ for all $j \in \{1, \ldots, k\}$. 
        \end{description}
    \end{mdframed}
\end{samepage}
In this $\NPS$ model of {\sc Bin Packing}, for every feasible solution $S$, there are $k!$ different solutions, each obtained by permuting the labels on the bins.
By Lemma \ref{lem:ucovering}, this model of {\sc Bin Packing} is not $\NPS$-complete.
Both, the model of {\sc 3-Coloring} and the model of {\sc Bin Packing}, are universe-covering because each solution set $\sol$ is invariant upon permutation.

In this subsection, we develop models of {\sc 3-Coloring}, and {\sc Bin Packing} that are all $\NPS$-complete.
In \Cref{coloring}, we discuss a modeling approach based on equivalence relations that can be used to create $\NPS$-complete partitioning and coloring problems.  
In \Cref{subsetgen}, we discuss a modeling approach in which $\U$ is a collection of subsets of a ground set.
We refer to these models as \emph{subset-generating}.
We show that the subset-generating models of the problems {\sc 4-Partition} and {\sc Packing with Triangles} are both $\NPS$-complete.

\subsubsection{Coloring and Partitioning Problems}
\label{coloring}

If an IP-based model for a partitioning problem is universe-covering, then, by Lemma \ref{lem:ucovering}, this model of the problem cannot be $\NPS$-complete. 
Here, we introduce a way of modeling the parts (or colors) as  equivalence classes in an equivalence relation.
This type of modeling breaks the usual symmetry of IP-based models.
We show that the following problems are all $\NPS$-complete: {\sc 3-Coloring}, {\sc 4-Partition}, and {\sc Bin Packing}.

\paragraph*{Modeling equivalence relations and equivalence classes.}   

An \emph{equivalence relation} on a set $S$ is a relation $\sim$ that satisfies the following three properties for all $a,b,c \in S$:
\emph{reflexivity} ($a \sim a$), \emph{symmetry} (if $a \sim b$, then $b \sim a$), and \emph{transitivity} (if $a \sim b$ and $b \sim c$, then $a \sim c$).
Given an equivalence relation $\sim$ on a set $S$, the \emph{equivalence class} of an element $a \in S$ is defined as
$[a] = \{ x \in S : x \sim a \}$.  Every element belongs to exactly one equivalence class, and no two distinct equivalence classes overlap.
Thus, the equivalence classes of $S$ form a partition of $S$.

We show that the equivalence relation model of the 3-coloring problem is $\NPS$-complete.

\begin{samepage}
    \begin{mdframed}
    	\begin{description}
            \item[]\textsc{3-Coloring (Equivalence relation model)} \hfill\\
            \textbf{Instances $\I$:} \quad Undirected graph $G = (V, E)$.\\
            \textbf{Universe $\U$:} \quad $\{(v, w) : v,w \in V \text{ and } \{v, w\} \notin E\}$.\\
            \textbf{Solutions $\sol$:} \quad An equivalence relation $S \subseteq \U$ with at most three equivalence classes.
        \end{description}
    \end{mdframed}
\end{samepage}

\begin{theorem}
    The equivalence relation model of {\sc 3-Coloring} is $\NPS$-complete.
\end{theorem}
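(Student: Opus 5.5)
The plan is to show membership in $\NPS$ directly and to prove hardness by one solution-embedding reduction from \textsc{Sat-L}. By \Cref{thm:cook-levin}, \textsc{Sat-L} is $\NPS$-complete, and by \Cref{lem:SE-transitive} SE reductions compose, so a single reduction \textsc{Sat-L} $\leqSE$ \textsc{3-Coloring (Equivalence relation model)} suffices.

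\textbf{Membership.} Given $G$ and $S \subseteq \U(G)$, a verifier checks in polynomial time that $S$ is reflexive, symmetric and transitive. It then computes the classes and checks that there are at most three. Since $\U$ contains only pairs of non-adjacent vertices, every class is automatically an independent set. So solutions correspond exactly to proper colorings with at most $3$ colors, each modulo a renaming of the colors.

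\textbf{Reduction.} I would take the textbook reduction from CNF-\textsc{Sat} to \textsc{3-Coloring} and adapt it.
\begin{itemize}
\item A palette triangle on the vertices $T, F, B$.
\item For each variable $x_i$, two literal vertices $v_{\ell_i}$ and $v_{\overline\ell_i}$, adjacent to each other and to $B$.
\item For each clause, the usual OR-gadget: a chain of OR-subgadgets over the literal vertices of the clause, whose output vertex is forced to be adjacent to $F$ and $B$. This handles clauses of arbitrary length, so no detour through \textsc{3Sat} is needed.
\end{itemize}
Each literal vertex is adjacent only to its complement, to $B$, and to gadget vertices, so the pair $(v_\ell, T)$ is a non-edge and lies in the universe. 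I would define the embedding by $f(\ell) := (v_\ell, T)$ for every literal $\ell$. This map is injective. Fixing the orientation $(v_\ell, T)$ is harmless, because the symmetric pair is present in a solution exactly when $(v_\ell, T)$ is.

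\textbf{(P1).} Start from a satisfying assignment with literal set $S$. Color true literals like $T$ and false literals like $F$. The standard property of the OR-gadget is that it can be $3$-colored whenever some input is colored like $T$, so this coloring extends to a proper $3$-coloring. Its induced equivalence relation $S'$ is a solution, and $S' \cap f(\U) = f(S)$ by construction.

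\textbf{(P2).} Let $S'$ be any solution, i.e. an equivalence relation with at most $3$ classes using only non-edges.
\begin{itemize}
\item $T, F, B$ are pairwise adjacent, so they lie in three distinct classes, and these are all the classes.
\item Each literal vertex is adjacent to $B$, so it lies in the class of $T$ or of $F$.
\item $v_{\ell_i}$ and $v_{\overline\ell_i}$ are adjacent, so exactly one of $(v_{\ell_i},T)$ and $(v_{\overline\ell_i},T)$ lies in $S'$.
\item The OR-gadget property forces, in each clause, some literal vertex into $T$'s class.
\end{itemize}
Hence $f^{-1}(S' \cap f(\U))$ is the literal set of a satisfying assignment.

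\textbf{Main obstacle.} The delicate step is (P2): reading off the assignment must not depend on how the colors are named. This is exactly what the equivalence-relation universe buys. Membership in $T$'s class is intrinsic to $S'$, whereas the IP-based model is universe covering and cannot be $\NPS$-complete by \Cref{lem:ucovering}. I would also check carefully that the OR-gadget's backward property holds for every proper $3$-coloring, not only for canonical ones.
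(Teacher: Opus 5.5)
Your proposal is correct and follows essentially the paper's route. Both arguments take the textbook \textsc{Sat}-to-\textsc{3-Coloring} reduction with a palette triangle and embed each variable or literal as the non-edge pair $(v, T)$, so that membership in $T$'s equivalence class encodes truth independently of color names. The differences are minor: you start from \textsc{Sat-L} with an OR-chain for clauses of arbitrary length and embed both literals of each variable, whereas the paper starts from \textsc{3Sat-V} with Garey and Johnson's clause gadget and embeds only $(v_i, T)$. You also spell out membership in $\NPS$ and the backward property (P2), both of which the paper leaves implicit.
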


\begin{proof}
    We show that the reduction from {\sc 3Sat} to {\sc 3-Coloring} described in Garey and Johnson \cite{DBLP:books/fm/GareyJ79} is an SE reduction from {\sc 3Sat-V} to the {\sc 3-Coloring (Equivalence Relation Model)}.
    
    Let \textsc{3Sat-V} $= (\I, \U, \sol)$ and {\sc 3-Coloring} $= (\I', \U', \sol')$.
    For each instance $I \in \I$, the reduction in Garey and Johnson creates an instance $g(I)$ in which there is an undirected graph $G = (V, E)$.
    For each variable $x_i$ of $\U(I)$, there is a vertex $v_i \in V = \U'(g(I))$.
    The vertex set $V$ also includes three additional vertices labeled $T$, $F$, and $D$; $T$ stands for ``true'', $F$ stands for ``false'', and $D$ stands for ``dummy''.
    There are also three vertices for each clause of $I$.
    The proof that the reduction is correct shows the following:
    $\alpha$ is a satisfying truth assignment of \textsc{3Sat} if and only if there is a 3-coloring of the vertices of $G$ such that:
    (1) if $\alpha(x_i) = 1$, the corresponding vertex $v_i$ has the same color as vertex $T$, and
    (2) if $\alpha(x_i) = 0$, the corresponding vertex $v_i$ has the same color as vertex $F$.
    This reduction is solution-embedding for the equivalence relation model of {\sc 3-Coloring}.
    The function $f$ is defined as follows.
    For every variable $x_i \in \U(I)$, $f(x_i) =  (v_i, T)$.
    This can be interpreted as meaning:
    if $x_i$ is true, then vertex $v_i$ is the same color as vertex $T$. 
\end{proof}

We next define the equivalence relation model of the problem {\sc $4$-Partition} and show that it is $\NPS$-complete.

\begin{samepage}
    \begin{mdframed}
    	\begin{description}
            \item[]\textsc{4-Partition Problem (Equivalence Relation Model)} \hfill\\
            \textbf{Instances $\I$:} \quad A set $A = \{a_1, a_2, ..., a_{4n}\}$, integer $B$, and integer sizes $s(a_i)$ for each $i \in \{1, ..., 4n\}$ such that $B/5 < s(a_i) < B/3$.\\
            \textbf{Universe $\U$:} \quad $\{(a_i, a_j) : i,j \in \{1, \ldots, 4n\}$. \\
            \textbf{Solutions $\sol$:} \quad The set of all $S \subseteq \{(a_i, a_j) : i,j \in \{1, \ldots, 4n\}\}$ such that $S$ is an equivalence relation, and for each equivalence class $C$ of $S$, $\sum_{a_i \in C} s(a_i) = B$. 
        \end{description}
    \end{mdframed}
\end{samepage}

\begin{theorem}
    The equivalence relation model of {\sc 4-Partition} is strongly $\NPS$-complete.
\end{theorem}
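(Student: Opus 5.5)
We need to show that the equivalence relation model of {\sc 4-Partition} is in $\NPS$ and that some strongly $\NPS$-complete problem reduces to it by an SE reduction. Membership is immediate. A verifier can check in polynomial time that a given $S \subseteq \U$ is reflexive, symmetric and transitive. It can then compute the classes and check that each class sums to $B$. The size bounds $B/5 < s(a_i) < B/3$ force every class to have exactly four elements, so nothing further needs checking. For hardness we use \Cref{lem:SE-transitive}. Since all reductions along the way keep numbers polynomially bounded, we start from the classical strongly $\NP$-hard chain of Garey and Johnson: {\sc 3Sat} $\to$ {\sc 3-Dimensional Matching} $\to$ {\sc 4-Partition}. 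We take as given, from \Cref{sec:ssp-reductions}, that {\sc 3-Dimensional Matching} (universe the set of triples $M$) is $\NPS$-complete. The plan is to show that the Garey--Johnson reduction from {\sc 3DM} to {\sc 4-Partition} is solution-embedding into the equivalence-relation model.

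Recall that reduction. Given $W,X,Y$ with $|W|=|X|=|Y|=q$ and triples $M$, it builds numbers in four groups:
\begin{itemize}
\item one number $u_t$ for each triple $t \in M$;
\item for each $w \in W$, numbers $w[\ell]$, one per occurrence of $w$ in some triple;
\item analogous numbers $x[\ell]$ and $y[\ell]$ for $X$ and $Y$;
\item for each element of $W$, one ``dummy'' number per surplus occurrence.
\end{itemize}
The sizes use a large base so that every $4$-set summing to $B$ has one of exactly two forms. The first form is $\{u_t, w[\ell_1], x[\ell_2], y[\ell_3]\}$ with $t$ a triple chosen into the matching, using the ``actual'' copies $w[1],x[1],y[1]$ of its three coordinates. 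The second form is $u_t$ together with ``non-actual'' copies, for $t$ not chosen. The defining choice of $f$ is the map $f(t) := (u_t, w_t[1])$, where $w_t[1]$ is the distinguished actual copy of the $W$-coordinate of $t$. Because $u_t$ determines $t$, this $f$ is injective. Under it, ``$t$ is in the matching'' corresponds to ``$u_t$ shares a class with the actual copy of its $W$-element''.

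For (P1), take a perfect matching $M'$. The standard correctness proof gives a partition of the numbers into classes summing to $B$. We take the induced equivalence relation $S'$. We must check that for $t \in M$, $(u_t, w_t[1]) \in S'$ holds exactly when $t \in M'$; this is read off the construction, since the actual copy of $w$ is placed with $u_t$ exactly for the unique chosen $t$ containing $w$. For (P2), take any solution $S'$. The arithmetic lemma of the original proof says each class is of one of the two forms. So the set $\{t : (u_t, w_t[1]) \in S'\}$ contains exactly one triple per $w$. The $X$ and $Y$ coordinate constraints then force it to be a perfect matching, which is the original correctness argument applied to an arbitrary solution rather than to existence. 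Together these give \eqref{eq:SE}, and strongness is inherited because all numbers are polynomially bounded.

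The main obstacle is (P2). Showing that every valid equivalence relation decodes to a matching means re-verifying that the number-theoretic encoding admits no ``spurious'' $4$-sets. If the encoding can pair actual copies inconsistently (for example, with $u_t$ matched to the actual $W$-copy but not to the actual $X$-copy), the map $f$ must be adjusted. In that case, the first fallback is to enlarge the image to pairs and use $f(t) := (u_t, w_t[1])$ together with the observation that the digits force the $X$ and $Y$ copies to match consistently. If that fails, we insert an intermediate {\sc Numerical 3-Dimensional Matching} step and compose via \Cref{lem:SE-transitive}.
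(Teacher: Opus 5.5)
Your proposal is correct and is essentially the paper's proof: you use the Garey--Johnson reduction from {\sc 3-Dimensional Matching} with the embedding $f(t) = (u_t, w_t[1])$, and (P2) follows because every class summing to $B$ is some $u_t$ together with either all actual or all dummy copies of the coordinates of $t$. Your fallback plans are unnecessary, since comparing the coefficients of $r^4$ in the Garey--Johnson sizes already excludes mixed actual/dummy classes; also, the dummies are simply the copies $z[\ell]$ with $\ell \geq 2$, not a separate fourth group of numbers.
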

\begin{proof}
   The proof is based on the reduction from {\sc 3-Dimensional Matching} to {\sc 4-Partition} as given in Garey and Johnson \cite{DBLP:books/fm/GareyJ79}.
   (We establish the $\NPS$-completeness of {\sc 3-Dimensional Matching} in \Cref{sec:ssp-reductions}.)

    The reduction takes an instance $I$ of 3DM $= (\I, \U, \sol)$ and creates an instance $g(I)$ of {\sc 4-Partition} $= (\I', \U', \sol')$ in which all the element sizes are bounded by a polynomial function of the total number of elements.
    In particular, $\max \{a_i: i \in \{1, ..., 4n\} \} < 2^{16}\,|A|^4$.
    
    Let $W = \{w_1,w_2,\dots,w_q\}$, $X = \{x_1,x_2,\dots,x_q\}$,
    $Y = \{y_1,y_2,\dots,y_q\}$, and let \\$M \subseteq W \times X \times Y$
    denote an arbitrary instance of 3DM.
    The corresponding instance of {\sc 4-Partition} has $4|M|$ elements, one for each occurrence of a member of $W \cup X \cup Y$, in a triple in $M$ and another element for each triple in $M$.
    
    If $z \in W \cup X \cup Y$ occurs in $N(z)$ triples of $M$, then the corresponding elements of $A$ are denoted $z[1], z[2], \dots, z[N(z)]$.
    The element $z[1]$ is referred to as the ``actual'' element corresponding to $z$.
    The elements $z[2]$ through $z[N(z)]$ are referred to as ``dummy'' elements corresponding to $z$.
    The sizes of these elements depend on which of $W,X,Y$ contain $z$, and also depend on whether the element is actual or dummy.
    The parameter $r$ used below is defined to be $32q$, and the sizes of elements of $A$ are chosen as follows:
    
    \[
        \begin{aligned}
            s(w_i[1]) &= 10^4 i + r + 1, && 1 \leq i \leq q, \\
            s(w_i[l]) &= 11r^4 + i r + 1, && 1 \leq i \leq q, \; 2 \leq l \leq N(w_i), \\
            s(x_j[1]) &= 10^4 j^2 + 2, && 1 \leq j \leq q, \\
            s(x_j[l]) &= 11r^4 + j r^2 + 2, && 1 \leq j \leq q, \; 2 \leq l \leq N(x_j), \\
            s(y_k[1]) &= 10^4 + k r^3 + 4, && 1 \leq k \leq q, \\
            s(y_k[l]) &= 8 r^4 + k r^3 + 4, && 1 \leq k \leq q, \; 2 \leq l \leq N(y_k).
        \end{aligned}
    \]
    
    If $m_{\ell} = (w_i, x_j, y_k) \in M$, then the size of $m'_{\ell} \in A$ is 
    $s(m'_{\ell}) = 10^4 - k r^3 - j r^2 - i r + 8$.
    Finally, $B = 40 r^4 + 15$.
       
    Suppose that $m'_{\ell} \in A$ corresponds to  $m_{\ell} = (w_i, x_j, y_k) \in M$.   Garey and Johnson's proof establishes that there are only two ways for the element $m'_{\ell}$ to be in a subset $T \subseteq A$ that sums to $B$. Either $T= \{m'_{\ell}, w_i[1], x_j[1], y_k[1]\}$, or else T contains $m'_{\ell}$ and ``dummy'' elements corresponding to $w_i, x_j$ and $y_k$.    
    
    To see that the reduction is solution-embedding, for each $\ell = 1$ to $q$, we let $f(m_{\ell}) = \langle m_{\ell}, w_i[1] \rangle$.
    If $M'$ is a three dimensional matching, then $f(M')$ can be extended to a 4-partition.
    Similarly, if $S' \subseteq \U'(g(I))$ is a feasible equivalence relation, then $f^{-1}(S' \cap f(\U(I)))$ is a three dimensional matching.
\end{proof}

The final problem in this subsection is {\sc Bin Packing}.
We describe the corresponding equivalence relation model.

\begin{samepage}
    \begin{mdframed}
    	\begin{description}
            \item[]\textsc{Bin Packing (Equivalence Relation Model)} \hfill\\
            \textbf{Instances $\I$:} \quad  Set $A$; integer size $s(a)$ for $a \in A$, a positive integer bin capacity $B$, and an integer $k$. \\
            \textbf{Universe $\U$:} \quad $\{(a, a'): a, a' \in A$ and $a \neq a'\}$.\\
            \textbf{Solutions $\sol$:}  \quad  The set of all equivalence relations $S \{(a, a'): a, a' \in A$ and $a \neq a'\}$ such that for each class $T$ of $S$, $s(T) \le B$ and at most $k$ equivalence classes exist.
        \end{description}
    \end{mdframed}
\end{samepage}

\begin{theorem}
    {\sc Bin Packing} is strongly $\NPS$-complete.
\end{theorem}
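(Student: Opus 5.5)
We prove the statement by exhibiting an SE reduction from the equivalence relation model of {\sc 4-Partition}, which we have just shown to be strongly $\NPS$-complete, to the equivalence relation model of {\sc Bin Packing}. Strong $\NPS$-completeness then follows from transitivity (\Cref{lem:SE-transitive}) together with the observation that the reduction keeps all numbers polynomially bounded. Membership in $\NPS$ is immediate. Given a subset $S$ of pairs, a verifier can check reflexivity, symmetry and transitivity in polynomial time. It can then compute the classes, count them, and sum the sizes in each class.

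The reduction $g$ is the identity on the numerical data. Given an instance $(A, B, s)$ of {\sc 4-Partition} with $|A| = 4n$ and $B/5 < s(a) < B/3$, we output the {\sc Bin Packing} instance with the same set $A$, the same sizes $s$, capacity $B$ and $k = n$. Since $\sum_{a\in A} s(a) = nB$ is forced for yes-instances (we may first check it and output a trivial no-instance otherwise), any packing into at most $n$ bins of capacity $B$ must fill every bin exactly to $B$. Hence the feasible equivalence relations of the two instances coincide as families of partitions of $A$. The size bounds further imply every class has exactly four elements, though this is not needed.

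For the embedding $f$, we map each pair $(a_i,a_j)$ with $i\neq j$ to the same pair in the {\sc Bin Packing} universe. The only discrepancy is the reflexive pairs $(a_i,a_i)$: they lie in the {\sc 4-Partition} universe but not in the {\sc Bin Packing} universe. This is the step needing care. Every solution of {\sc 4-Partition} contains all reflexive pairs, so these elements carry no information. The cleanest fix is to first pass to an equivalent $\NPS$ model of {\sc 4-Partition} whose universe omits the diagonal, treating reflexivity as implicit. This is an SE reduction in the trivial direction: the embedding is the identity on off-diagonal pairs, and the image of every solution restricted to the embedded universe is exactly its off-diagonal part. Alternatively, one can check (P1) and (P2) directly from \Cref{def:se-reduction}, restricting $f$ to non-reflexive pairs of the original 3DM-based chain. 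The SE embedding from 3DM used in the previous proof only ever maps to non-reflexive pairs $\langle m_\ell, w_i[1]\rangle$, so we may compose directly: 3DM $\leqSE$ {\sc Bin Packing}, with $f(m_\ell) = (m'_\ell, w_i[1])$.

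With that choice, (P1) and (P2) follow because the solution sets, viewed as sets of off-diagonal pairs, are identical for $g(I)$ in both models. Thus $\{S'\cap \Uembp\}$ is the same family in both, and the previous theorem supplies the SE property with respect to 3DM. The main obstacle is the bookkeeping around the reflexive pairs and the universe mismatch. The numerical content is trivial because all sizes are polynomially bounded, which gives strong completeness.
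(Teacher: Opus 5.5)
Your proposal is correct and follows the paper's one-line argument: interpret each part of the 4-partition as a bin, i.e.\ keep the elements, sizes and capacity $B$ and set $k=n$. You are also right that the paper glosses over the diagonal pairs $(a_i,a_i)$, which cannot be injected into the {\sc Bin Packing} universe. Of your two fixes, keep the second one, composing directly with the 3DM construction via $f(m_\ell)=(m'_\ell,w_i[1])$. The ``trivial direction'' you cite for the first fix actually reduces the diagonal-free model to the diagonal model, which is the wrong direction for transferring completeness, and an SE map the other way would need $f$ to be defined on the diagonal pairs as well.
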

\begin{proof}
    There is a straightforward solution-embedding reduction from the equivalence relation model of {\sc 4-Partition} by interpreting each of the four parts as a bin.
\end{proof}

\subsubsection{A universe consisting of subsets of a ground set} 
\label{subsetgen}
In some partition or coloring problems, the maximum number of elements in a part (or with the same color) may be a constant.
When the number of possible parts is polynomially bounded, we can let the universe $\U$ consist of all possible parts.
We refer to this type of representation as \emph{subset generation}.  

Subset generation is another method for breaking the symmetry inherent in some integer programs.
We illustrate with {\sc Covering by Triangles} and {\sc Partition into Triangles}.
Each instance of {\sc Covering by Triangles} consists of a graph $G = (V, E)$ and an integer $k$.
A \emph{triangle} is a set of three vertices of $V$ that are mutually adjacent.
The set $\U$ consists of all of the triangles in $G$.

The solution set $\sol$ consists of any collection of triangles that include all of the vertices of $V$ with at most $k$ triangles.
{\sc Partition into Triangles} is similarly defined, with the difference that the solution set $\sol$ consists of any collection of triangles that include all of the vertices of $V$ exactly once.

\begin{samepage}
    \begin{mdframed}
    	\begin{description}
            \item[]\textsc{Covering by Triangles (Subset Generation Model)} \hfill\\
            \textbf{Instances $\I$:} \quad Undirected graph $G = (V, E)$, integer k.\\
            \textbf{Universe $\U$:} \quad $\{ \{ w, v, x \} :  \{w, v\}, \{w, x\}, \{v, x\} \in E\}$.\\
            \textbf{Solutions $\sol$:} \quad The set of all $S \subseteq \U$ such that $\bigcup _{S_i \in S} S_i = V$ and $|S| \le k$.
        \end{description}
    \end{mdframed}
\end{samepage}

\begin{samepage}
    \begin{mdframed}
    	\begin{description}
            \item[]\textsc{Partition into Triangles (Subset Generation Model)} \hfill\\
            \textbf{Instances $\I$:} \quad Undirected graph $G = (V, E)$, and $n = |V|/3$.\\
            \textbf{Universe $\U$:} \quad $\{ \{ w, v, x \} :  \{w, v\}, \{w, x\}, \{v, x\} \in E\}$.\\
            \textbf{Solutions $\sol$:} \quad The set  of all $S \subseteq \U$ such that $|S| = n/3$, and $\bigcup _{S_i \in S} S_i = V$.
        \end{description}
    \end{mdframed}
\end{samepage}

\begin{theorem}
    The subset generation models of {\sc Covering by Triangles} and {\sc Partition into Triangles} are both $\NPS$-complete. 
\end{theorem}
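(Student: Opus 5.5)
The plan is to prove membership directly and to obtain hardness from a solution-embedding reduction out of an $\NPS$-complete problem whose solutions are already collections of triples. The natural source is \textsc{3-Dimensional Matching}, which the paper states is $\NPS$-complete (\cref{sec:ssp-reductions}). Its universe is the triple set $M$, and its solutions are perfect matchings. By \cref{lem:SE-transitive}, one SE reduction from 3DM to each of the two triangle problems suffices.

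\textbf{Membership.} For a graph $G$, the universe $\U$ consists of all triangles of $G$. There are at most $\binom{|V|}{3}$ of them, so $\U$ is polynomial in size. Given $S \subseteq \U$, one can check in polynomial time that $\bigcup S = V$ and that $|S| \le k$ (respectively, that $S$ covers each vertex exactly once). Hence both models are in $\NPS$.

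\textbf{Hardness.} I will use the classical Garey--Johnson reduction from 3DM to \textsc{Partition into Triangles}. Each triple $m_\ell = (w,x,y) \in M$ is replaced by the standard local gadget. The gadget has 9 fresh vertices $a_\ell[1..9]$, and its triangles are wired so that the gadget can be partitioned in exactly two ways:
\begin{itemize}
\item a \emph{used} way, in which $w$, $x$, $y$ each lie in a triangle with gadget vertices;
\item an \emph{unused} way, in which $w$, $x$, $y$ are not touched and all gadget vertices are covered internally.
\end{itemize}
For each $m_\ell$, I fix one designated triangle $\tau_\ell$ that appears only in the used configuration, for example the triangle containing $w$ together with two gadget vertices. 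The embedding is then $f(m_\ell) := \tau_\ell$, which is injective because gadgets are disjoint.
\begin{itemize}
\item For (P1): given a perfect matching $M'$, build a partition by using the used configuration on $M'$ and the unused one elsewhere. The designated triangles it contains are exactly $f(M')$.
\item For (P2): take any partition into triangles $S'$. The gadget analysis shows that $S'$ restricted to each gadget is one of the two configurations. So $f^{-1}(S' \cap \Uemb)$ is the set of triples whose gadget is in the used configuration, and these cover each element of $W \cup X \cup Y$ exactly once, i.e.\ they form a perfect matching.
\end{itemize}

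For \textsc{Covering by Triangles}, I apply the same instance with $k = |V|/3$. Any cover of $V$ with at most $|V|/3$ triangles must have the triangles pairwise disjoint, so it is a partition. The solution sets of the two problems therefore coincide, and the same $f$ works.

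\textbf{Main obstacle.} The delicate step is (P2): showing that no spurious triangles exist. This means checking two things. First, the only triangles in the constructed graph are the intended gadget triangles, so no triangle mixes two gadgets or uses ground elements in unintended ways. Second, each gadget admits exactly the two local partitions and no others. I would verify this by listing the edges of a single gadget and the adjacencies of the ground vertices. If the textbook gadget turns out to produce extra triangles, I would first try modifying it to make ground vertices adjacent only through gadgets; failing that, I would instead reduce from \textsc{Exact Cover by 3-Sets} with a gadget built so that the designated triangle is unique per set.
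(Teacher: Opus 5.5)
Your proposal is correct and follows essentially the paper's route: it uses the Garey--Johnson 3DM gadget with nine fresh vertices per triple, and it handles \textsc{Covering by Triangles} by reducing it to the partition case with $k=|V|/3$. Your embedding $f(m_\ell)=\{w_i,a_\ell[1],a_\ell[2]\}$ is in fact the properly typed form of the paper's map, which writes the edge $(a_\ell[1],w_i)$ even though the universe consists of triangles. The textbook gadget has exactly seven triangles, none spanning two gadgets, and exactly two local partitions, so your fallback plan is not needed.
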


\begin{proof}
    {\sc Partition into Triangles} is a special case of {\sc Covering by Triangles}.
    So it is sufficient to prove that {\sc Partition into Triangles} is $\NPS$-complete.
    Let $\Pi$ denote the subset generation model of {\sc Partition into Triangles}.
    We will prove that $\Pi$ is $\NPS$-complete based on the reduction from 3DM to $\Pi$, presented in Garey and Johnson \cite{DBLP:books/fm/GareyJ79}.
    (Actually, their transformation is from {\sc Exact Cover by 3-Sets (X3C)}; however, the transformation from 3DM is virtually the same.)
    Let $W, X$, and $Y$ denote the three sets of the 3DM instance $I$.
    Each of these three sets has $n$ distinct elements.
    Let $M$ denote the set of $m$ triples $(w_i, x_j, y_k)$ in the instance $I$.
    A solution consists of $n$ triples of $M$ that include every element of $W \cup X \cup Y$. 

    The reduction from 3DM to $\Pi$ creates an instance $I'$ of $\Pi$ in which the instance $I'$ of $\Pi$ is a graph $G = (V, E)$, where $V$ has the following properties.
    $|V| = 3n + 9m$;
    There is one vertex in $V$ for each element of $W \cup X \cup Y$, and there are 9 vertices in |V| for each triple of $M$.
    If $m_{\ell} = (w_i, x_j, y_k) \in M$, then the 9 additional vertices are labeled $a_{\ell}[1], ..., a_{\ell}[9]$.
    
    The proof establishes a one-to-one correspondence between the matchings $M' \subseteq M$ and the partitions into triangles.
    Moreover, if a matching $M'$ contains $m_{\ell} = (w_i, x_j, y_k)$ if and only if the corresponding partition into triangles contains the edge $(a_{\ell}[1], w_i)$.
    By letting $f(m_{\ell} = (a_{\ell}[1], w_i)$, one can see that the reduction from 3DM to the subset generation model of {\sc Partition into Triangles} is solution-embedding.   
\end{proof}

\subsection{2-step verification models.}  
\label{twostep}

When describing solutions (or certificates) for a problem in $\NP$, usually the solutions are ``clearly specified,'' and the verification task is fairly straightforward.
For {\sc Hamiltonian Cycle}, the solution is a Hamiltonian cycle.
For {\sc Clique}, the solution is the vertices of the clique.
(The set of edges also clearly specifies the solution.)
For {\sc Bin Packing}, a solution specifies the partition of the elements of $A$ into bins.
For {\sc Knapsack problem}, a solution is the subset of elements to be put into the knapsack.

Our final strategy models the set $\sol$ in a manner in which the solutions are only partially specified.
We refer to this modeling approach as \emph{2-step verification}.
It relies on the existence of a polynomial-time algorithm that transforms a partially specified solution $S \in \sol$ into a fully specified solution.
We illustrate on the problem of {\sc Sequencing to Minimize Tardy Tasks}.
We first present an IP-based model.

\begin{samepage}
    \begin{mdframed}
    	\begin{description}
            \item[]\textsc{Sequencing to Minimize Tardy Tasks (IP-based model)} \hfill\\
            \textbf{Instances $\I$:} \quad  Set $T$ of tasks, partial order $\prec$ on elements of $T$; for each task $t$, a length $\ell(t)$, and a deadline $d(t)$, and a positive integer $k \le T$. \\
            \textbf{Universe $\U$:} \quad $\{(t, j): \text{ for } t \in T \text{ and } j \in \{1, ..., n\} \}$ representing the assignment of tasks to the starting times.\\
            \textbf{Solutions $\sol$:}  \quad  The set of all $S \subseteq \{(t, j): \text{ for } t \in T \text{ and } j \in \{1, ..., n\} \}$, such that the tasks are scheduled to start at different times in $\{0, ..., n\}$, and $t$ starts prior to $t'$ whenever $t \prec t'$ and such that the number of tasks that complete after their deadlines is at most $k$.
        \end{description}
    \end{mdframed}
\end{samepage}

The IP-based model is not universe-covering.
Nevertheless, we conjecture that it is not $\NPS$-complete.

One can create a 2-step verification model by making use of the following observation.
There is a feasible schedule for an instance $I$ of {\sc Sequencing to Minimize Tardy Tasks} if there is a subset $S \subseteq T$ of $|T| - k$ tasks that can be scheduled to satisfy all deadlines.
One can obtain a feasible schedule for $S$ (if one exists) as follows.
First, adjust the deadlines of jobs in $S$ so that if $t, t' \in S$ and $t \prec t'$, then $d(t) \le d(t') - 1$.
Subsequent to modifying deadlines, it suffices to schedule jobs in order of their deadline.

The following is a revised model for the problem that makes use of the above observation.
Note that in the description below, the solution consists of jobs scheduled on time.

\begin{samepage}
    \begin{mdframed}
    	\begin{description}
            \item[]\textsc{Sequencing to Minimize Tardy Tasks (2-step verification model)} \hfill\\
            \textbf{Instances $\I$:} \quad  Set $T$ of $n$ tasks, for each task $t$, a length $\ell(t)$, a partial order $\prec$ on $T$, for each task $t \in T$ a deadline $d(t)$, and an integer $k \le n$.\\
            \textbf{Universe $\U$:} \quad The set of tasks $T$.\\
            \textbf{Solutions $\sol$:}  \quad The set of all $S \subseteq T$ such that it is possible to schedule all jobs in $S$ to meet their deadlines and $|S| \ge |T|-k$.
        \end{description}
    \end{mdframed}
\end{samepage}

Given a solution $S \in \sol$, one can create a feasible schedule in polynomial time. We view this polynomial time construction as the "second step" in the verification procedure.

\begin{theorem}
    The 2-step verification model of {\sc Sequencing to Minimize Tardy Tasks} is $\NPS$-complete.
\end{theorem}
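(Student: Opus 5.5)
Membership in $\NPS$ is immediate. The universe is $T$, of polynomial size. A verifier, given $S \subseteq T$, checks $|S| \ge |T|-k$ and then runs the ``second step'' described above. It restricts $\prec$ to $S$ and tightens deadlines backwards along $\prec$, replacing $d(t)$ by $\min(d(t), d(t') - \ell(t'))$ for $t \prec t'$ (the unit-length version is $d(t) \le d(t')-1$). It then schedules $S$ in earliest-deadline order consistent with $\prec$ and checks that every deadline is met. I would cite or briefly argue the standard exchange argument that this test is exact for a single machine with precedence constraints. If lengths make the exact-test claim delicate, I would restrict attention to the unit-length case, where it is classical.

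For hardness I would follow the pattern of \Cref{sec:ssp-reductions} and use \Cref{lem:SE-transitive}. I would take the known $\NP$-completeness reduction for \textsc{Sequencing to Minimize Tardy Tasks} (Garey and Johnson, from \textsc{Clique}, with unit lengths) and show it is solution-embedding from the vertex-based \textsc{Clique}, which the compendium lists as $\NPS$-complete. In that reduction, each vertex $v$ and each edge $e$ become unit tasks, with $v \prec e$ whenever $v \in e$. Vertex tasks get a large deadline, and edge tasks get deadline $J + \binom{J}{2}$. The threshold $k$ is $|E| - \binom{J}{2}$, where $J$ is the clique size. Tasks not scheduled on time are exactly the tardy ones, so a solution in the 2-step model is an on-time set of size at least $|T|-k$.

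For the embedding I would choose the map so that a vertex in the clique corresponds to a task that is early in a distinguished sense. Since every vertex task is on time in a natural solution, the map cannot simply send $v$ to its vertex task. I would therefore modify the instance, giving each vertex task $v$ a ``twin'' task $\bar v$ of unit length, with the budget arranged so that exactly $J$ of the slots before time $J$ can be used and the chosen vertices are forced to occupy them. A cleaner alternative is to choose $f(v) := $ the vertex task but swap which tasks are ``on time''. In this alternative, all non-clique vertex tasks get a tight early deadline and can be on time only if not displaced by edge tasks. I would check (P1) and (P2) for whichever construction makes the on-time set restricted to the embedded tasks equal the clique exactly.

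The main obstacle is exactly this step. In the standard reduction the solution (the on-time set) contains all vertex tasks regardless of the clique, so the embedding is not directly SE. The clique is encoded by which vertex tasks are scheduled \emph{early}, and that is invisible to an on-time-set model. I would first try a gadget in which each vertex contributes two competing tasks with deadlines tuned so that precisely one of them is on time, with ``which one'' encoding membership in the clique. I would then verify via the counting argument of the original proof that any on-time set of the required size induces a $J$-clique (P2), and that every $J$-clique extends (P1).
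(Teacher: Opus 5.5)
\textbf{There is a genuine gap in the hardness half.} Your membership argument is fine: EDF with deadlines tightened backwards along $\prec$ is exact for a single machine with precedence constraints and no release dates, even with arbitrary lengths, so you do not need the unit-length fallback. However, you never actually produce a solution-embedding reduction. The ``twin task'' gadget is left unspecified. The ``swap'' alternative is not a well-defined reduction at all, because it gives tight deadlines to the \emph{non-clique} vertex tasks, and the reduction cannot know which vertices those are. The obstacle you identified is real, but only because you chose to embed vertices. In the Garey--Johnson instance the vertex tasks are always on time and carry no information.

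\textbf{The missing idea is to embed the clique's edges rather than its vertices.} Reduce from \textsc{Clique (Edge-based)}, which the compendium obtains from \textsc{Clique (Vertex-based)}, and set $f(\{v,w\}) := vw$, the edge task. With this choice the unmodified Garey--Johnson reduction is already solution-embedding; this is exactly the paper's proof. Write $q = J(J-1)/2$.
\begin{itemize}
\item A solution $S$ needs $|S| \ge |T|-K = |V|+q$.
\item Vertex tasks (deadline $|V|+|E|$) are always on time.
\item Suppose $m$ edge tasks meet the deadline $J+q$. Their $p$ distinct endpoint tasks satisfy $p(p-1)/2 \ge m$ and must precede them, so $m+p \le J+q$.
\item This forces $m \le q$. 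Moreover, $m=q$ forces $p=J$, with all $q$ edges lying among those $J$ vertices.
\end{itemize}
Hence every solution consists of all vertex tasks plus the $q$ edge tasks of a $J$-clique, which gives (P2). For (P1), schedule the $J$ clique vertices, then their $q$ edges, then everything else.

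\textbf{One caveat needs to be handled explicitly; the paper's write-up also leaves it implicit.} (P1) requires \emph{every} source solution to extend, so the source problem's solutions must be cliques on exactly $J$ vertices. The compendium's $|S|\ge k$ formulation admits larger cliques, and their edge sets cannot all be on time. Your vertex-based plan would run into the same issue.
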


\begin{proof}
    The proof is based on a reduction from {\sc Clique (Edge-based)}, see Garey and Johnson \cite{DBLP:books/fm/GareyJ79}.
    (We establish the $\NPS$-completeness of {\sc Clique (Edge-based)} in \Cref{sec:ssp-reductions}.)
    Consider an instance $I$ of determining whether there is an edge set $S$ of size $k$ defining a clique in a graph $G = (V, E)$.
    That is, any solution $S$ consists of the $k=j(j-1)/2$ edges of a clique with $j$ vertices.
    The transformed instance $g(I)$ of the scheduling problem is the following.
    For each vertex $v \in V$, there is a task $v \in T$, with deadline $|V| + |E|$.
    We let $V'$ denote this subset of tasks.
    For each edge $\{v, w\} \in E$, there is a task $vw \in T$ with deadline $k+j$.
    We let $E'$ denote this subset of tasks.
    In addition, $v \prec vw$, and $w \prec vw$.
    Thus, $T = V' \cup E'$, and $|T| = n + m$.
    The mapping from edges to tasks is: $f(\{v, w\}) = vw$.
    
    Note that all tasks in $V'$ can be scheduled on time regardless of their starting times.
    At  most $k$ tasks in $E'$ can be scheduled on time.
    And one can schedule a set $S' \subseteq E'$ of $k$ tasks on time only if there is a $f^{-1}(S')$ is a clique of $E$ with $k$ edges.
    Thus, the reduction is solution-embedding.
\end{proof}

\section{Conclusion}

We have presented a general framework for understanding the complexity of multilevel
optimization problems derived from $\NP$.
By introducing the class \emph{$\NP$ with solutions} ($\NPS$) and formalizing solution-embedding reductions, we isolate a structural property of many classical $\NP$-completeness reductions that is crucial for lifting hardness results beyond
$\NP$.
This perspective allows us to define $\NPS$-completeness and to identify a large collection of natural problems -- ranging from satisfiability and graph problems to knapsack and routing problems -- that satisfy this stronger notion of completeness.

The central consequence of this framework is a collection of meta-theorems that systematically lift $\NP$-completeness results to higher levels of the polynomial hierarchy and to $\PSPACE$.
We apply these meta-theorems to several fundamental classes of multilevel problems, including adversarial selection games, interdiction and protection-interdiction problems, and multi-stage adjustable robust optimization.
For $\NPS$-complete problems, we show that natural two-level variants -- such as adversarial selection and interdiction -- are
complete for $\Sigma_2^p$, while three-level variants are complete for $\Sigma_3^p$.
More generally, we establish that $k$-stage extensions of these problems are complete for $\Sigma_k^p$, and that allowing the number of stages to grow with the input size leads to $\PSPACE$-completeness.

These results subsume nearly all previously known completeness results for multilevel optimization problems derived from $\NP$ and yield many new ones simultaneously.
In contrast to earlier work, which typically establishes hardness for individual problems via problem-specific reductions, our framework provides a unified explanation for the complexity of entire families of multilevel problems.
In particular, adversarial selection, interdiction, and adjustable robust optimization emerge as closely related manifestations of the same underlying phenomenon:
the interaction between solution structure and sequential decision making.

Beyond providing new completeness results, our work helps explain why multilevel extensions of $\NP$-complete problems are often significantly harder than their single-level counterparts.
The explicit treatment of solutions in $\NPS$ clarifies that different representations of solutions -- even for the same underlying $\NP$ decision problem -- can lead to fundamentally different multilevel problems with different complexities.
This observation highlights the importance of solution structure in complexity theory and suggests that $\NP$-hardness alone is
often insufficient for understanding the true difficulty of multilevel optimization problems.

Several directions for future research remain.
While $\NPS$ captures a broad class of combinatorial problems, it would be interesting to further refine this framework or to identify alternative abstractions that characterize additional forms of multilevel or dynamic optimization.
Moreover, our results are primarily complexity-theoretic; understanding how these hardness results inform algorithm design, approximation guarantees, or parameterized
approaches for adversarial selection, interdiction, and adjustable robust optimization remains an important open direction.
More broadly, the close connection between solution embedding, alternation, and space-bounded computation suggests that similar structural techniques may apply to other settings in which problems are lifted from $\NP$ to richer computational models.

Overall, our results demonstrate that high computational complexity is not an exception but a generic feature of multilevel extensions of $\NP$-complete problems.
By unifying adversarial selection, interdiction, and multi-stage adjustable robust optimization within a single complexity-theoretic framework, we hope this work contributes to a deeper structural understanding of optimization problems in the polynomial hierarchy and beyond.

\appendix
\newpage
\section{A Compendium of NP-S-complete Problems}
\label{sec:ssp-reductions}

In this section, we present a multitude of $\NPS$-complete problems.
For each problem, we provide the definitions of the set $\I$ of instances, the universe $\U(I)$, and the solution set $\sol(I)$ associated to each of the instances $I \in \I$.
Additionally, we state the solution-embedding reduction together with a reference.
For some problems, we provide additional comments directly below the definition, in which we describe modifications of the original reduction.
All of the presented and referenced reductions in this section are already known and can be found in the literature.
In \Cref{fig:reductions}, the tree of all presented reductions can be found, beginning at \textsc{Satisfiability (Variable-based)}.
We heavily rely on the transitivity of SE reductions (\cref{lem:SE-transitive}).

We remark that the website \emph{reductions.network} \cite{DBLP:journals/corr/abs-2511-04308} collects solutions embedding reductions for problems in $\NPS$, which also includes results by Pfaue \cite{DBLP:journals/corr/abs-2411-05796} and Bartlett \cite{DBLP:journals/corr/abs-2506-12255}.

With the tree of SE reductions shown in \Cref{fig:reductions}, we derive the following theorem.
\begin{samepage}
    \begin{theorem}
        The following problems are $\NPS$-complete:
        \textsc{Satisfiability (Literal-based)},
        \textsc{Satisfiability (Variable-based)},
        \textsc{3-Satis\-fiability (Literal-based)},
        \textsc{3-Satis\-fiability (Variable-based)},
        \textsc{Vertex Cover},
        \textsc{Dominating Set},
        \textsc{Set Cover},
        \textsc{Hitting Set},
        \textsc{Feedback Vertex Set},
        \textsc{Feedback Arc Set},
        \textsc{Uncapacitated Facility Location},
        \textsc{p-Center},
        \textsc{p-Median},
        \textsc{3-Dimensional Matching},
        \textsc{Independent Set},
        \textsc{Clique (Vertex-based)},
        \textsc{Clique (Edge-based)},
        \textsc{Subset Sum},
        \textsc{Knapsack},
        \textsc{Partition},
        \textsc{Scheduling},
        \textsc{Directed Hamiltonian Path},
        \textsc{Directed Hamiltonian Cycle},
        \textsc{Undirected Hamiltonian Cycle},
        \textsc{Traveling Salesman Problem},
        \textsc{Two Directed Vertex Disjoint Path},
        \textsc{$k$-Vertex Directed Disjoint Path},
        \textsc{Steiner Tree}
    \end{theorem}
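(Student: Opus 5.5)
The theorem will follow from three ingredients already in place: membership of every listed problem in $\NPS$, the $\NPS$-completeness of \textsc{Sat-V} (\Cref{thm:cook-levin}), and the transitivity of SE reductions (\Cref{lem:SE-transitive}). Membership is routine in every case. For each problem, the verifier checks in polynomial time whether a given subset of the stated universe is a solution. The universes (vertices, edges, literals, triples, items) are all polynomial in the input size. Hardness then only requires exhibiting, for each listed problem $\Pi$, one SE reduction $\Pi' \leqSE \Pi$ from some problem $\Pi'$ already known to be $\NPS$-complete. Transitivity chains these back to \textsc{Sat-V}. So the proof amounts to building the reduction tree of \Cref{fig:reductions} rooted at \textsc{Sat-V} and verifying each edge.

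The first layer consists of the satisfiability variants. \textsc{Sat-L} follows from \Cref{satvandlequiv}. For \textsc{3Sat-V} and \textsc{3Sat-L}, I use the standard clause-splitting reduction: a long clause is replaced by a chain of 3-clauses with fresh auxiliary variables. The embedding is the identity on the original variables or literals. Every satisfying assignment of the original formula extends to the auxiliary variables, which gives (P1). Every satisfying assignment of the split formula restricts to a satisfying assignment of the original, which gives (P2).

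From \textsc{3Sat} I take the Garey--Johnson reductions. For \textsc{Vertex Cover}, the paper already verified (P1) and (P2) with $f(\ell_i)=v_{\ell_i}$. \textsc{Independent Set} and \textsc{Clique (Vertex-based)} follow by complementation: the identity on vertices, plus the complement graph for \textsc{Clique}. \textsc{Clique (Edge-based)} I would get from 3Sat by mapping each literal to an edge inside its variable gadget. Alternatively, I would reduce from the vertex version by attaching a pendant clique structure so that vertices correspond to edges. For \textsc{Set Cover}, \textsc{Hitting Set}, \textsc{Dominating Set} and \textsc{Feedback Vertex Set}, I reduce from \textsc{Vertex Cover} using the usual gadgets: edges become sets, or each edge gets a subdivided triangle or a 2-cycle. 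The original vertices stay embedded, and I argue that any optimal solution can be assumed to use only original vertices. \textsc{Feedback Arc Set}, \textsc{Facility Location}, $p$-Center and $p$-Median then reduce from \textsc{Dominating Set} or \textsc{Set Cover}, with facilities embedding vertices. \textsc{3DM} comes from 3Sat via the truth-setting gadget, embedding the literal $\ell$ as the triple that sets $\ell$ true. \textsc{Subset Sum}, \textsc{Knapsack}, \textsc{Partition} and \textsc{Scheduling} come from 3Sat via the digit construction, where literal numbers embed literals. For \textsc{Partition} the solution is taken in the pinned form, so that the symmetry obstruction discussed for \textsc{Partition 1} is avoided. The Hamiltonian path and cycle problems, TSP, and the disjoint path problems come from 3Sat or \textsc{Vertex Cover} via the standard gadgets, with each literal mapped to a designated edge whose traversal direction encodes its truth value. \textsc{Steiner Tree} comes from \textsc{Set Cover} or \textsc{Vertex Cover} by mapping a vertex to its star edge.

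The main obstacle is (P2) for the gadget-heavy reductions. In the Hamiltonian cycle, disjoint path and Steiner tree constructions, one must show that every solution of the target instance restricts to a valid source solution on the embedded elements, not merely that some solution exists. So for each such reduction I would reread the original correctness proof to confirm that it actually characterizes all solutions. Typical examples are "every Hamiltonian cycle traverses each variable gadget in one of two ways" and "every size-$t$ cover picks exactly one vertex per edge gadget". Where the original proof only argues existence, I would choose the embedded element so that it is forced: for instance, an edge that every solution uses precisely when the literal is true. For numeric problems, the concern is instead that the weights be set tightly enough that solutions cannot mix literal digits. I expect these forced-structure arguments to carry most of the work; the rest is bookkeeping.
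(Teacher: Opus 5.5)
The overall architecture is the paper's: the adapted Cook--Levin theorem, transitivity, and a tree of SE reductions rooted at \textsc{Sat}. The theorem's content, however, lies in the individual edges of that tree, and several edges you commit to are not solution-embedding.

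\textbf{Independent Set.} You obtain \textsc{Independent Set} from \textsc{Vertex Cover} ``by complementation, identity on vertices''. With $f$ the identity, (P1)/(P2) would require $\{S : S \text{ vertex cover}, |S|\le k\}=\{S' : S' \text{ independent}, |S'|\ge |V|-k\}$. This already fails for the path $a$--$b$--$c$ with $k=1$, where the left side is $\{b\}$ and the right side is $\{a,c\}$. An SE map cannot complement solutions; that is only realizable for literal-based universes, via $\ell\mapsto\overline\ell$. The paper instead reduces \textsc{3Sat-L} directly to \textsc{Independent Set}: each clause-triangle vertex is joined to the \emph{opposite} literal vertex, the threshold is $|L|/2+|C|$, and $f(\ell)=v_\ell$. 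Equivalently, one can use your vertex-cover graph with the swapped embedding $\ell\mapsto v_{\overline\ell}$. As written, your whole \textsc{Clique} branch rests on the broken step, and your two sketches for \textsc{Clique (Edge-based)} are not yet constructions.

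\textbf{``Optimal solutions can be assumed'' is not enough.} Saying that any optimal solution can be assumed to use only original vertices is an exchange argument. But (P2) quantifies over \emph{every} set in $\sol'(g(I))$. With the usual \textsc{Dominating Set} gadget (one vertex $vw$ per edge), the single edge $\{v,w\}$ with $k=1$ admits the dominating set $\{vw\}$, whose restriction to $V$ is empty. The paper flags exactly this and adds $k+1$ copies $vw_1,\dots,vw_{k+1}$ per edge, so every dominating set of size at most $k$ must contain an endpoint of each edge. The same issue drives the paper's \textsc{Feedback Arc Set} reduction from \textsc{Vertex Cover}. There $v$ becomes the arc $(v_0,v_1)$, and each edge direction gets $|V|+1$ parallel subdivided paths, so cutting connector arcs never substitutes for cutting vertex arcs. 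Your ``from \textsc{Dominating Set} or \textsc{Set Cover}'' for this problem is unspecified.

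\textbf{Steiner Tree.} Your embedding ``vertex $\mapsto$ star edge'' fails with unit weights, because a Steiner vertex can hang off a terminal instead of its star edge. Take the path $a$--$b$--$c$ with $e_1=\{a,b\}$, $e_2=\{b,c\}$ and $k=2$. The tree $\{ra, ae_1, e_1b, be_2\}$ has cost $|E|+k$ but contains only the star edge $ra$, and $\{a\}$ is not a vertex cover. You need expensive vertex--terminal edges, or the paper's diamond-chain reduction from \textsc{3Sat-L} with $f(\ell)=\{v_{i-1},\ell\}$.

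\textbf{Undirected Hamiltonian Cycle.} The edge set does not record traversal direction, so $\ell$ and $\overline\ell$ must be sent to distinct edges. The paper achieves this by going through the directed versions and Karp's vertex splitting.

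Your final paragraph names the right requirement, but the reductions you actually commit to do not meet it. The remaining edges are fine: \textsc{Set Cover}, \textsc{Hitting Set}, \textsc{Feedback Vertex Set} via 2-cycles, facility location from \textsc{Set Cover}, \textsc{3DM}, the digit constructions, and directed Hamiltonicity.
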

\end{samepage}

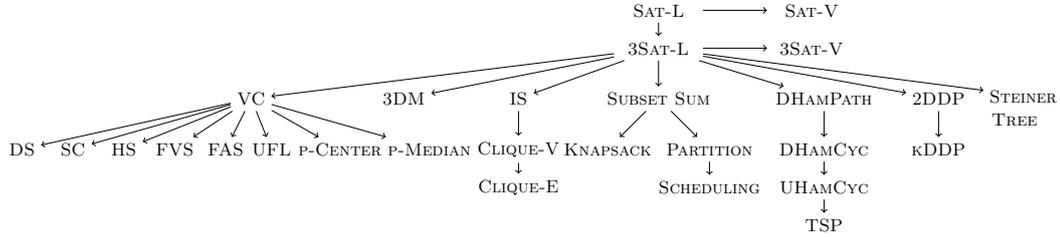
\begin{figure}[!ht]
	\centering
	\scalebox{0.67}{
	\begin{tikzpicture}[scale=1]
		\node[text width=1.5cm,align=center](sat) at (0, 0.75) {\textsc{Sat-L}};
		\node[text width=1.5cm,align=center](satV) at (3, 0.75) {\textsc{Sat-V}};
		\node[text width=1.5cm,align=center](0) at (0, 0) {\textsc{3Sat-L}};
		\node[text width=1.5cm,align=center](0V) at (3, 0) {\textsc{3Sat-V}};
        \path[->] (sat) edge (satV);
        \path[->] (0) edge (0V);
		\node[](1) at (-8, -1) {\textsc{VC}};
        \node[](01) at (-5, -1) {\textsc{3DM}};
		\node[](2) at (-2.75, -1) {\textsc{IS}};
		\node[](3) at (0, -1) {\textsc{Subset Sum}};
		\node[](4) at (6.5, -0.85) {}; \node[text width=1cm,align=center](4-1) at (7, -1.19) {\textsc{Steiner Tree}};
		\node[](5) at (3.25, -1) {\textsc{DHamPath}};
		\node[](6) at (5.5, -1) {\textsc{2DDP}};
		\node[](11) at (-12.5, -2) {\textsc{DS}};
		\node[](12) at (-11.5, -2) {\textsc{SC}};
		\node[](13) at (-10.5, -2) {\textsc{HS}};
		\node[](14) at (-9.5, -2) {\textsc{FVS}};
		\node[](15) at (-8.5, -2) {\textsc{FAS}};
		\node[](10) at (-7.6, -2) {\textsc{UFL}};
		\node[](16) at (-6.25, -2) {\textsc{p-Center}};
		\node[](17) at (-4.5, -2) {\textsc{p-Median}};
		\node[](21) at (-2.75, -2) {\textsc{Clique-V}};
		\node[](31) at (-1, -2) {\textsc{Knapsack}};
		\node[](32) at (1, -2) {\textsc{Partition}};
		\node[](51) at (3.25, -2) {\textsc{DHamCyc}};
		\node[](61) at (5.5, -2) {\textsc{kDDP}};
		\node[](211) at (-2.75, -2.75) {\textsc{Clique-E}};
		\node[](321) at (1, -2.75) {\textsc{Scheduling}};
		\node[](511) at (3.25, -2.75) {\textsc{UHamCyc}};
        \node[](5111) at (3.25, -3.5) {\textsc{TSP}};
		\path[->] (sat) edge (0);
		\path[->] (0) edge (1);
		\path[->] (0) edge (01);
		\path[->] (0) edge (2);
		\path[->] (0) edge (3);
		\path[->] (0) edge (4);
		\path[->] (0) edge (5);
		\path[->] (0) edge (6);
		\path[->] (1) edge (10);
		\path[->] (1) edge (11);
		\path[->] (1) edge (12);
		\path[->] (1) edge (13);
		\path[->] (1) edge (14);
		\path[->] (1) edge (15);
		\path[->] (1) edge (16);
		\path[->] (1) edge (17);
		\path[->] (2) edge (21);
		\path[->] (21) edge (211);
		\path[->] (3) edge (31);
		\path[->] (3) edge (32);
		\path[->] (32) edge (321);
		\path[->] (5) edge (51);
		\path[->] (51) edge (511);
		\path[->] (511) edge (5111);
		\path[->] (6) edge (61);
	\end{tikzpicture}
	}
	\caption{The tree of SSP reductions for all considered problems.}
	\label{fig:reductions}
\end{figure}

\begin{samepage}
    \begin{mdframed}
    	\begin{description}
        \item[]\textsc{Satisfiability (Literal-based)} \hfill\\
        \textbf{Instances $\I$:} \quad Literal Set $L = \fromto{\ell_1}{\ell_n} \cup \fromto{\overline \ell_1}{\overline \ell_n}$, Clauses $C \subseteq \powerset{L}$.\\
        \textbf{Universe $\U$:} \quad Literal set $L$.\\
        \textbf{Solutions $\sol$:} \quad All sets $L' \subseteq L$ such that for all $i \in \fromto{1}{n}$ we have $|L' \cap \set{\ell_i, \overline \ell_i}| = 1$, and such that $|L' \cap c_j| \geq 1$ for all $c_j \in C$, $j \in \fromto{1}{|C|}$.   	
        \end{description}
        \hfill\\
        Reference.\quad  Cook.  \cite{DBLP:conf/stoc/Cook71} 
    \end{mdframed}
\end{samepage}

\begin{samepage}
    \begin{mdframed}
    	\begin{description}
        \item[]\textsc{Satisfiability (Variable-based)} \hfill\\
        \textbf{Instances $\I$:} \quad Variables $X = \fromto{x_1}{x_n}$, Clauses $C = \fromto{C_1}{C_m}$.\\
        \textbf{Universe $\U$:} \quad Variable set $X = \fromto{x_1}{x_n}$.\\
        \textbf{Solutions $\sol$:} \quad $S \subseteq X$ such that $S$ is a truth assignment that satisfies all of the clauses of $C$, and $S$ is the set of variables of $X$ that are true.   	
        \end{description}
        \hfill\\
        Reference.\quad  Cook.  \cite{DBLP:conf/stoc/Cook71} 
    \end{mdframed}
\end{samepage}
For both \textsc{Satisfiability (Literal-based)} and \textsc{Satisfiability (Variable-based)}, Cook's result on the $\NP$-completeness for \textsc{Satisfiability} also shows their $\NPS$-completeness.
The existence of a reduction between \textsc{Satisfiability (Literal-based)} and \textsc{Satisfiability (Variable-based)} and vice versa is provided by \Cref{satvandlequiv}.  

\begin{samepage}
    \begin{mdframed}
    	\begin{description}
            \item[]\textsc{3-Satisfiability (Literal-based)} \hfill\\
            \textbf{Instances $\I$:} \quad Literal Set $L = \fromto{\ell_1}{\ell_n} \cup \fromto{\overline \ell_1}{\overline \ell_n}$, Clauses $C \subseteq L^3$.\\
            \textbf{Universe $\U$:} \quad Literal set $L$.\\
            \textbf{Solutions $\sol$:} \quad All sets $L' \subseteq \U$ such that for all $i \in \fromto{1}{n}$ we have $|L' \cap \set{\ell_i, \overline \ell_i}| = 1$, and such that $|L' \cap c_j| \geq 1$ for all $c_j \in C$, $j \in \fromto{1}{|C|}$.
        \end{description}
        \hfill\\
        SE Reduction: \quad From \textsc{Satisfiability (Literal-based)}\\
        Reference.\quad Karp, \cite{DBLP:conf/coco/Karp72}
    \end{mdframed}
\end{samepage}

Analogous to \textsc{Satisfiability}, there is a variant of \textsc{3-Satisfiability (Variable-based)} referred to as \textsc{3-Satisfiability (Literal-based)} in which the universe consists of the literals.
Analogous to \Cref{satvandlequiv} for \textsc{Satisfiability}, \textsc{3Satisfiability (Literal-based)} and \textsc{3Satisfiability (Variable-based)} can be reduced to each other.

\begin{samepage}
    \begin{mdframed}
    	\begin{description}
            \item[]\textsc{3-Satisfiability (Variable-based)} \hfill\\
            \textbf{Instances $\I$:} \quad Variables $X = \fromto{x_1}{x_n}$, Clauses $C = \fromto{C_1}{C_m}$ of size 3.\\
            \textbf{Universe $\U$:} \quad Variable set $X$.\\
            \textbf{Solutions $\sol$:} \quad The subsets of $X$ that correspond to satisfying truth assignments.
        \end{description}
        \hfill\\
        SE Reduction: \quad From \textsc{Satisfiability (Variable-based)}\\
        Reference.\quad Karp, \cite{DBLP:conf/coco/Karp72}
    \end{mdframed}
\end{samepage}

\begin{samepage}
    \begin{mdframed}
    	\begin{description}
            \item[]\textsc{Vertex Cover} \hfill\\
            \textbf{Instances $\I$:} \quad Graph $G = (V, E)$, number $k \in \N$.\\
            \textbf{Universe $\U$:} \quad Vertex set $V$.\\
            \textbf{Solutions $\sol$:} \quad The set of all $S \subseteq V$ such that $|S \cap e| \geq 1$ for all $e \in E$ and $|S| \leq k$.
    	\end{description}
        \hfill\\
        SE Reduction.\quad From \textsc{3-Satisfiability (Literal-based)}.\\
        Reference.\quad  Garey and Johnson  \cite{DBLP:books/fm/GareyJ79}
    \end{mdframed}
\end{samepage}

\begin{samepage}
    \begin{mdframed}
    	\begin{description}
            \item[]\textsc{Dominating Set} \hfill\\
            \textbf{Instances $\I$:} \quad Graph $G = (V, E)$, number $k \in \N$.\\
            \textbf{Universe $\U$:} \quad Vertex set $V$.\\
            \textbf{Solutions $\sol$:} \quad The set of all $S \subseteq V$ such that $|S \cap N[v]| \neq \emptyset$ for all $v \in V$ and $|S| \leq k$.    	
        \end{description}
        \hfill\\
        SE Reduction: \quad From \textsc{Vertex Cover}.\\
        Reference.\quad  Folklore  
    \end{mdframed}
\end{samepage}

In the folklore reduction from \textsc{Vertex Cover} to \textsc{Dominating Set}, for each arc $(v, w) \in E$, the transformed problem has an additional vertex $vw$ and additional edges $(v, vw)$ and $(w, vw)$.
Any vertex cover $S \subseteq V$ is also a dominating set of the transformed graph $G' = (V', E')$.
However, this reduction is not solution-embedding because there may be a dominating set $S' \subseteq V'$ with $|S'| \le k$ and such that $vw \in S'$.
The transformation can be made solution-embedding as follows:
For each each $(u, v) \in E$, add $k+1$ vertices $vw_1, vw_2, \dots vw_{k+1}$, each of which is adjacent to $v$ and $w$. 

\begin{samepage}
    \begin{mdframed}
    	\begin{description}
            \item[]\textsc{Set Cover}\hfill\\
            \textbf{Instances $\I$:} \quad Sets $S_i \subseteq \fromto{1}{m}$ for $i \in \fromto{1}{n}$, number $k \in \N$.\\
            \textbf{Universe $\U$:} \quad $\{S_1, \dots, S_n\}$.\\
            \textbf{Solutions $\sol$:} \quad The set of all $S \subseteq \{S_1, \dots, S_n\}$ such that $\bigcup_{s \in S} s = \{1, \ldots, m\}$ and $|S| \le k$.
        \end{description}
        \hfill\\
        SE Reduction: \quad From \textsc{Vertex Cover}.\\
        Reference.\quad  Karp \cite{DBLP:conf/coco/Karp72}.
    \end{mdframed}
\end{samepage}

\begin{samepage}
    \begin{mdframed}
    	\begin{description}
            \item[]\textsc{Hitting Set}\hfill\\
            \textbf{Instances $\I$:} \quad Sets $S_j \subseteq \fromto{1}{n}$ for $j \in \fromto{1}{m}$, number $k \in \N$.\\
            \textbf{Universe $\U$:} \quad $\{1, \dots, n\}$.\\
            \textbf{Solutions $\sol$:} \quad The set of all $H \subseteq \{1, \ldots, n\}$ such that $H \cap S_j \neq \emptyset$ for all $j \in \fromto{1}{m}$ and $|H| \leq k$.
        \end{description}
        \hfill\\
        SE Reduction: \quad From \textsc{Vertex Cover}.\\
        Reference.\quad  Karp \cite{DBLP:conf/coco/Karp72}.
    \end{mdframed}
\end{samepage}

\begin{samepage}
    \begin{mdframed}
    	\begin{description}
            \item[]\textsc{Feedback Vertex Set} \hfill\\
            \textbf{Instances $\I$:} \quad Directed Graph $G = (V, A)$, number $k \in \N$.\\
            \textbf{Universe $\U$:} \quad Vertex set $V$.\\
            \textbf{Solutions $\sol$:} \quad The set of all $S \subseteq V$ such that $G[V \setminus S]$ is acyclic and $|S| \leq k$.
        \end{description}
        \hfill\\
        SE Reduction: \quad From \textsc{Vertex Cover}.\\
        Reference.\quad  Karp \cite{DBLP:conf/coco/Karp72}.
    \end{mdframed}
\end{samepage}

\begin{samepage}
    \begin{mdframed}
    	\begin{description}
            \item[]\textsc{Feedback Arc Set}\hfill\\
            \textbf{Instances $\I$:} \quad Directed Graph $G = (V, A)$, number $k \in \N$.\\
            \textbf{Universe $\U$:} \quad Arc set $A$.\\
            \textbf{Solutions $\sol$:} \quad The set of all $S \subseteq A$ such that the arc set $G' = (V, A \setminus S)$ is acyclic and $|S| \leq k$.
        \end{description}
        \hfill\\
        SE Reduction: \quad From \textsc{Vertex Cover}.\\
        Reference.\quad  Karp \cite{DBLP:conf/coco/Karp72}.
    \end{mdframed}
\end{samepage}
A modification of the reduction by Karp \cite{DBLP:conf/coco/Karp72} from \textsc{Vertex Cover} to \textsc{Feedback Arc Set} is an SSP reduction.
Let $I = (G,k) = ((V, E),k)$ be the \textsc{Vertex Cover} instance and $(G',k') = ((V', A'),k')$ the \textsc{Feedback Arc Set} instance.
First of all, we transform the vertices $v \in V$ to two vertices $v_0, v_1 \in V'$.
We define the injective embedding function $f_I$ by mapping each vertex $v \in V$ to the arc $(v_0, v_1) \in A'$, which is also the corresponding element in all solutions, that is $f_I(v) = (v_0, v_1)$.
At last, we transform each edge $\{v, w\} \in E$ to $|V + 1|$ once subdivided arcs from $v_1$ to $w_0$ and to $|V + 1|$ once subdivided arcs from $w_1$ to  $v_0$. Finally, we leave the parameter $k = k'$ unchanged.
By deleting the arc $(v_0, v_1)$, which corresponds to vertex $v$ in $G$ all of these induced cycles are disconnected.
This implies that every vertex cover of $G$ is translated to a feedback arc set by the function $f_I$.

\begin{samepage}
    \begin{mdframed}
    	\begin{description}
            \item[]\textsc{Uncapacitated Facility Location} \quad \textsc{(UFL)} \hfill\\
            \textbf{Instances $\I$:} \quad Set of potential facilities $F = \fromto{1}{n}$, set of clients $C = \fromto{1}{m}$, fixed cost of opening facility function $f: F \rightarrow \Z$, service cost function $c: F \times C \rightarrow \Z$, cost threshold $k \in \Z$.\\
            \textbf{Universe $\U$:} \quad Potential facilities $F$.\\
            \textbf{Solutions $\sol$:} \quad The set of of all $F' \subseteq F$ s.t. $\sum_{i \in F'} f(i) + \sum_{j \in C} \min_{i \in F'} c(i, j) \leq k$.
        \end{description}
        \hfill\\
        SE Reduction: \quad From \textsc{Vertex Cover}.\\
        Reference.\quad  Cornuéjols, Nemhauser, and Wolsey \cite{cornuejols1983uncapicitated} 
    \end{mdframed}
\end{samepage}

\begin{samepage}
    \begin{mdframed}
    	\begin{description}
            \item[]\textsc{p-Center}  \hfill\\
            \textbf{Instances $\I$:} \quad Set of potential facilities $F = \fromto{1}{n}$, set of clients $C = \fromto{1}{m}$, service cost function $c: F \times C \rightarrow \Z$, facility threshold $p \in \N$, cost threshold $k \in \Z$.\\
            \textbf{Universe $\U$:} \quad Potential facilities $F$.\\
            \textbf{Solutions $\sol$:} \quad The set of all $F' \subseteq F$ s.t. $|F'| \leq p$ and $\max_{j \in C} \min_{i \in F'} c(i, j) \leq k$.
        \end{description}
        \hfill\\
        SE Reduction: \quad From \textsc{Vertex Cover}.\\
        Reference.\quad  Cornuéjols, Nemhauser, and Wolsey \cite{cornuejols1983uncapicitated}.
    \end{mdframed}
\end{samepage}

A modified version of the reduction by Cornuéjols, Nemhauser, and Wolsey \cite{cornuejols1983uncapicitated} from \textsc{Vertex Cover} to \textsc{Uncapacitated Facility Location} is an SSP reduction.
Let $I = ((V,E), k)$ be the \textsc{Vertex Cover} instance and $(F, C, c, p, k')$ be the \textsc{p-Center} instance.
We map each $v \in V$ to $v \in F$ and each $e \in E$ to $e \in C$.
Further, we define $c(v, e) = 0$ if $v \in e$ and $c(v, e) = |V|+1$ otherwise.
At last, we set $p$ equal to the size $k$ of the vertex cover and $k' = 0$.
Note that this implies that in a solution the objective has to be $0$.

\begin{samepage}
    \begin{mdframed}
    	\begin{description}
            \item[]\textsc{\textsc{p-Median}}  \hfill\\
            \textbf{Instances $\I$:} \quad Set of potential facilities $F = \fromto{1}{n}$, set of clients $C = \fromto{1}{m}$, service cost function $c: F \times C \rightarrow \Z$, facility threshold $p \in \N$, cost threshold $k \in \Z$.\\
            \textbf{Universe $\U$:} \quad Potential facilities $F$.\\
            \textbf{Solutions $\sol$:} \quad The set of all $F' \subseteq F$ s.t. $|F'| \leq p$ and $\sum_{j \in C} \min_{i \in F'} c(i, j) \leq k$.
        \end{description}
        \hfill\\
        SE Reduction: \quad From \textsc{Vertex Cover}.\\
        Reference.\quad  Cornuéjols, Nemhauser, and Wolsey \cite{cornuejols1983uncapicitated}.
    \end{mdframed}
\end{samepage}

A modified version of the reduction by Cornuéjols, Nemhauser, and Wolsey \cite{cornuejols1983uncapicitated} from \textsc{Vertex Cover} to \textsc{Uncapacitated Facility Location} is an SSP reduction.
It is the same as for \textsc{p-Center}.
Let $I = ((V,E), k)$ be the \textsc{Vertex Cover} instance and $(F, C, c, p, k')$ be the \textsc{p-Median} instance.
We map each $v \in V$ to $v \in F$ and each $e \in E$ to $e \in C$.
Further, we define $c(v, e) = 0$ if $v \in e$ and $c(v, e) = |V|+1$ otherwise.
At last, we set $p = k$ equals to the size of the vertex cover and $k' = 0$.

\begin{samepage}
    \begin{mdframed}
    	\begin{description}
            \item[]\textsc{3-Dimensional Matching} \hfill\\
            \textbf{Instances $\I$:} \quad Sets $W,X,Y \in \N^q$, and set of triples $M \subseteq W \times X \times Y$.\\
            \textbf{Universe $\U$:} \quad The set of triples $M$\\
            \textbf{Solutions $\sol$:} \quad The set of all $S \subseteq M$ such that $|S| = q$.
    	\end{description}
        \hfill\\
        SE Reduction.\quad From \textsc{3-Satisfiability (Literal-based)}.\\
        Reference.\quad  Garey and Johnson \cite{DBLP:books/fm/GareyJ79}
    \end{mdframed}
\end{samepage}

\begin{samepage}
    \begin{mdframed}
    	\begin{description}
            \item[]\textsc{\textsc{Independent Set}} \hfill\\
            \textbf{Instances $\I$:} \quad Graph $G = (V,E)$, number $k \in \N$.\\
            \textbf{Universe $\U$:} \quad Vertex set $V$.\\
            \textbf{Solutions $\sol$:} \quad The set of all $S \in \subseteq V$ such that for all $i, j \in V'$, $\{i, j\} \notin E$, and $|S| \ge k$.
        \end{description}
        \hfill\\
        SE Reduction: \quad From \textsc{3-Satisfiability (Literal-based)}.\\
        Reference.\quad  Garey and Johnson \cite{DBLP:books/fm/GareyJ79} (Modification of the reduction to \textsc{Vertex Cover})
    \end{mdframed}
\end{samepage}

Let $I = (L,C)$ be the \textsc{3Sat} instance.
We define a  corresponding \textsc{Vertex Cover} instance $((V',E'), k')$.
Every literal $\ell \in L$ is transformed to a vertex $v_\ell \in V'$ and every pair $(\ell, \overline \ell) \in L \times L$ is again transformed to an edge $\{v_\ell, v_{\overline \ell}\} \in E'$ between the corresponding literal vertices.
Every clause $c \in C$ is again transformed to a 3-clique, where each vertex $v^{c}_{\ell_{i_1}}, v^{c}_{\ell_{i_2}}, v^{c}_{\ell_{i_3}}$ represents a literal in the clause.
In contrast to the \textsc{Vertex Cover} reduction, the clause vertices are connected to the opposite literal vertex.
Finally, we define the parameter $k'$ by $k' := |L|/2 + |C|$.
The universe elements of \textsc{3-Satisfiability (Literal-based)} are injectively mapped to the literal vertices in $V'$, where $f_{I}(\ell) = v_\ell \in V'$.

\begin{samepage}
    \begin{mdframed}
    	\begin{description}
            \item[]\textsc{Clique (Vertex-based)} \hfill\\
            \textbf{Instances $\I$:} \quad Graph $G = (V, E)$, number $k \in \N$.\\
            \textbf{Universe $\U$:} \quad Vertex set $V$.\\
            \textbf{Solutions $\sol$:} \quad $S \subseteq V$ such that for all $u, v \in V'$, $\{u, v\} \in E$ and $|S| \geq k$.  
        \end{description}
        \hfill\\
        SE Reduction: \quad From \textsc{Independent Set}.\\
        Reference.\quad  Garey and Johnson \cite{DBLP:books/fm/GareyJ79}.
    \end{mdframed}
\end{samepage} 

\begin{samepage}
    \begin{mdframed}
    	\begin{description}
            \item[]\textsc{Clique (Edge-based)}\hfill\\
            \textbf{Instances $\I$:} \quad Graph $G = (V, E)$, number $k \in \N$.\\
            \textbf{Universe $\U$:} \quad Edge set $E$.\\
            \textbf{Solutions $\sol$:} \quad The set of all $S \subseteq E$ such that for all pairs $(\{u,v\}, \{u,w\}) \in S^2$, $\{v,w\} \in S$, and $|S| \geq k$.
        \end{description}
        \hfill\\
        SE Reduction: \quad From \textsc{Clique (Vertex-based)}.
    \end{mdframed}
\end{samepage} 

Given an instance of \textsc{Clique (Vertex-based)} consisting graph $G = (V, E)$ and an integer $k$, we create an instance $G' = (V \cup V', E')$ and integer $k' = 2k$ for \textsc{Clique (Edge-based)} as follows.
The vertex subset $V'$ is a copy of $V$.
For each vertex $v \in V$, there are vertices $v$ and $v' \in V'$ and an edge $(v, v') \in E'$.
For each edge $(v, w) \in G$, there are edges $(v, w), (v', w), (v, w')$, and $(v', w') \in E'$.
It is easy to see that there is a one-to-one correspondence from cliques in $G$ of size $k$ to cliques in $G'$ of size $2k$.
The reduction is solution-embedding:
For each vertex $v$ of $G$, $f(v) = (v, v') \in E'$.

\begin{samepage}
    \begin{mdframed}
    	\begin{description}
                \item[]\textsc{Subset Sum} \hfill\\
                \textbf{Instances $\I$:} \quad Numbers $\fromto{a_1}{a_n} \subseteq \N$, and target value $M \in \N$.\\
                \textbf{Universe $\U$:} \quad $\fromto{a_1}{a_n}$.\\
                \textbf{Solutions $\sol$:} \quad $S \subseteq \U$ such that $\sum_{i\in S} a_i = M$.
        \end{description}
        \hfill\\
        SE Reduction: \quad From \textsc{3-Satisfiability (Literal-based)}\\
        Reference.\quad Sipser \cite{DBLP:books/daglib/0086373}
    \end{mdframed}
\end{samepage}

\begin{samepage}
    \begin{mdframed}
    	\begin{description}
            \item[]\textsc{Knapsack} \hfill\\
            \textbf{Instances $\I$:} \quad Objects with profits and weights $\{(p_1, w_1), \dots, (w_n, p_n)\} \in \N^2$, bounds $W, P \in \N$.\\
            \textbf{Universe $\U$:} \quad The objects $\{(p_1, w_1), \dots, (p_n, w_n)\}$.\\
            \textbf{Solutions $\sol$:} \quad The set of all $S \subseteq \U$ with $\sum_{(p_i, w_i) \in S} w_i \leq W$ and $\sum_{(p_i, w_i) \in S} p_i \geq P$.
        \end{description}
        \hfill\\
        SE Reduction: \quad From \textsc{Subset Sum}.\\
        Reference.\quad  Folklore.      	
    \end{mdframed}
\end{samepage}
For the reduction, set $p_i = w_i = a_i$ and $W = P = M$.

\begin{samepage}
    \begin{mdframed}
    	\begin{description}
                \item[]\textsc{Partition} \hfill\\
                \textbf{Instances $\I$:} \quad Numbers $\fromto{a_1}{a_n} \subseteq \N$.\\
                \textbf{Universe $\U$:} \quad $\fromto{a_1}{a_n}$.\\
                \textbf{Solutions $\sol$:} \quad $S \subseteq \U$ such that $a_1 \in S$, and $\sum_{i \in S} a_i = \sum_{i \notin S} a_i$.
        \end{description}
        \hfill\\
        SE Reduction: \quad From \textsc{Subset Sum}.\\
        Reference.\quad  Folklore.
    \end{mdframed}
\end{samepage}

Note that we require the first element to be in the solution.
If this requirement is relaxed, then there cannot be a solution-embedding reduction from an $\NPS$-complete problem.
To see why, note that for every subset $S \subseteq U$ such that $\sum_{i\in S} a_i = b$, it is also true that $\sum_{i\notin S} a_i = b$, and thus $U\setminus S \in \sol$.
Thus, if we relax the requirement that $1 \in S$, there cannot be a solution-embedding reduction from an $\NPS$-complete problem.    

Let $a_1, \ldots, a_n, M$ be the {\sc Subset Sum} instance.
We define the {\sc Partition} instance $a'_1, \ldots, a'_{n+2}$ by setting $a'_1 = M+1$, $a'_2 = \sum_i a_i + 1 - M$, and $a'_{i+2} = a_{i}$ for $i \in \{1, \ldots, n\}$.
The solution-embedding function $f$ is then defined by $f(a_i) = a'_{i+2}$.

\begin{samepage}
    \begin{mdframed}
    	\begin{description}
            \item[]\textsc{Two Machine Makespan Scheduling} \hfill\\
            \textbf{Instances $\I$:} \quad Processing times $\fromto{t_1}{t_n} \subseteq \N$, threshold $T \in \N$.\\
            \textbf{Universe $\U$:} \quad Processing times $\{t_1, \ldots, t_n\}$.\\
            \textbf{Solutions $\sol$:} \quad The set of all $S \subseteq \{t_1, \ldots, t_n\}$ such that $t_1 \in S$, $\sum_{t_i \in S} t_i \leq T$, and  $\sum_{t_i \notin S} t_i \leq T$.
        \end{description}
        \hfill\\
        SE Reduction: \quad From \textsc{Partition}.\\
        Reference.\quad  Folklore. 
    \end{mdframed}
\end{samepage}

Here we break the symmetry between the two machines as we did in \textsc{Partition}.
We require that the first task is processed on the first machine. 

For this, let $I = \{a_1, \ldots, a_n\}$ be the \textsc{Partition} instance.
We transform each number $a_i$ in the \textsc{Partition} instance to a job with processing time $a_i$ in the \textsc{Two Machine Makespan Scheduling} instance and set the threshold $T = \frac{1}{2} \sum_i a_i$.

\begin{samepage}
    \begin{mdframed}
    	\begin{description}
            \item[]\textsc{Directed Hamiltonian Path} \hfill\\
            \textbf{Instances $\I$:} \quad Directed Graph $G = (V, A)$, vertices $s, t \in V$.\\
            \textbf{Universe $\U$:} \quad Arc set $A$.\\
            \textbf{Solutions $\sol$:} \quad The set of all $S \subseteq A$ such that $S$ is an $s$-$t$-path of length $|V|-1$.
        \end{description}
        \hfill\\
        SE Reduction: \quad From \textsc{3-Satisfiability (Literal-based)}.\\
        Reference.\quad  Arora and Barak \cite{DBLP:books/daglib/0023084}.
    \end{mdframed}
\end{samepage}

\begin{samepage}
    \begin{mdframed}
    	\begin{description}
            \item[]\textsc{Directed Hamiltonian Cycle} \hfill\\
            \textbf{Instances $\I$:} \quad Directed Graph $G = (V, A)$.\\
            \textbf{Universe $\U$:} \quad Arc set $A$.\\
            \textbf{Solutions $\sol$:} \quad The set of all $S \subseteq A$ such that $S$ is a cycle of length $|V|$.
        \end{description}
        \hfill\\
        SE Reduction: \quad From \textsc{Directed Hamiltonian Path}.\\
        Reference.\quad  Arora and Barak \cite{DBLP:books/daglib/0023084}.
    \end{mdframed}
\end{samepage}

\begin{samepage}
    \begin{mdframed}
    	\begin{description}
            \item[]\textsc{Undirected Hamiltonian Cycle} \hfill\\
            \textbf{Instances $\I$:} \quad Graph $G = (V, E)$.\\
            \textbf{Universe $\U$:} \quad Edge set $E$.\\
            \textbf{Solutions $\sol$:} \quad The set of all $S \subseteq E$ such that $S$ is a cycle of length $|V|$.
        \end{description}
        \hfill\\
        SE Reduction: \quad From \textsc{Directed Hamiltonian Cycle}.\\
        Reference.\quad  Karp \cite{DBLP:conf/coco/Karp72}.
    \end{mdframed}
\end{samepage}

\begin{samepage}
    \begin{mdframed}
    	\begin{description}
            \item[]\textsc{Traveling Salesman Problem} \hfill\\
            \textbf{Instances $\I$:} \quad Complete Graph $G = (V, E)$, function $w: E \rightarrow \Z$, number $t \in \N$. \\
            \textbf{Universe $\U$:} \quad Edge set $E$.\\
            \textbf{Solutions $\sol$:} \quad The set of all $S \subseteq E$ such that $S$ is a Hamiltonian cycle and $w(S) \le t$.
        \end{description}
        \hfill\\
        SE Reduction:  \quad \textsc{Undirected Hamiltonian Cycle}\\
        Reference.\quad Folklore.
    \end{mdframed}
\end{samepage}
Let $I = (V, E)$ be the \textsc{Undirected Hamiltonian Cycle} instance and $(V', E', w', k')$ the \textsc{Traveling Salesman Problem} instance.
Every vertex $v \in V$ is mapped to itself $v \in V'$.
Furthermore, we map each edge $e \in E$ to itself in $E'$ and add additional edges to form a complete graph.
The weight function $w': E' \rightarrow \Z$ is defined for all $e' \in E'$ as
$$
    w(e') = \begin{cases}
        0, \quad \text{if} \ e' \in E\\
        1, \quad \text{if} \ e' \notin E
    \end{cases}
$$
At last, we set $k' = 0$ resulting that only the edges from $E$ are usable.
Thus, we preserve the one-to-one correspondence between the edges with $f_I(e) = e$.

\begin{samepage}
    \begin{mdframed}
    	\begin{description}
            \item[]\textsc{Directed Two Vertex Disjoint Paths}  \hfill\\
            \textbf{Instances $\I$:} \quad Directed graph $G = (V, A)$, $s_i, t_i \in V$ for $i \in \{1, 2\}$.\\
            \textbf{Universe $\U$:} \quad Arc set A.\\
            \textbf{Solutions $\sol$:} \quad The set of all $S \subseteq A$ such that $S$ is the union of two disjoint paths $P_1$ and $P_2$, where $P_1$ is a path from from $s_1$ to $t_1$, and $P_2$ is a path from $s_2$ to $t_2$.
        \end{description}
        \hfill\\
        SE Reduction: \quad From \textsc{3-Satisfiability (Literal-based)}.\\
        Reference.\quad  Fortune, Hopcroft and Wyllie \cite{DBLP:journals/tcs/FortuneHW80}.  
    \end{mdframed}
\end{samepage}

\begin{samepage}
    \begin{mdframed}
    	\begin{description}
        \item[]\textsc{Directed} $k$-\textsc{Vertex Disjoint Path}\hfill\\
        \textbf{Instances $\I$:} Directed graph $G = (V, A)$, $s_i, t_i \in V$ for $i \in \fromto{1}{k}$.\\
        \textbf{Universe $\U$:} Arc set $A =: \U$.\\
        \textbf{Solutions $\sol$:} The sets of all sets $S \subseteq A$ such that $S = \bigcup^k_{i = 1} A(P_i)$, where all $P_i$ are pairwise vertex-disjoint paths from $s_i$ to $t_i$ for $1 \leq i \leq k$.
    	\end{description}
    \end{mdframed}
\end{samepage}
We introduce $k-2$ additional vertex pairs $s_i, t_i$ for $i \in \fromto{3}{k}$, which we connect by adding arcs $(s_i, t_i)$ for all $i \in \fromto{3}{k}$.

\begin{samepage}
    \begin{mdframed}
    	\begin{description}
            \item[]\textsc{Steiner Tree}\hfill\\
            \textbf{Instances $\I$:} Graph $G = (S \cup T, E)$, set of Steiner vertices $S$, set of terminal vertices $T$, edge weights $c: E \rightarrow \N$, number $k \in \N$.\\
            \textbf{Universe $\U$:} Edge set $E$.\\
            \textbf{Solutions $\sol$:} The set of all $S \subseteq E$ such that $S$ is a tree connecting all terminal vertices from $T$ and $\sum_{e' \in E'} c(e') \leq k$.
    	\end{description}
    \end{mdframed}
\end{samepage}

There is a folklore reduction from \textsc{3Sat} to \textsc{Steiner Tree}, which is an SE reduction.
First, there are designated terminal vertices $s$ and $t$.
For every literal $\ell \in L$, there is a Steiner vertex $\ell$.
Additionally for every literal pair $(\ell_i, \overline \ell_i)$, $1 \leq i \leq |L|/2-1$, we add a Steiner vertex $v_i$. We define $v_0 := s$.
Then all of the above vertices are connected into a \enquote{diamond chain}, where we begin with $s$ connected to both $\ell_1$ and $\overline \ell_1$.
Both Vertices $\ell_1$ and $\overline \ell_1$ are connected to $v_1$.
This vertex $v_1$ is then connected to vertices $\ell_2$ and $\overline \ell_2$ and so on.
At last, $\ell_{|L|}$ and $\overline \ell_{|L|}$ are connected to $t$.
Furthermore, for every clause $c_j \in C$, we add a corresponding terminal vertex $c_j$.
The vertex $c_j$ is then connected its corresponding literals $\ell \in c_j$ via a path of Steiner vertices of length $|L| + 1$.
The costs of every edge is set to $1$ and the threshold is set to $k = |L| + |C| \cdot (|L| + 1)$.
We define the solution-embedding function $f_I$ for all $\ell \in L$ with $f_I(\ell) = \set{v_{i-1}, \ell}$.

\newpage

\bibliography{bib_general,bib_interdiction,bib_regret,bib_two-stage,bib_recoverable_robust,bib_reductions,bib_multistage_adjustable}

\end{document}